\renewcommand{\paragraph}[1]{\medskip\noindent\textbf{\emph{#1.}}}
\definecolor{codegreen}{rgb}{0,0.6,0}
\definecolor{codegray}{rgb}{0.5,0.5,0.5}
\definecolor{codepurple}{rgb}{0.58,0,0.82}
\definecolor{backcolour}{rgb}{0.95,0.95,0.92}
\definecolor{gruen}{rgb}{0.0, 0.5, 0.0}
\definecolor{rot}{rgb}{1.0, 0.13, 0.32}
\newcommand{\success}{\textcolor{gruen}{\textbf{Success}}}
\newcommand{\fail}{\textcolor{rot}{\textbf{Failed}}}
\newcommand{\unsat}{\textcolor{rot}{\textbf{Unsat}}}
\newcommand{\tl}{\textcolor{rot}{\textbf{Timed out}}}
\newcommand{\wa}{\textcolor{rot}{\textbf{Wrong}}}
\begin{document}

\title[Synthesis of Polynomial Programs]{Template-based Program Synthesis using Stellensätze}         

\author{Amir Kafshdar Goharshady}
\authornote{Authors are ordered alphabetically.}          
\orcid{0000-0003-1702-6584}             
\affiliation{
  \department{Department of Computer Science and Engineering}
  \department{Department of Mathematics}
  \institution{Hong Kong University of Science and Technology}            
  \city{Clear Water Bay, Kowloon}
  \country{Hong Kong}                    
}
\email{goharshady@cse.ust.hk}          

\author{S. Hitarth}
\orcid{0000-0001-7419-3560}             
\affiliation{
	\department{Department of Computer Science and Engineering}
	\institution{Hong Kong University of Science and Technology}            
	\city{Clear Water Bay, Kowloon}
	\country{Hong Kong}                    
}
\email{hsinghab@connect.ust.hk}          

\author{Fatemeh Mohammadi}
\orcid{0000-0001-5187-0995}             
\affiliation{
	\department{Department of Mathematics and Department of Computer Science, KU Leuven, Belgium}
	\institution{KU Leuven}            
	\city{Leuven}
	\country{Belgium}                    
}
\email{fatemeh.mohammadi@kuleuven.be}          

\author{Harshit Jitendra Motwani}
\orcid{0000-0002-2142-4254}             
\affiliation{
	\department{Department of Mathematics: Algebra and Geometry}
	\institution{Ghent University}            
	\city{Ghent}
	\country{Belgium}                    
}
\email{harshitjitendra.motwani@ugent.be}          

\begin{abstract}

Template-based synthesis, also known as sketching, is a localized approach to program synthesis in which the programmer provides not only a specification, but also a high-level ``sketch'' of the program. The sketch is basically a partial program that models the general intuition of the programmer, while leaving the low-level details as unimplemented ``holes''. The role of the synthesis engine is then to fill in these holes such that the completed program satisfies the desired specification. In this work, we focus on template-based synthesis of polynomial imperative programs with real variables, i.e.~imperative programs in which all expressions appearing in assignments, conditions and guards are polynomials over program variables. While this problem can be solved in a sound and complete manner by a reduction to the first-order theory of the reals, the resulting formulas will contain a quantifier alternation and are extremely hard for modern SMT solvers, even when considering toy programs with a handful of lines. Moreover, the classical algorithms for quantifier elimination are notoriously unscalable and not at all applicable to this use-case.

In contrast, our main contribution is an algorithm, based on several well-known theorems in polyhedral and real algebraic geometry, namely Putinar's Positivstellensatz, the Real Nullstellensatz, Handelman's Theorem and Farkas' Lemma, which sidesteps the quantifier elimination difficulty and reduces the problem directly to Quadratic Programming (QP). Alternatively, one can view our algorithm as an efficient way of eliminating quantifiers in the particular formulas that appear in the synthesis problem. The resulting QP instances can then be handled quite easily by SMT solvers. Notably, our reduction to QP is sound and semi-complete, i.e.~it is complete if polynomials of a sufficiently high degree are used in the templates. Thus, we provide the first method for sketching-based synthesis of polynomial programs that does not sacrifice completeness, while being scalable enough to handle meaningful programs. Finally, we provide experimental results over a variety of examples from the literature.

\end{abstract}

\keywords{program synthesis, sketching, syntax-guided synthesis}

%


\maketitle

\section{Introduction} \label{sec:intro}

\paragraph{Program Synthesis} Program synthesis is one of the most central research topics not only in the programming languages and verification communities, but in computer science as a whole~\cite{pnueli1989synthesis}. It is often called the holy grail of computing~\cite{gulwani2017program}. Generally speaking, the goal in program synthesis is to automate the process of programming, i.e.~given a specification, which is often a logical formula, automatically find a program (in a predefined language) that satisfies the specification. 

\paragraph{Short History} Various formulations of this problem have been studied since the 1950s~(e.g.~see \cite{church1963application} which was originally published in 1957). Similar problems were also considered in the context of constructive mathematics as far back as the 1930s~\cite{troelstra1977aspects,kolmogoroff1932deutung}. Today, there is a rich body of literature that uses widely different techniques for synthesis. Some notable approaches to program synthesis include deductive synthesis using theorem provers~\cite{green1981application,manna1971toward}, counterexample-guided synthesis~\cite{preiner2017counterexample,reynolds2015counterexample,abate2018counterexample}, evolutionary algorithms~\cite{koza1994genetic,krawiec2016behavioral,sobania2022comprehensive},  sketching/syntax-guided~\cite{alur2013syntax,solar2008program,solar2009sketching,si2018syntax,fedyukovich2019quantified}, template-based/search-based~\cite{alur2018search,srivastava2013template}, proof-theoretic methods~\cite{srivastava2010program}, constraint-solving~\cite{polikarpova2016program,gulwani2011synthesis}, type-guided~\cite{guo2019program,polikarpova2016program}, resource-guided~\cite{knoth2019resource}, and synthesis based on examples (input-output pairs)~\cite{gulwani2012spreadsheet,shaw1975inferring,smith1975pygmalion,gulwani2011automating}. This is by no means a comprehensive list and there is no way we can do justice to all the work in this area. Moreover, most approaches, including ours, utilize more than one of the paradigms named above. Therefore, we refer to~\cite{gulwani2017program} for a more detailed discussion.

\paragraph{Search-based/Syntax-guided Synthesis} While the initial classical formulations of program synthesis focused on automated generation of the whole program, given only a specification as a logical/semantic-based formula, it was later realized that scaling this approach is quite challenging and much efficiency can be gained by allowing the programmer to also provide syntactic templates that limit the search space of possible implementations~\cite{alur2018search}. Intuitively speaking, this would let the programmer focus on the high-level ideas behind the program, while relying on the synthesis engine to fill in the low-level details.

Various formalizations of similar ideas have appeared in the literature. For example, in~\cite{alur2013syntax}, the programmer has to specify a background theory, a correctness specification, which is semantic in nature, and a grammar which is used to syntactically limit the search space of possible implementations. In \emph{sketching}~\cite{solar2008program}, the programmer provides the high-level structure or ``sketch'' of the implementation, while leaving low-level ``holes'' for the synthesis engine to fill. In template-based approaches, such as~\cite{srivastava2013template}, the programmer additionally provides a template for each hole, e.g.~specifying that the hole should be filled by a linear/affine expression or that it should likely utilize a specific set of variables. These approaches have all led to highly successful synthesis engines. In particular, the SyGuS competition~\cite{syguscomp} is a great showcase of the remarkable year-on-year progress in program synthesis.  

\paragraph{Completeness} Given that most variants of program synthesis are undecidable, often due to Rice's theorem~\cite{rice1953classes}, it is no surprise that the synthesis algorithms cannot be sound, complete and terminating at the same time. The undecidability also applies to almost all cases of syntax-guided synthesis, since they are more expressive than the verification of specific non-trivial semantic properties. As such, most synthesis engines, including the ones mentioned above, choose to sacrifice completeness and focus on efficient sound algorithms that can handle a wide variety of real-world use-cases. Completeness can only be achieved in limited circumstances, such as template-based synthesis of linear/affine programs with linear/affine invariants~\cite{srivastava2013template}. So, a natural question is to explore and characterize families of programs for which we can find efficient, sound and complete synthesis procedures. Of course, the difficulty lies in obtaining all three desired properties at the same time and not compromising on any of them.

\paragraph{Our Setting} In this work, we consider template-based synthesis over polynomial programs, i.e.~imperative programs with bounded real variables in which all expressions appearing in the assignments, loop guards, conditionals, and invariants (pre and post-conditions) are polynomials over program variables. See Section~\ref{sec:problem-def} for a more formal statement. Following the sketching/template-based approach, we assume that the programmer provides a sketch of the program, together with one or more holes. The holes can appear both in the program itself or in the invariants and will be filled by polynomial expressions. The programmer can also provide a template for how each hole should be filled by specifying the program variables that can appear in the hole and the expected degree of the polynomial expression used for filling it. Note that this is without any loss of generality, since, in the absence of such information, our algorithm can simply iterate over all possible templates with a bounded degree.

\begin{figure}
	\begin{center}
		\begin{minipage}{0.6\linewidth}
		\begin{lstlisting}[language=C,mathescape=true]
@real: $i$, $s$, $n$; 
@pre: $n \geq 1$;
$i = 0$;
$s = 0$;
@invariant: $\codetemplate{[\{i, s\},2]} \geq 0~\wedge~ \codetemplate{[\{i, s\},2]} \geq 0 ~\wedge~ \codetemplate{[\{i, n\}, 1]} \geq 0$
while($~\codetemplate{[\{i, n\}, 1]}~$)
{
	$i$ = $i+1$;
	$s$ = $\codetemplate{[\{i, s\},1]}$;
}   
@post: $(n-1) \cdot n / 2 \leq s \leq n \cdot (n+1) / 2$;
		\end{lstlisting}
		\end{minipage}
	\end{center}
	\vspace{-1.5em}
	\caption{An example polynomial program with holes.}
	\label{fig:progintro}
\end{figure}

\begin{example}

	As a simple example, consider the program in Figure~\ref{fig:progintro}, in which we have three bounded real variables $i, s,$ and $n$. The programmer has given us a sketch of the program, which includes the desired specification as a precondition and a postcondition. However, it also includes holes that should be filled by the synthesis engine. These holes are shown by dashed boxes. In this case, the programmer asks us to synthesize an invariant and a guard for the while loop, as well as an expression for the right-hand side of the assignment inside the while loop. Moreover, she has provided us with a template for each hole, specifying the variables that she expects would need to appear in that hole and the maximum degree of the polynomial that should be synthesized for filling the hole. For example, the while guard must be an affine (degree 1) polynomial over the variables $i$ and $n$. As we will see in the sequel, given this partial program as input, our approach is able to synthesize
	the completed program in Figure~\ref{fig:progintrocomplete}, which basically sums up all the integers from $1$ to $\lfloor n \rfloor.$

	\begin{figure}[h]
		\begin{center}
		\begin{minipage}{0.52\linewidth}
		\begin{lstlisting}[language=C,mathescape=true]
@real: $i$, $s$, $n$; 
@pre: $n \geq 1$;
$i = 0$;
$s = 0$;
@invariant: $s \geq i \cdot (i+1)/2 ~\wedge~ s \leq i \cdot (i+1)/2 ~\wedge~ i \leq n$
while($i \leq n - 1$)
{
	$i$ = $i+1$;
	$s$ = $s+i$;
}   
@post: $(n-1) \cdot n / 2 \leq s \leq n \cdot (n+1) / 2$;
		\end{lstlisting}
	\end{minipage}
\end{center}
		\vspace{-1.5em}
		\caption{The synthesized (completed) version of the partial program in Figure~\ref{fig:progintro}.}
		\label{fig:progintrocomplete}
	\end{figure}
	
\end{example}

\paragraph{Decidability and Hardness} Note that the synthesis problem we consider here can be reduced, in polynomial time, to the decision problem of a formula in the first-order theory of the reals and is hence decidable. See Section~\ref{sec:complexity} for more details. However, the resulting formula is quite long and has a quantifier alternation. Thus, it is beyond the reach of modern SMT solvers and quantifier elimination methods, even for toy programs with three lines of code. In the same section, we also show that the problem is (strongly) NP-hard, so no strictly PTIME algorithms can be expected unless P=NP.  The NP-hardness holds even in the special case of linear/affine programs.

\paragraph{Our Contributions} In this work, we consider the problem of template-based synthesis over polynomial imperative programs with bounded real variables. Our contributions are as follows:
\begin{compactitem}
	\item \emph{Complexity:} We prove that the problem is decidable. This is achieved by a reduction to the first-order theory of the reals. We also provide a reduction showing that our problem is (strongly) NP-hard. Note that, as mentioned above, decision procedures for the first-order theory of the reals are notoriously slow and cannot handle even toy programs in practice.
	\item \emph{Practical Algorithm:} We use classical theorems from polyhedral and real algebraic geometry, including Farkas' Lemma, Handelman's Theorem, Putinar's Positivstellensatz\footnote{Positivstellensatz is German for ``positive locus theorem''. Its plural form is Positivstellens\"atze.}, and the Real Nullstellensatz\footnote{Nullstellensatz is German for ``zero locus theorem''.} to obtain a polynomial-time reduction to Quadratic Programming (QP). Since modern solvers, both SMT solvers and numerical ones, are quite efficient in handling real-world QP instances, this leads to a much more scalable approach to program synthesis in comparison to decision procedures for the first-order theory of the reals.
	
	\item \emph{Completeness:} Crucially, we prove that our approach is not only sound but also \emph{semi-complete}, i.e.~it is complete as long as polynomials of sufficiently high degree are used in the templates. So, our approach achieves both soundness and completeness, while not suffering from the unscalability of quantifier elimination and decision procedures for the first-order theory of the reals.
	
	\item \emph{Experimental Results:} We provide extensive experimental results, illustrating the efficiency of our approach and that it can successfully handle a variety of programs. 
\end{compactitem}

\paragraph{Novelty} The main novelty of our approach is that it can handle a wide family of programs, i.e.~polynomial imperative programs, without sacrificing completeness. It is quite rare in program synthesis to achieve soundness, semi-completeness and practical efficiency at the same time. To the best of our knowledge, our method of relying on theorems in real algebraic geometry is also novel in program synthesis. Moreover, we combine these theorems in non-trivial ways and obtain new mathematical results, which are of independent interest. 

\paragraph{Limitations} The main limitation of our approach is that it can only handle polynomial programs. This is to be expected, since we are using techniques from real algebraic geometry, which are highly dependent on having polynomial expressions/inequalities over real variables. Another limitation is that, although our approach is quite efficient in practice, its runtime is not polynomial in the size of the program or the number of variables. This is a natural consequence of our NP-hardness result and cannot be overcome unless P=NP.   

\paragraph{Related Works} Several previous works use ideas that are similar in spirit to our approach:
\begin{compactitem}
	\item \emph{Template-based Synthesis:} The work~\cite{srivastava2013template} provides algorithms based on Farkas' Lemma for template-based synthesis of linear/affine programs. Since our work can handle polynomial programs of arbitrary degree, \cite{srivastava2013template}'s results on linear/affine programs are subsumed by our approach. Moreover, while the use of Farkas' Lemma is shared among the two approaches, we also use theorems from real algebraic geometry, e.g. Handelman and Putinar, as well as non-trivial combinations of them, which are not considered in~\cite{srivastava2013template}. So, we have a different methodology and handle a wider family of programs. To the best of our knowledge, our algebro-geometric approach is novel in program synthesis.
	
	\item  \emph{Invariant Generation:} The work~\cite{pldi2020} provides template-based algorithms for automated generation of polynomial invariants over polynomial programs with real variables. Note that this is also a special case of our setting, in which the holes are limited to appear only in the invariants. Since we allow holes in both the program itself and the invariants, our setting is strictly more general. Moreover, while~\cite{pldi2020} also uses Positivstellens\"atze, it does not consider Nullstellens\"atze, and thus has a different mathematical basis. The work~\cite{sankaranarayanan2004non} considers the same problem as~\cite{pldi2020} and solves it using Gr\"obner basis computations. In comparison with both of these approaches, we handle a more general setting and have a different methodology.  There are also template-based approaches for generation of \emph{linear} loop invariants, such as~\cite{colon2003linear,sankaranarayanan2004constraint,liu2022location}. This is an even narrower special case of our setting. Finally,~\cite{feng2017finding} considers invariant generation for probabilistic programs and uses a different Positivstellensatz, i.e.~that of Stengle, to obtain an automated algorithm. This setting is incomparable to ours due to the presence of probabilistic behavior but absence of holes in the program itself.
	
	\item \emph{Termination analysis:} Template-based algorithms that utilize Farkas' Lemma or Positivstellens\"atze have also been considered in the context of termination and reachability analysis~\cite{neumann2020ranking,agrawal2017lexicographic,chatterjee2016termination,chatterjee2020termination12,asadi2021polynomial,takisaka2018ranking}. However, this is a very different and orthogonal problem and, while there are clear similarities in the approaches, it is not possible to directly or experimentally compare them. Conceptually, the works in termination analysis assume that strong-enough invariants are given as part of the input\footnote{Such invariants can be generated by our approach or~\cite{pldi2020,sankaranarayanan2004non}.} and then reduce the problem of finding an affine/polynomial ranking function, or its probabilistic counterparts, to Linear Programming (LP). This is in contrast to our setting where the reduction is to Quadratic Programming (QP) and a reduction to LP is impossible due to the strong NP-hardness. 
\end{compactitem}

\paragraph{Organization} In Section~\ref{sec:problem-def}, we formalize our problem and define the syntax and semantics of our programs. Section~\ref{sec:algo-new} is the core of the paper in which we provide our synthesis algorithm and prove that it is sound and semi-complete. We then study the complexity of the problem in Section~\ref{sec:complexity}, proving that it is decidable and strongly NP-hard. Finally, Section~\ref{secion_proof_of_concept} reports on an implementation of our algorithm and provides experimental results.

\newcommand{\valuation}{val}
\newcommand{\val}{\valuation}
\newcommand{\vars}{\mathbb{V}}
\newcommand{\tvars}{{\mathbb{T}}}
\newcommand{\tval}{\valuation_\tvars}
\newcommand{\locs}{\mathbb{L}}
\newcommand{\loc}{\ell}
\newcommand{\transitions}{\mathcal{T}}
\newcommand{\transition}{\tau}
\newcommand{\invariant}{\mathbb{I}}
\newcommand{\cutset}{\mathcal{C}}
\renewcommand{\state}{\sigma}

\section{Template-based Synthesis of Polynomial Programs} \label{sec:problem-def}

In this work, we focus on imperative polynomial programs. Our programs can have real variables, conditionals and loops. Moreover, they include specifications as pre and post-conditions, as well as invariants for each loop, which are used to prove that the program satisfies the specification.

\paragraph{Variables and Valuations} We assume that our programs have a finite set $\vars = \{v_1, \ldots, v_k\}$ of real-valued variables. A \emph{valuation} is a function $\valuation: \vars \rightarrow \mathbb R$ that assigns a real value to each program variable.

\paragraph{Syntax} We consider programs that can be generated from the following grammar:
\begin{align*}
	P &:= (\Phi, C,\Phi) \\
	C &:= \textbf{skip} \;\mid\; v \leftarrow \Pi \mid \texttt{ if } (\Phi)\; \{C\} \texttt{ else }\{ C\}  \mid\texttt{while } (\Phi, \Phi) \; \{C\} \mid  C; C\\
	\Phi &:= \Pi \ge 0 \mid (\Phi \land \Phi) \mid \lnot \Phi\\
		v & \in \vars \\
	\Pi & \in \mathbb{R}[\vars]
\end{align*}

Intuitively, a program $P$ consists of a pre-condition, a sequence $C$ of commands, and a post-condition. We have a special command \textbf{skip} that does not do anything. We also allow assignments, conditionals and while loops. The header of each loop should include both a guard and a loop invariant. Moreover, the right-hand-side of every assignment is a polynomial expression $\Pi$ over program variables, and the guards, invariants and pre and post-conditions are boolean combinations $\Phi$ of polynomial inequalities over $\vars.$ For simplicity, we often write the programs as in Figure~\ref{fig:progintrocomplete} to make them more human-readable. We also use standard syntactic sugars, e.g.~to allow other boolean operators.

\paragraph{Polynomial Assertions} A polynomial \emph{assertion} over a set $V$ of variables is a boolean combination of polynomial inequalities of the form $p \geq 0$ in which $p \in \mathbb{R}[V].$ In the sequel, without loss of generality, our algorithms assume that all polynomial assertions are in disjunctive normal form.

\paragraph{Polynomial Transition Systems} We define the semantics of our programs using transition systems. A \textit{polynomial transition system (PTS)} is a tuple $ (\vars,\locs,\loc_0,\theta_0,\loc_f, \theta_f, \transitions, \invariant)$ that consists of:
\begin{compactitem}
	\item A finite set $\vars$ of \emph{program variables},
	\item A finite set $\locs$ of \emph{locations},
	\item An \emph{initial location} $\loc_0 \in \locs$,
	\item An initial polynomial assertion $\theta_0$,
	\item A final location $l_f \in \locs$,
	\item A final polynomial assertion $\theta_f,$
	\item A finite set $\transitions$ of transitions. Each transition $\transition \in \transitions$ is a tuple $(\loc,\loc',\rho_{\transition})$, where $\loc,\loc'\in \locs$ are the pre and post locations, and $\rho_{\transition}$, called the \emph{transition relation}, is a polynomial assertion over $\vars \cup \vars'$, where $\vars$ represents variables in the pre location and its primed version $\vars'$ represents the variables in the post location, i.e.~after taking the transition. Specifically, we have $\vars' = \{v'~\mid~ v \in \vars \};$
	\item An invariant $\invariant$ that maps some of the locations $\loc \in \locs$ to a polynomial assertion $\invariant(\loc).$ Moreover, we have $\invariant(\loc_0) = \theta_0$ and $\invariant(\loc_f) = \theta_f.$ The set of locations for which an invariant is provided must be a cutset (see further below for a formal definition).
\end{compactitem}

The translation from programs to transition systems is intuitive and straightforward. We can have one location $\ell \in \locs$ for each line of the program, and let $\ell_0$ correspond to the initial line and $\ell_f$ to the end of the program. Similarly, $\theta_0$ serves as the pre-condition and $\theta_f$ as the post-condition. Programs are more human-friendly and hence a nicer way to specify the input to the synthesis problem. In contrast, it is easier to define the algorithms using the more formal notion of PTS. As such, we will define our semantics based on a PTS instead of a program. Below, we fix a PTS $(\vars,\locs,\loc_0,\theta_0,\loc_f, \theta_f, \transitions, \invariant).$

\begin{figure}
	\includegraphics[keepaspectratio, width=\linewidth]{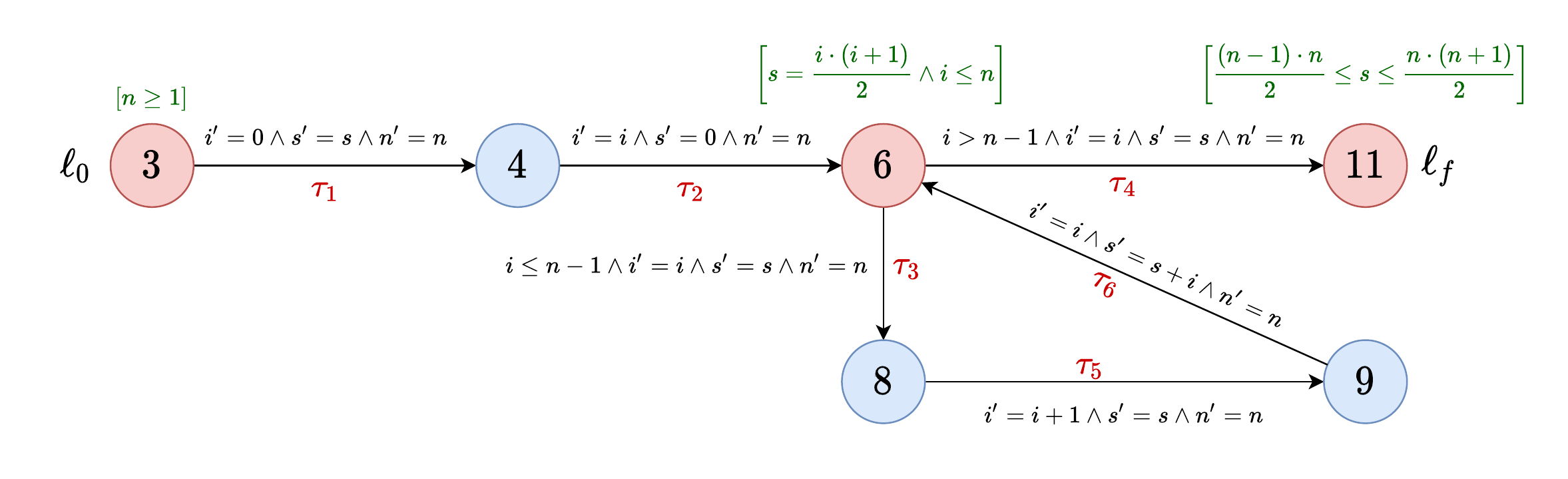}
	\caption{The Polynomial Transition System (PTS) Corresponding to the Program in Figure~\ref{fig:progintrocomplete}.}
	\label{fig:pts}
\end{figure}

\begin{example}
	Figure~\ref{fig:pts} shows the PTS modeling the program of Figure~\ref{fig:progintrocomplete}. We have $\locs = \{3, 4, 6, 8, 9, 11\}$ and each location corresponds to a line of code. The invariants are bracketed and shown in green. Note that our algorithm works with polynomial inequalities, so $a = b$ is syntactic sugar for $a - b \geq 0 \wedge b - a \geq 0.$
\end{example}

\paragraph{Control Flow Graphs (CFGs)} The \emph{control flow graph (CFG)} of our PTS is a directed graph $G = (\locs, E)$ in which the locations serve as vertices and there is a directed edge from $\loc$ to $\loc'$ iff there is a transition $\transition \in \transitions$ of the form $\transition = (\loc, \loc', \rho_\transition).$

\paragraph{Cutsets, Cutpoints and Basic Paths} A subset $\cutset \subseteq \locs$ of vertices of the CFG $G$ is called a \emph{cutset} if
\begin{compactenum}
	\item $\loc_0, \loc_f \in \cutset$; and
	\item  Every cycle in $G,$ be it a simple cycle or not, intersects $\cutset.$
\end{compactenum} 
Assuming we have fixed a cutset $\cutset$, every location $c \in \cutset$ is called a \emph{cutpoint}. A \emph{basic path} is a finite sequence of transitions $\pi = \langle \transition_i \rangle_{i=1}^m = \langle (\loc_i, \loc'_i, \rho_{\transition_i})_{i=1}^m\rangle$ such that:
\begin{compactitem}
	\item $\loc_i = \loc'_{i-1}$ for all $i>1$.
	\item $\loc_1$ and $\loc'_m$ are cutpoints, but no internal location $\loc_i$ with $1 < i \leq m$ is a cutpoint.
\end{compactitem} 
We define the transition relation $\rho_\pi$ of the basic path $\pi$ as $\rho_{\tau_m} \circ \rho_{\tau_{m-1}} \circ \cdots \circ \rho_{\tau_1}.$ Informally, $\rho_\pi$ is the transition relation resulting from going through all the transitions of $\pi$ one-by-one and is hence equal to the sequential composition of $\rho_{\tau_i}$'s. If $\pi$ is the empty path, then we let $\rho_{\pi}$ be the identity relation $\bigwedge_{v \in \vars} v = v'.$

\begin{example} \label{ex:cutset}
	In the PTS of Figure~\ref{fig:pts}, the cutpoints shown in red form a cutset $\cutset = \{3, 6, 11\}.$ In this example, our non-empty basic paths are $\langle \transition_1, \transition_2 \rangle, \langle \transition_4 \rangle,$ and $\langle \transition_3, \transition_5, \transition_6 \rangle.$
\end{example}

\paragraph{Remark on Invariants} It is straightforward to see that it suffices to specify invariants in a cutset only and that the invariants can then be inductively expanded to every other line~\cite{colon2003linear}. In the sequel, we assume that a cutset $\cutset$ is fixed and that the invariant $\invariant$ assigns a polynomial assertion $\invariant(\loc)$ to every cutpoint $\loc \in \cutset.$ This also fits our simple syntax above in which the invariant is only provided for the while loops and the start and end of the program (in the form of pre and post conditions). Nevertheless, our algorithmic approaches consider the PTS and can hence work with any arbitrary cutset $\cutset.$

\paragraph{States} A \emph{state} of the PTS is a pair $\state = (\loc, \valuation)$ in which $\loc \in \locs$ is a location and $\valuation: \vars \rightarrow \mathbb R$ is a valuation. Intuitively, a state specifies which line of the program we are in and what values are taken by our program variables.

\paragraph{Runs} A \emph{run} of our PTS is a sequence $R = \langle \state_i \rangle_{i=0}^m = \langle (\loc_i, \valuation_i) \rangle_{i=0}^m$ of states such that:
\begin{compactitem}
	\item $\val_0 \models \theta_0,$ i.e.~the run starts with the initial location $\loc_0$ and an initial valuation that satisfies the precondition $\theta_0;$
	\item For every $i < m,$ there exists a transition $\transition_i \in \transitions$ that allows us to go from $\sigma_i$ to $\sigma_{i+1}.$ More formally:
	\begin{compactitem}
		\item $\transition_i = (\loc_i, \loc_{i+1}, \rho_{\transition_i});$ and
		\item $(\val_i, \val_{i+1}') \models \rho_{\transition_i},$ i.e.~if we consider $\val_i$ as a valuation on $\vars$ and $\val_{i+1}$ as a valuation on $\vars',$ then they jointly satisfy the transition condition $\rho_{\transition_i}.$
	\end{compactitem}
\end{compactitem} 

\paragraph{Valid Runs} A run $R = \langle \state_i \rangle_{i=0}^m = \langle (\loc_i, \valuation_i) \rangle_{i=0}^m$ is called \emph{valid} iff for every $i$ for which $\loc_i$ has an associated invariant $\invariant(\loc_i),$ we have $\valuation_i \models \invariant(\loc_i).$ Intuitively, a run is valid if it always satisfies the invariant. Specifically, if $\loc_m = \loc_f,$ then we must have $\val_m \models \theta_f$ in order for the run to be valid. 

\paragraph{Valid Transition Systems} A transition system (PTS) is called \emph{valid} iff every run of the PTS is valid.

\paragraph{Inductively Valid Transition System} A PTS is called \emph{inductively valid} with respect to a cutset $\cutset$, iff it satisfies the following conditions for all cutpoints $\loc, \loc' \in \cutset$ and valuations $(\val, \val')$ over $\vars \cup \vars'$:
\begin{compactitem}
	\item \emph{Initiation:} For every basic path $\pi$ from $\loc_0$ to $\loc,$ we have $$\textstyle \val \models \theta_0 \wedge (\val, \val') \models \rho_\pi \Rightarrow \val' \models \invariant(\loc)'.$$ Here, $\invariant(\loc)'$ is the same as $\invariant(\loc),$ except that every program variable $v \in \vars$ is replaced with its primed version $v'.$
	\item \emph{Consecution:} For every basic path $\pi$ from $\loc$ to $\loc',$ we have $$\textstyle \val \models \invariant(\loc) \wedge (\val, \val') \models \rho_\pi \Rightarrow \val' \models \invariant(\loc')'.$$
	\item \emph{Finalization:} For every basic path $\pi$ from $\loc$ to $\loc_f,$ we have $$\textstyle \val \models \invariant(\loc) \wedge (\val, \val') \models \rho_\pi \Rightarrow \val' \models \theta'_f.$$
\end{compactitem}
The intuition behind inductive validity is similar to that of inductive invariants~\cite{colon2003linear}. Informally, a PTS is inductively valid if we can prove that it is valid by breaking each run into a sequence of basic paths and then show its validity by an inductive argument. It is straightforward to see that every inductively valid PTS is indeed valid~\cite{colon2003linear,pldi2020}. Moreover, the finalization and initiation conditions above can be considered as special cases of consecution. We are including them separately to emphasize the point that validity requires the PTS to satisfy its specification (as specified by pre and post-conditions).

\paragraph{Non-determinism} Note that, in order to simplify the presentation, we did not include non-determinism in the syntax of our programs. However, our transition systems have built-in support for non-determinism since one can have several outgoing transitions from the same node or, alternatively, a transition that does not uniquely specify the post-valuation $\valuation'$ in terms of the pre-valuation $\valuation.$ Therefore, all the results in the sequel are trivially extensible to the case of non-deterministic polynomial programs.

Our goal is to synthesize a valid PTS, given a sketch that includes holes in it. To do this, we first define variants of our programs and transition systems which can also include holes.

\paragraph{Sketches of Polynomial Programs}
In this work, we assume that the programmer provides a sketch of the implementation. The sketch is a partial program, i.e.~a program with holes, which is generated using the following grammar: 
\begin{align*}
	P &:= (\Phi, C,\Phi) \\
	C &:= \textbf{skip} \;\mid\; v \leftarrow \Pi \mid v \leftarrow  \textcolor{red}{H} \mid \texttt{ if } (\Phi)\; \{C\} \texttt{ else }\{ C\}  \mid\texttt{while } (\Phi, \Phi) \; \{C\} \mid  C; C\\
	\Phi &:= \Pi \ge 0 \mid \textcolor{red}{H } \ge 0 \mid (\Phi \land \Phi) \mid \lnot \Phi\\
	\textcolor{red}{H} & := (\overline{\vars}, d)\\
	v & \in \vars \\
	\Pi & \in \mathbb{R}[\vars]\\
	\overline{\vars} & \subseteq \vars\\
	d & \in \mathbb N	
\end{align*}
The main difference between this grammar and the previous one is the introduction of a new non-terminal $H$ which models a hole. Note that holes can appear in assignments or as part of the assertions $\Phi,$ be it in guards of if/while or the invariants or even pre and post-conditions. For each hole, the programmer specifies a set $\overline{\vars}$ of variables which she expects might need to appear in filling the hole, as well as a maximum degree $d \in \mathbb N,$ requiring that the hole should be filled by a polynomial of degree at most $d$. As an example, see the sketch in Figure~\ref{fig:progintro} in which the holes are shown in boxes. As usual, we do not follow the grammar exactly in order to make the sketch more human-readable. For example, we wrote the invariant outside the while loop, instead of having it as the second parameter in the while header. 

\paragraph{Template and Symbolic Polynomials and Assertions} Let $\vars$ be our finite set of program variables and $\tvars$ a separate finite set of \emph{template variables}. We define a \emph{symbolic polynomial} over $(\vars, \tvars)$ as a polynomial over $\vars$ whose every coefficient is itself a polynomial over $\tvars.$ More formally, every symbolic polynomial over $(\vars, \tvars)$ is of the form $p \in (\mathbb{R}[\tvars])[\vars].$ We say a symbolic polynomial $p$ is a \emph{template} if every coefficient of $p$, when considered as a polynomial over $\tvars,$ has degree $0$ or $1$. Symbolic (resp.~template) polynomial assertions are simply defined as boolean combinations of symbolic (resp.~template) polynomial inequalities. As mentioned before, our algorithms assume without loss of generality that all assertions are in disjunctive normal form.

\paragraph{Sketches of Polynomial Transition Systems} A \emph{sketch} of a polynomial transition system (SPTS) is tuple $(\vars, \tvars, \locs,\loc_0,\theta_0,\loc_f, \theta_f, \transitions, \invariant)$ in which all parts serve the same purposes and are defined similarly as in a PTS. The only differences are:
\begin{compactitem}
	\item The assertions $\theta_0, \theta_f$ and invariants $\invariant(\loc)$ are now template polynomials over $(\vars, \tvars).$ 
	\item The transition relations $\rho_\transition$ are now template polynomials over $(\vars \cup \vars', \tvars).$
\end{compactitem}
Moreover, we always assume $\tvars \cap \vars = \emptyset.$

\begin{figure}
	\includegraphics[keepaspectratio,width=\linewidth]{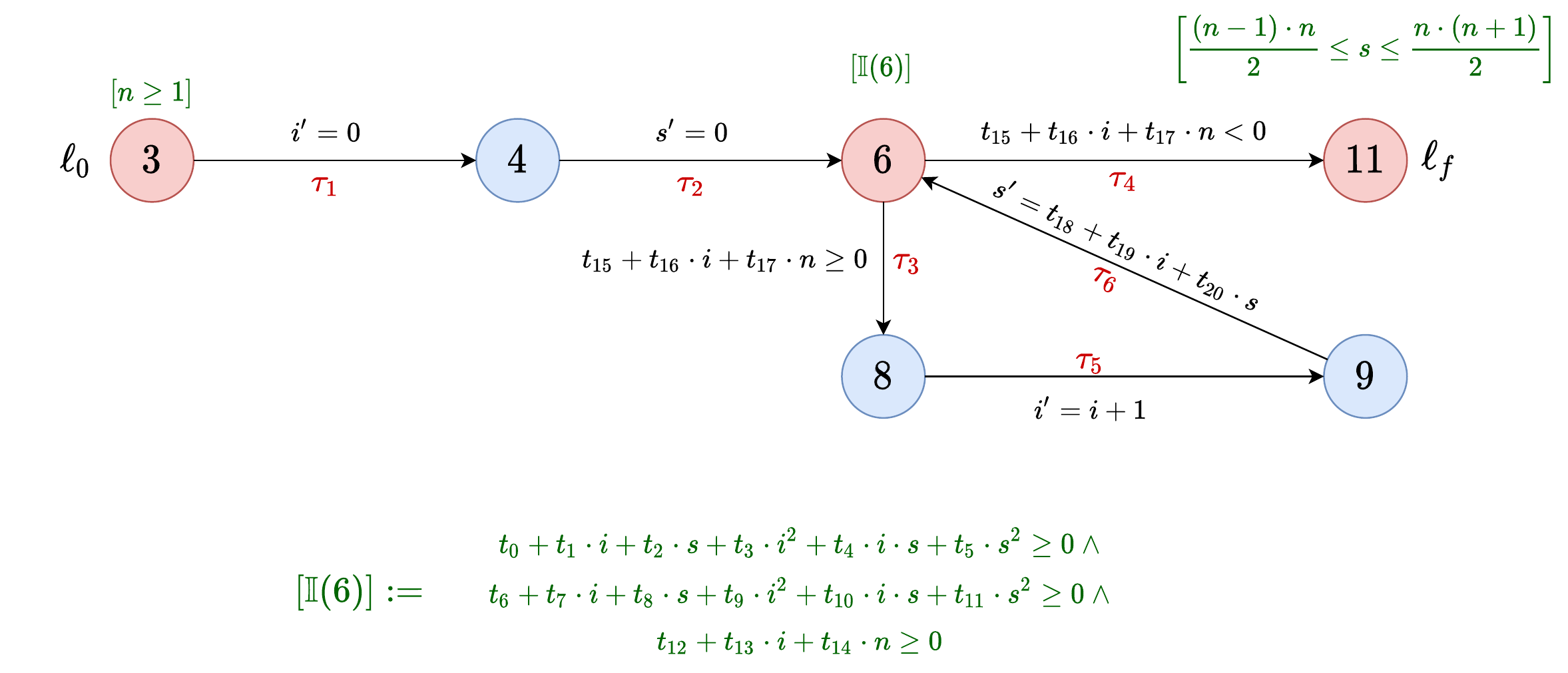}
	\caption{The SPTS Corresponding to the Sketch Program of Figure~\ref{fig:progintro}. For brevity, we have not included equalities of the form $v' = v$ that ensure a variable $v$ does not change in a transition. See Figure~\ref{fig:pts} for details of these equalities. The cutpoints are shown in red.}
	\label{fig:sketchts}
\end{figure}

\begin{example}
	Figure~\ref{fig:sketchts} shows a sketch polynomial transition system (SPTS) corresponding to the sketch program of Figure~\ref{fig:progintro}. To obtain this SPTS, we introduced 21 new template variables $\tvars = \{t_0, \ldots, t_{20}\}$ and replaced each hole with a template polynomial that corresponds to the programmer's desire for that hole. Note that the invariant of location $6$ and the transition relations for $\tau_3, \tau_4$ and $\tau_6$ are now symbolic, i.e.~they are polynomials over the program variables $\{i, s, n\}$ in which every coefficient is itself a polynomial over $\tvars.$ In this case, all coefficients are simply variables in $\tvars.$ Our goal is to find suitable concrete real values for each $t_i$ such that the SPTS of Figure~\ref{fig:sketchts} becomes a valid PTS, e.g.~that of Figure~\ref{fig:pts}.
\end{example}

We now have all the needed ingredients to formalize our \emph{Template-Based Synthesis problem over Polynomial programs (TBSP)}:

\paragraph{TBSP} Given a sketch of a polynomial transition system $\Gamma = (\vars, \tvars, \locs,\loc_0,\theta_0,\loc_f, \theta_f, \transitions, \invariant)$, find a valuation $\tval: \tvars \rightarrow \mathbb R$ that assigns a real value to every template variable, such that $\Gamma^{\tval}$ becomes an inductively valid polynomial transition system. Here, $\Gamma^{\tval}$ is the PTS obtained by taking the sketch $\Gamma$ and replacing every appearance of each template variable $t \in \tvars$ by its corresponding value $\tval(t).$ Informally, our goal is to synthesize concrete values for each of the template variables so that the resulting program/PTS satisfies the programmer's desired specification.

\paragraph{Boundedness} For technical reasons which will become evident in Section~\ref{section:toolkit}, we assume that our program variables $\vars$ have bounded values. More specifically, we consider a bound $M \in (0, \infty)$ and assume that no program variable ever takes a value larger than $M$ or smaller than $-M.$ We call this the \emph{boundedness assumption}. The boundedness assumption is necessary for our completeness theorem (Theorem~\ref{thm:complete}), but not for soundness. It is a merely theoretical assumption with no significant practical effect. In practice, we do not assume boundedness in our implementation in Section~\ref{secion_proof_of_concept} and can successfully handle all the cases without this extra assumption.

\section{Our Algorithm for TBSP} \label{sec:algo-new}

In this section, we provide an automated, sound and semi-complete algorithm for the TBSP problem as defined in Section~\ref{sec:problem-def}. We assume that the input to our algorithm is either a sketch polynomial transition system (SPTS) or a sketch of a polynomial program, containing holes. In the latter case, we would first process the input and translate it to an SPTS. We start by providing a high-level overview of our algorithm (Section~\ref{sec:algo:overview}) and an illustration over the example of Figure~\ref{fig:progintro}. Then, we present our mathematical toolkit, i.e.~Farkas' Lemma, Handelman's Theorem, Putinar's Positivstellesatz and the Real Nullstellensatz in Section~\ref{section:toolkit} and manipulate them to be usable for our application. We use this toolkit to fill in the details of our algorithm in Section~\ref{sec:algo:detail}. Finally, we provide soundness and semi-completeness theorems in Section~\ref{sec:algo:proof}. 

\subsection{An Illustration and Overview of the Algorithm} \label{sec:algo:overview}

Suppose that a sketch, i.e.~a polynomial program with holes, is given as the input. As a running example, let us take Figure~\ref{fig:progintro}. Our algorithm fills in the holes using the following four steps:

\paragraph{Step 1: Creating Templates and Translation to SPTS} For each hole in the sketch, the algorithm generates a template polynomial in which it introduces new template variables. Concretely, let the hole be of the form $H = (\overline{\vars}, d).$ This corresponds to a case in which the programmer desires the hole to be filled by a polynomial of degree at most $d$ over the set $\overline{\vars} \subseteq \vars.$ The algorithm first generates all monomials $m_i$ of degree at most $d$ over $\overline{\vars}.$ It then creates a template polynomial that includes all of these monomials and is of the form $p_H := \sum_{i} t_i \cdot m_i,$ in which each $t_i$ is a newly generated template variable. It replaces the hole $H$ by the template polynomial $p_H$ and then translates the program to an SPTS. The translation from programs to transition systems is a standard procedure, i.e.~a simple parsing exercise, and hence we skip it for brevity. At the end of this step, we have an SPTS $\Gamma = (\vars, \tvars, \locs,\loc_0,\theta_0,\loc_f, \theta_f, \transitions, \invariant).$

\begin{example}
	Consider the sketch of Figure~\ref{fig:progintro}. The first hole in the invariant of line 5 should be filled by a polynomial of degree at most $2$ over the variables $\{i, s\}.$ Hence, the algorithm generates the following template for this hole:
	$$
	\textstyle p_1 := t_0 + t_1 \cdot i + t_2 \cdot s + t_3 \cdot i^2 + t_4 \cdot i \cdot s + t_5 \cdot s^2.
	$$
	Note that all the $t_i$'s are new template variables whose concrete values have to be synthesized later. It then establishes a similar template for the second hole in line 5:
	$$
	\textstyle p_2 := t_6 + t_7 \cdot i + t_8 \cdot s + t_9 \cdot i^2 + t_{10} \cdot i \cdot s + t_{11} \cdot s^2.
	$$
	The third hole in line 5 and the hole in line 6 should be affine expressions over $\{i, n\},$ so the algorithm generates:
	$$
	\begin{matrix}
	p_3 := t_{12} + t_{13} \cdot i + t_{14} \cdot n,\\
	p_4 := t_{15} + t_{16} \cdot i + t_{17} \cdot n.
	\end{matrix}
	$$
	Finally, the last hole (line 9) should be filled by an affine expression over $\{i, s\}.$ So, we have:
	$$
	\textstyle p_5 := t_{18} + t_{19} \cdot i + t_{20} \cdot s.
	$$
	The algorithm first replaces each $i$-th hole with its corresponding template polynomial $p_i.$ It then translates the program to a transition system, leading to the SPTS of Figure~\ref{fig:sketchts}.
\end{example}

\paragraph{Step 2: Generating Entailment Constraints} Recall that our goal is to synthesize concrete real values for the template variables $t_i \in \tvars$ such that our SPTS becomes an inductively valid PTS. As such, it should satisfy the initiation, consecution and finalization constraints defined in Section~\ref{sec:problem-def}. In this step, the algorithm considers every possible basic path $\pi$ and symbolically computes its transition relation and its corresponding initiation/consecution/finalization constraint. Then, the algorithm translates the constraint into an equivalent system of constraints in the following standard form:
\begin{equation} \label{eq:entailment-form}
	f_1 \bowtie_1 0 ~\wedge~ f_2 \bowtie_2 0 ~\wedge~ \ldots \wedge f_r \bowtie_r 0 \Rightarrow f \bowtie 0
\end{equation}
in which $f$ and each $f_i$ are symbolic polynomials over $(\vars \cup \vars', \tvars)$ and each operator $\bowtie, \bowtie_i$ is either $\ge$ or $>$.

\begin{example} \label{ex:step2}
	Consider the SPTS and cutset of Figure~\ref{fig:sketchts}. As discussed in Example~\ref{ex:cutset}, the basic paths in this SPTS are 	
	$\pi_1 = \langle \transition_1, \transition_2 \rangle, \pi_2 = \langle \transition_4 \rangle,$ and $\pi_3 = \langle \transition_3, \transition_5, \transition_6 \rangle.$
	The algorithm symbolically computes the transition relation of each basic path:
	$$
	\rho_{\pi_1} = \rho_{\tau_2} \circ \rho_{\tau_1} = \left[ n'=n ~\wedge~ i'=0 ~\wedge~ s'=0 \right]
	$$
	$$
	\rho_{\pi_2} = \rho_{\tau_4} = \left[ t_{15} + t_{16} \cdot i + t_{17} \cdot n < 0 ~\wedge~ i'=i ~\wedge~ s'=s ~\wedge~ n'=n \right]
	$$
	$$
	\rho_{\pi_3} = \rho_{\tau_6} \circ \rho_{\tau_5} \circ \rho_{\tau_3} = \left[ t_{15} + t_{16} \cdot i + t_{17} \cdot n \geq 0 ~\wedge~ i' = i+1 ~\wedge~ s'=t_{18} + t_{19} \cdot (i+1) + t_{20} \cdot s ~\wedge~ n' = n \right]
	$$
	It then has to symbolically compute the inductive validity constraints. For $\pi_1$, we have the following initiation constraint:
	$$\val \models \theta_0 \wedge (\val, \val') \models \rho_{\pi_1} \Rightarrow \val' \models \invariant(6)'.$$
	The algorithm symbolically computes and expands this constraint and obtains:
	$$
	n \geq 1 ~\wedge~ n' = n ~\wedge~ i' = 0 ~\wedge~ s' = 0 \Rightarrow 
	$$
	$$
	t_0 + t_1 \cdot i' + t_2 \cdot s' + t_3 \cdot i'^2 + t_4 \cdot i' \cdot s' + t_5 \cdot s'^2 \ge 0 ~\wedge~
	$$
	$$
	t_6 + t_7 \cdot i' + t_8 \cdot s' + t_9 \cdot i'^2 + t_{10} \cdot i' \cdot s' + t_{11} \cdot s'^2 \geq 0 ~\wedge~
	$$
	$$
	t_{12} + t_{13} \cdot i' + t_{14} \cdot n' \geq 0
	$$
	Note that this constraint should hold for any valuation of $\vars \cup \vars'.$ At this point, since we know $n' = n$ and $i'=s'=0,$ we can further simplify the constraint to
	$$
	n \geq 1 \Rightarrow t_0 \geq 0 ~\wedge~ t_6 \geq 0 ~\wedge~ t_{12} + t_{14} \cdot \textcolor{red}{n} \geq 0, 
	$$
	hence eliminating several variables and obtaining a much simpler but equivalent constraint. Our implementation in Section~\ref{secion_proof_of_concept} applies such sound heuristics for simplifying the constraints. However, the constraint is still not in the standard form of \eqref{eq:entailment-form} since there are several inequalities on the right-hand-side. The algorithm rewrites this constraint as the following equivalent set of constraints of the standard form \eqref{eq:entailment-form}:
	$$
	n \geq 1 \Rightarrow t_0 \geq 0;
	$$
	$$
	n \geq 1 \Rightarrow t_6 \geq 0;
	$$
	$$
	n \geq 1 \Rightarrow t_{12} + t_{14} \cdot n \geq 0.
	$$
	Let us now consider $\pi_2.$ We need to generate a finalization constraint in this case. The algorithm symbolically computes
	$$
	\invariant(6) ~\wedge~ \rho_{\pi_2} \Rightarrow \theta'_f
	$$
	and then expands it and rewrites it as an equivalent set of standard constraints just as in the previous case. Finally, for $\pi_3,$ which starts and ends at the cutpoint $6$, the algorithm writes a consecution constraint:
	$$
	\invariant(6) ~\wedge~ \rho_{\pi_3} \Rightarrow \invariant(6)'.
	$$
	As above, the constraint will be simplified and translated to an equivalent set of standard constraints.
\end{example}

\paragraph{Step 3: Eliminating Program Variables and Reduction to QP} We start by providing some intuition for this step. Let $\vars = \{v_1, \ldots, v_m\}$ be our program variables and $\tvars = \{t_1, \ldots, t_{m'}\}$ the template variables introduced in Step 1 above. By the end of the previous step, we have a set of standard constraints of form~\eqref{eq:entailment-form}. Let the constraints be $\kappa_1, \kappa_2, \ldots, \kappa_u.$ Our problem is now equivalent to finding concrete real values $t^*_i \in \mathbb R$ for each template variable $t_i \in \tvars$ such that the following formula is satisfied:
\begin{equation} \label{eq:firstorder}
	\forall v_1, v_2, \ldots, v_m ~\;~\;~ (\kappa_1 ~\wedge~ \kappa_2 ~\wedge~ \ldots ~\wedge~ \kappa_u)
\end{equation} 
or more formally
\begin{equation} \label{eq:firstorder22}
	\textstyle \forall \val \in \mathbb{R}^{\vars} ~\;~\;~ \val \models \left( \bigwedge_{j=1}^u \kappa_j \left[ t_i \leftarrow t^*_i \right]_{i=1}^{m'} \right),
\end{equation}
where $\kappa_j \left[ t_i \leftarrow t^*_i \right]_{i=1}^{m'}$ is the formula $\kappa_j$ in which each $t_i$ is replaced by $t^*_i.$ This is not an easy problem in general (See Section~\ref{sec:complexity}), but the fact that all $\kappa_i$ constraints are in the standard form of~\eqref{eq:entailment-form} can be exploited for a reduction to quadratic programming (QP). This step requires mathematical details that will be presented in Section~\ref{section:toolkit}. At this point, we consider a simple sound algorithm that captures the main intuition of this step and then fill in the details to make it semi-complete in Section~\ref{sec:algo:detail}. 

Consider a standard constraint $\kappa_i$ of the following form:
$$
	f_1 \geq 0 ~\wedge~ f_2 \geq 0 ~\wedge~ \ldots \wedge f_r \geq 0 \Rightarrow f \geq  0
$$
As mentioned above, this constraint should hold over all valuations. So, intuitively, we would like to guarantee that whenever all polynomials $f_1, f_2, \ldots, f_r$ are non-negative, then so is $f$. A simple sound solution is to try to write $f$ as a combination of the $f_i$'s in the following form:
\begin{equation} \label{eq:lhsrhs}
\textstyle f = \lambda_0 + \sum_{i=1}^r  \lambda_i \cdot f_i
\end{equation}
where each $\lambda_i$ is a new non-negative real variable. We write $\Lambda$ as a shorthand for $\{\lambda_0, \ldots, \lambda_r\}.$ This is clearly sound, since whenever the $f_i$'s are non-negative at a given valuation, then so is every combination of them with non-negative coefficients. Indeed, we can go one step further and let each $\lambda_i$ be an always-non-negative polynomial, instead of a real constant, and the exact same argument will go through. As we will see, this idea can lead to a semi-complete algorithm. 

We can now present Step 3 of the algorithm. In this step, the algorithm considers each standard constraint $\kappa_i$ and symbolically computes the equation~\eqref{eq:lhsrhs} for it. Note that both sides of~\eqref{eq:lhsrhs} are symbolic polynomials over $(\vars, \tvars \cup \Lambda).$ In particular, they are polynomials over $\vars$. However, two polynomials are equal if and only if they have equal coefficients for every monomial. Hence, the algorithm equates the corresponding coefficients on the left and right-hand sides of~\eqref{eq:lhsrhs} and rewrites it as a system of quadratic equalities over $\tvars \cup \Lambda.$

\begin{example}
	Consider the standard constraint 
	$
	n - 1 \geq 0 \Rightarrow t_{12} + t_{14} \cdot n \geq 0
	$ that was generated in Step 2 above (Example~\ref{ex:step2}). The algorithm symbolically computes~\eqref{eq:lhsrhs} and obtains
	$$
	t_{12} + t_{14} \cdot n = \lambda_0 + \lambda_1 \cdot (n-1).
	$$
	Since both sides of this equality are polynomials over $n$, in order for the equality to hold, they should have the same constant factor and the same coefficient for $n$, so the algorithm obtains the following constraints:
	$$
	t_{12} = \lambda_0 - \lambda_1
	$$
	$$
	t_{14} = \lambda_1
	$$
	$$
	\lambda_0, \lambda_1 \geq 0
	$$
	This is a quadratic programming (QP) instance in which none of the program variables, e.g.~$n$, appear. The algorithm translates other standard constraints to QP in a similar manner and combines them conjunctively.
\end{example}

\paragraph{Step 4: Constraint Solving} The algorithm passes the QP instance generated in the previous step to an SMT solver or a numerical solver. If the solver fails to find a solution, then the algorithm fails, too. Otherwise, the solver provides values for all $t_i$ and $\lambda_i$ variables. The algorithm plugs the values found for $t_i$'s back into the templates of Step 1, which successfully leads to an inductively valid PTS or equivalenty to a valid synthesized program that satisfies the desired specification.

\begin{example}
	Given the sketch of Figure~\ref{fig:progintro} as input, the synthesized program of Figure~\ref{fig:progintrocomplete} is obtained as one of the solutions of the QP instance in Step 4. However, the solution is not necessarily unique, and our algorithm might find a different solution based on the values found by the external solver.
\end{example}

\newcommand{\sat}{\textsl{SAT}}
\newcommand{\monoid}{\textsl{Monoid}}

\subsection{Mathematical Toolkit}	\label{section:toolkit}

In Section~\ref{sec:algo:overview} above, we provided an overview of our algorithm. However, the details of Step 3, i.e.~the reduction to QP, were not presented since they depend on certain mathematical prerequisites. In this section, we provide the mathematical tools and theorems that are crucial for this step of the algorithm. We first recall some notation and classical definitions. Then, we present several theorems from polyhedral and real algebraic geometry. Finally, we obtain tailor-made versions of these theorems in a format that can be used in Step 3 of our algorithm above. We refer to \cite{hartshorne2013algebraic,bochnak2013real} for a more detailed treatment of these theorems.

\paragraph{Sums of Squares}
A polynomial $h \in \mathbb{R}[x_1,\dots,x_n]$ is a \emph{sum of squares (SOS)} iff there exist $k \geq 1$ and polynomials $g_1, \dots, g_k \in \mathbb{R}[x_1, \dots, x_n]$ such that $h = \sum_{i=1}^k g_i^2.$

\paragraph{Strong Positivity}
Given a set $X \subseteq \mathbb{R}^n$ and a polynomial $g \in \mathbb{R}[x_1,\dots,x_n]$, we say that $g$ is strongly positive over X if $\inf_{x\in X} g(x) > 0$. We write this as $X \models g \gg 0.$

\paragraph{Notation} Let $\Phi = \{g_1 \bowtie_1 0,\dots, g_k \bowtie_k 0\}$ be a set of polynomial inequalities where $\bowtie_i \in \{\geq,>\}$ and $g_i \in \mathbb{R}[x_1, \dots, x_n]$. We define $\sat(\Phi)$ as the set of all real valuations $\val$ over the variables $\{x_1,\dots,x_n\}$ that satisfy $\Phi.$ More formally, $\sat(\Phi) = \{ \val \in \mathbb{R}^n ~\vert~ \bigwedge_{i=1}^k  \val \models (g_i \bowtie_i 0)\}.$

\paragraph{Closure}
Given a set $X \subseteq \mathbb{R}^n$, we define $\overline{X}$ to be the closure of $X$ with respect to the Euclidean topology of $\mathbb{R}^n$. For a set $\Phi$ of polynomial inequalities, we define $\overline{\Phi}$ as the system of polynomial inequalities obtained from $\Phi$ by replacing every strict ineqaulity with its non-strict counterpart.

We are now ready to present the main mathematical theorems that will be used in our work. Our presentation follows that of~\cite[Section 2.6]{goharshady2020parameterized}, which also contains proofs of corollaries that are not proven here.

\begin{theorem}[{Farkas' Lemma \cite{farkas1902theorie}}]\label{thm:Farkas'}
Consider a set $V = \{x_1, \ldots, x_r\}$ of real-valued variables and the following system $\Phi$ of equations over $V$:
    \begin{equation*}
        \Phi := 
        \begin{cases}
            a_{1,0} + a_{1,1} \cdot x_1 + \ldots + a_{1,r} \cdot x_r \ge 0\\
            \hfil \vdots\\
            a_{m,0} + a_{m,1} \cdot x_1 + \ldots + a_{m,r} \cdot x_r \ge 0\\
        \end{cases}.
    \end{equation*}
     When $\Phi$ is satisfiable, it entails a linear inequality $$\psi := c_0 + c_1 \cdot x_1 + \dots + c_r \cdot x_r \ge 0$$
    if and only if $\psi$ can be written as non-negative linear combination of the inequalities in $\Phi$ and the trivial inequality $1 \ge 0,$ i.e.~if there exist non-negative real numbers $y_0,\dots,y_m$ such that 
    $$
    \begin{matrix}
    c_0 = y_0 + \sum_{i=1}^k y_i \cdot a_{i, 0};\\
    c_1 = \sum_{i=1}^k y_i \cdot a_{i, 1};\\
    \vdots\\
    c_r = \sum_{i=1}^k y_i \cdot a_{i, r}.
    \end{matrix}
    $$
    Moreover, $\Phi$ is unsatisfiable if and only if $-1 \ge 0$ can be derived as above. \
\end{theorem}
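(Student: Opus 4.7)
The plan is to establish the biconditional in its two directions and then handle the ``moreover'' clause separately.

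The ``if'' direction (soundness of the certificate) is immediate. At any assignment $(x_1^\star, \ldots, x_r^\star)$ satisfying $\Phi$, each quantity $a_{i,0} + \sum_{j=1}^r a_{i,j} x_j^\star$ is non-negative by hypothesis and $1 \geq 0$ trivially, so the non-negative combination $y_0 + \sum_{i=1}^m y_i(a_{i,0} + \sum_j a_{i,j} x_j^\star)$ is non-negative; the coefficient identities in the statement force this combination to equal $c_0 + \sum_j c_j x_j^\star$, so $\psi$ holds at $x^\star$.

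For the ``only if'' direction I would use a separating-hyperplane argument on closed convex cones. Let
$$K \;=\; \Bigl\{\, y_0\,(1,0,\ldots,0) \;+\; \sum_{i=1}^m y_i\,(a_{i,0}, a_{i,1}, \ldots, a_{i,r}) \;:\; y_0, y_1, \ldots, y_m \geq 0 \,\Bigr\} \;\subseteq\; \mathbb{R}^{r+1},$$
which is closed because it is the conical hull of finitely many vectors (a Minkowski--Weyl-type fact). The desired certificate exists precisely when $(c_0,c_1,\ldots,c_r) \in K$. Arguing by contradiction, suppose $\Phi$ entails $\psi$ but $(c_0,\ldots,c_r) \notin K$. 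The closed-convex-cone separation theorem then yields a vector $w = (w_0, w_1, \ldots, w_r)$ with $w \cdot v \leq 0$ for every $v \in K$ and $w \cdot (c_0,\ldots,c_r) > 0$. Testing $w$ against the generators of $K$ gives $w_0 \leq 0$ and $w_0 a_{i,0} + \sum_j w_j a_{i,j} \leq 0$ for every $i$.

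The main obstacle is to convert $w$ into a concrete valuation witnessing the failure of the entailment, and this is where two subcases arise. When $w_0 < 0$, I would set $x_j^\star := w_j/w_0$; dividing the inequalities $w_0(a_{i,0} + \sum_j a_{i,j} x_j^\star) \leq 0$ by the negative quantity $w_0$ flips the sense and shows $x^\star \models \Phi$, while the same division applied to $w \cdot (c_0,\ldots,c_r) > 0$ gives $c_0 + \sum_j c_j x_j^\star < 0$, contradicting entailment. The delicate subcase is $w_0 = 0$, which encodes a direction of recession: here I would invoke the hypothesis that $\Phi$ is satisfiable, pick any $x_0^\star \models \Phi$, and consider the ray $x^\star(t) := x_0^\star - t(w_1,\ldots,w_r)$; the inequalities $\sum_j w_j a_{i,j} \leq 0$ ensure $x^\star(t) \models \Phi$ for all $t \geq 0$, while $\sum_j w_j c_j > 0$ drives $c_0 + \sum_j c_j x_j^\star(t) \to -\infty$ as $t \to \infty$, so $\psi$ fails for $t$ sufficiently large, again contradicting entailment.

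Finally, the ``moreover'' clause follows by applying the same cone-separation reasoning to the vector $\tilde{c} = (-1, 0, \ldots, 0)$. If $\Phi$ is unsatisfiable but $\tilde{c} \notin K$, the separation theorem would yield some $w$ with $-w_0 = w \cdot \tilde{c} > 0$, forcing $w_0 < 0$, and the $w_0 < 0$ construction above would then produce a valuation satisfying $\Phi$, a contradiction; hence $\tilde{c} \in K$, which is exactly a derivation of $-1 \geq 0$. The converse direction is immediate, since any such derivation, evaluated at a purported satisfying valuation, would give $-1 \geq 0$, which is impossible.
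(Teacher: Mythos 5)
Your proof is correct. Note first that the paper itself offers no proof of this theorem: it is cited as a classical result (Farkas 1902), with the presentation taken from the referenced background material, so there is no "paper proof" to compare against line by line. What you give is the standard conic-duality argument for the affine (inhomogeneous) Farkas lemma: the soundness direction is the trivial evaluation argument; the completeness direction identifies the certificate with membership of $(c_0,\dots,c_r)$ in the finitely generated (hence closed) cone $K$ and extracts a counterexample to entailment from a separating functional. You correctly handle the two subcases of the separation: the $w_0<0$ case yields a concrete falsifying valuation by rescaling, and the $w_0=0$ recession-direction case is exactly where the hypothesis that $\Phi$ is satisfiable is needed, which you invoke appropriately; the unsatisfiability criterion then falls out by separating $(-1,0,\dots,0)$. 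The only ingredients you take on faith are the closedness of finitely generated cones and the separating-hyperplane theorem, both standard and both flagged. (Incidentally, your $\sum_{i=1}^m$ silently corrects the statement's typo $\sum_{i=1}^k$, where $k$ should be $m$.)
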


The importance of Farkas' Lemma for us is that if we have a standard constraint of form~\eqref{eq:entailment-form} and if the constraint includes only linear/affine inequalities, then we can use this lemma in Step 3 of our algorithm to reduce the standard constraint to QP, just as we did in Section~\ref{sec:algo:overview}. Moreover, Farkas' Lemma guarantees that this approach is not only sound but also complete. A corner case that we have to consider is when $\Phi$ is itself unsatisfiable and thus $\Phi \Rightarrow \psi$ holds vacuously. Fortunately, Farkas' Lemma also provides a criterion for unsatisfiability. In practice, we work with the following corollary of Theorem~\ref{thm:Farkas'} which can also handle strict inequalities.
\begin{corollary}\label{thm:farkas_2}
Consider a set $V = \{x_1, \ldots, x_r\}$ of real-valued variables and the following system of equations over $V$:
        \begin{equation*}
        \Phi := 
        \begin{cases}
            a_{1,0} + a_{1,1} \cdot x_1 + \ldots + a_{1,r} \cdot x_r \bowtie_1  0\\
            \hfil \vdots\\
            a_{m,0} + a_{m,1} \cdot x_1 + \ldots + a_{m,r} \cdot x_r \bowtie_m 0\\
        \end{cases}
    \end{equation*}
    where $\bowtie_i \in \{>,\geq\}$ for all $1 \leq i \leq m$. When $\Phi$ is satisfiable, it entails a linear inequality $$\psi := c_0 + c_1 \cdot x_1 + \dots + c_r \cdot x_r \bowtie 0$$
    with $\bowtie \in \{>, \geq\}$, if and only if $\psi$ can be written as non-negative linear combination of inequalities in $\Phi$ and the trivial inequality $1 > 0$. Note that if $\psi$ is strict, then at least one of the strict inequalities should appear with a non-zero coefficient in the linear combination. Moreover, $\Phi$ is unsatisfiable if and only if either $-1 \ge 0$ or $0 > 0$ can be derived as above.
\end{corollary}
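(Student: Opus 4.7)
The plan is to reduce Corollary~\ref{thm:farkas_2} to the original Farkas' Lemma (Theorem~\ref{thm:Farkas'}) via a slack variable, and then to bootstrap from the unsatisfiability characterization to the entailment one. The ``if'' direction is immediate: any non-negative combination of the inequalities in $\Phi$ and $1>0$ is automatically satisfied wherever $\Phi$ is, and strictness of $\psi$ follows whenever a strict inequality in the combination carries a positive coefficient. I focus on the ``only if'' direction.

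First I would handle the unsatisfiability statement. Introduce a fresh non-negative variable $\epsilon$ and form the auxiliary system
$$\Phi^\ast := \{g_i \ge 0\}_{i \in I} \,\cup\, \{g_j - \epsilon \ge 0\}_{j \in J} \,\cup\, \{\epsilon \ge 0\}$$
over $V \cup \{\epsilon\}$, where $I$ indexes the non-strict and $J$ the strict inequalities of $\Phi$. Then $\Phi$ is satisfiable iff $\Phi^\ast$ admits a solution with $\epsilon > 0$, so $\Phi$ is unsatisfiable iff $\Phi^\ast \Rightarrow (-\epsilon \ge 0)$. Since $\Phi^\ast$ is purely non-strict, Theorem~\ref{thm:Farkas'} applies and expresses $-\epsilon$ as a non-negative linear combination of the $g_i$'s, the $(g_j - \epsilon)$'s, $\epsilon$, and $1$. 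Equating the coefficient of $\epsilon$ on both sides shows the total weight on the strict inequalities is at least $1$, and equating the remaining terms yields a derivation of $0 > 0$ from $\Phi$ in which some $g_j > 0$ carries a positive coefficient. The corner case $J = \emptyset$ reduces directly to Theorem~\ref{thm:Farkas'} and produces $-1 \ge 0$ instead, so either type of certificate can be extracted.

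Next I would handle the entailment statement by invoking the unsatisfiability result on an augmented system. Observe that $\Phi \Rightarrow (c \ge 0)$ is equivalent to $\Phi \cup \{-c > 0\}$ being unsatisfiable, while $\Phi \Rightarrow (c > 0)$ is equivalent to $\Phi \cup \{-c \ge 0\}$ being unsatisfiable. In each case, the certificate produced above gives an identity of the form $-1 = \cdots + y_c(-c) + \cdots$ or $0 = \cdots + y_c(-c) + \cdots$ with non-negative coefficients; solving for $c$ exhibits it as the desired non-negative combination, and careful bookkeeping of which inequalities in the certificate are strict translates directly into the required strictness condition on the combination for $\psi$. The main obstacle is the case analysis ruling out the degenerate scenario $y_c = 0$: whenever this occurs, the purported certificate no longer involves $-c$ and therefore already witnesses unsatisfiability of $\Phi$ alone, contradicting the standing hypothesis that $\Phi$ is satisfiable; this contradiction has to be derived uniformly across the $\ge$ versus $>$ sub-cases, and it is the only place where the hypothesis of satisfiability is genuinely used.
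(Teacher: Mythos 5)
The paper does not actually prove this corollary: immediately before Theorem~\ref{thm:Farkas'} it states that the presentation follows \cite[Section 2.6]{goharshady2020parameterized}, ``which also contains proofs of corollaries that are not proven here.'' So there is no in-paper argument to compare against, and your proposal must stand on its own. It essentially does: the slack-variable reduction $\Phi^\ast$ is the standard way to lift Farkas' Lemma to mixed strict/non-strict systems, the coefficient-of-$\epsilon$ bookkeeping correctly forces total weight at least $1$ on the strict inequalities (yielding the $0>0$ certificate), and the bootstrap from the unsatisfiability criterion to the entailment criterion via $\Phi\Rightarrow(c\ge 0)\iff \Phi\cup\{-c>0\}$ unsat (and dually with $\ge$ and $>$ swapped) is sound, including the $y_c=0$ contradiction with satisfiability of $\Phi$. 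I verified that the resulting combinations satisfy the strictness side-condition in each sub-case, provided one reads ``the strict inequalities'' as including the trivial inequality $1>0$ (which is the intended reading; otherwise even $\{x\ge 0\}\models x+1>0$ would have no certificate).

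One step needs tightening. You invoke the entailment clause of Theorem~\ref{thm:Farkas'} to express $-\epsilon$ over $\Phi^\ast$, but that clause is only available when $\Phi^\ast$ is satisfiable, and $\Phi^\ast$ is satisfiable exactly when $\overline{\Phi}$ is. The case split you need is therefore on satisfiability of $\overline{\Phi}$, not on whether $J=\emptyset$: it is perfectly possible that $J\neq\emptyset$ while $\overline{\Phi}$ (hence $\Phi^\ast$) is unsatisfiable, and in that branch you must fall back on the unsatisfiability clause of Theorem~\ref{thm:Farkas'} applied to $\overline{\Phi}$, which hands you the $-1\ge 0$ certificate directly (the same coefficients constitute a derivation from $\Phi$, since strictness of the hypotheses is irrelevant to the identity of polynomials). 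With that case analysis stated explicitly, the argument is complete; as written, the sentence ``the corner case $J=\emptyset$ reduces directly to Theorem~\ref{thm:Farkas'}'' does not cover all the situations in which the entailment clause fails to apply.
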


We now consider extensions of Farkas' Lemma which can help us handle non-linear standard constraints in Step 3. The first extension is Handelman's theorem, which can be applied when the inequalities on the left hand side of~\eqref{eq:entailment-form} are linear/affine, but the right hand side is a polynomial of arbitrary degree. To present the theorem, we first need the concept of monoids.

\paragraph{Monoid}
Consider a set $V = \{x_1,\dots x_r \}$ of real-valued variables and the following system of linear inequalities over $V$:
        \begin{equation*}
        \Phi := 
        \begin{cases}
            a_{1,0} + a_{1,1} \cdot x_1 + \ldots + a_{1,r} \cdot x_r \bowtie_1  0\\
            \hfil \vdots\\
            a_{m,0} + a_{m,1} \cdot x_1 + \ldots + a_{m,r} \cdot x_r \bowtie_m 0\\
        \end{cases}
    \end{equation*}
    where $\bowtie_i \in \{>, \geq \}$ for all $1 \leq i \leq m$. Let $g_i$ be the left hand side of the $i$-th inequality, i.e. $g_i(x_1,\dots,x_r) := a_{i,0} + a_{i, 1} \cdot x_1 + \dots a_{i,r} \cdot x_r$. The \emph{monoid} of $\Phi$ is defined as: 
   	$$
    \textstyle \monoid(\Phi) := \left\{ \prod_{i=1}^m g_i^{k_i} \mid m \in \mathbb{N} ~\wedge~\forall i ~\; k_i \in \mathbb{N} \cup \{0\} \right\}.
    $$
In other words, the monoid contains all polynomials that can be obtained as a multiplication of the $g_i$'s. Note that $1 \in \monoid(\Phi).$ 
We define $\monoid_d(\Phi)$ as the subset of polynomials in $\monoid(\Phi)$ of degree at most $d.$

\begin{theorem}[{Handelman's Theorem \cite{handelman1988representing}}]\label{thm:handel}
Consider a set $V = \{x_1, \dots, x_r \}$ of real-valued variables and the following system of equations over $V$:
     \begin{equation*}
        \Phi := 
        \begin{cases}
            a_{1,0} + a_{1,1} \cdot x_1 + \ldots + a_{1,r} \cdot x_r \geq  0\\
            \hfil \vdots\\
            a_{m,0} + a_{m,1} \cdot x_1 + \ldots + a_{m,r} \cdot x_r \geq 0\\
        \end{cases}.
    \end{equation*}
If $\Phi$ is satisfiable, $\sat(\Phi)$ is compact, and $\Phi$ entails a polynomial inequality $g(x_1,\dots,x_r) > 0,$ then there exist non-negative real numbers $y_1,\dots y_u$ and polynomials $h_1,\dots,h_u \in \monoid(\Phi)$ such that:
$$
\textstyle g = \sum_{i=1}^u y_i \cdot h_i.
$$
\end{theorem}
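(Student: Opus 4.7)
The plan is to deduce Handelman's theorem from a classical positivity result on the simplex, namely Polya's theorem, by using the linear forms $g_i$ themselves as new coordinates.

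First, since each $g_i$ is affine and $\Phi$ is a finite conjunction of constraints $g_i \geq 0$, the set $P := \sat(\Phi)$ is a convex polyhedron; the compactness hypothesis upgrades it to a polytope in $\mathbb{R}^r$. The entailment $\Phi \models (g > 0)$ says that $g$ is strictly positive everywhere on $P$, and by continuity together with compactness we may strengthen this to $g \geq \epsilon$ on $P$ for some $\epsilon > 0$. This uniform strict positivity is what will ultimately allow us to extract a certificate with non-negative coefficients.

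Next, I would introduce ``Handelman coordinates'' $y_i := g_i(x_1, \ldots, x_r)$, giving an affine map $\Psi \colon \mathbb{R}^r \to \mathbb{R}^m$ that identifies $P$ with $\Psi(P) \subseteq A \cap \mathbb{R}_{\geq 0}^m$, where $A = \Psi(\mathbb{R}^r)$ is an affine subspace of $\mathbb{R}^m$. The key observation is that every element $\prod_i g_i^{k_i}$ of $\monoid(\Phi)$ is precisely the pullback $\Psi^{\ast}(y^{k})$ of a monomial in the variables $y_i$. Consequently, producing a Handelman certificate $g = \sum_j y_j \, h_j$ with $h_j \in \monoid(\Phi)$ is equivalent to expressing $g$, transported to the $y$-coordinates on $\Psi(P)$, as a non-negative combination of monomials $y^k$ over the corresponding slice of the non-negative orthant.

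At this stage I would invoke Polya's theorem: a homogeneous polynomial strictly positive on $\mathbb{R}_{\geq 0}^m \setminus \{0\}$ becomes, after multiplication by a sufficiently high power of $y_1 + \cdots + y_m$, a polynomial with only non-negative coefficients in the monomial basis. After homogenizing the pulled-back $g$ appropriately and rescaling so that $\Psi(P)$ fits inside a bounded simplex, Polya yields coefficients $c_k \geq 0$ with the required expansion; pulling this identity back through $\Psi$ produces the desired Handelman certificate. The main obstacle I anticipate is that $\Psi(P)$ generally does not fill the full non-negative orthant but sits inside the proper affine subspace $A$: one must exploit the linear relations defining $A$ to convert any monomial in $y$ produced by Polya into a bona fide element of $\monoid(\Phi)$ on $\Psi(P)$, without destroying non-negativity of the resulting coefficients. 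Handling these ambient relations cleanly, and in particular showing that the homogenization can always be chosen to remain non-negative on the full simplex so that Polya genuinely applies, is the technical heart of the argument.
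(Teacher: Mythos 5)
You should first note that the paper itself offers no proof of this statement: Handelman's Theorem is quoted as a classical result with a citation to \cite{handelman1988representing}, whose original argument is functional-analytic (via representations of partially ordered commutative rings). So your Polya-based strategy is necessarily a different route from anything in the paper; it is, in principle, a legitimate one — reducing Handelman to Polya's theorem on the simplex is a known alternative path. However, as written your proposal is a plan rather than a proof: the two issues you yourself flag as ``the technical heart'' are exactly where the argument is nontrivial, and neither is resolved.

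Concretely: (1) \emph{Homogenization.} Polya applies to a form strictly positive on the closed standard simplex, and the natural homogenizing form $\sum_i y_i$ pulls back to the affine function $\sum_i g_i(x)$, which is not constant on $\sat(\Phi)$; pulling the Polya identity back therefore certifies $(\sum_i g_i)^N\cdot g$, not $g$. The standard repair is to adjoin a slack coordinate $y_0 := M-\sum_i y_i$ for $M$ large enough that $y_0\ge 0$ on the image of $\sat(\Phi)$, so that $y_0+\sum_i y_i$ pulls back to the constant $M$; but then the monomials containing $y_0$ pull back to powers of the affine function $M-\sum_i g_i$, which is \emph{not} an element of $\monoid(\Phi)$, and one must invoke Farkas' Lemma to rewrite it as a non-negative combination of $1$ and the $g_i$ and re-expand all products — a step your sketch omits entirely. (2) \emph{Extension off the affine subspace.} A polynomial $G$ with $G\circ\Psi=g$ is only known to be positive on $A\cap\mathbb{R}^m_{\ge 0}$, whereas Polya needs positivity on the full simplex. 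One can add $C\sum_k \ell_k(y)^2$ for the affine equations $\ell_k$ cutting out $A=\Psi(\mathbb{R}^r)$ (these vanish identically under $\Psi$, so the pullback is unchanged), but showing that a finite $C$ suffices requires a compactness argument bounding $-G$ against $\sum_k\ell_k(y)^2$ on the simplex away from a neighbourhood of $A$ where $G$ is already positive. Both repairs are standard, but until they are carried out the proposal has a genuine gap precisely at the point where the theorem's content lies.
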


The intuition here is that if every inequality in $\Phi$ holds, then all the LHS expressions in $\Phi$ are non-negative and hence any multiplication $h_i$ of them is also non-negative. As in the case of Farkas' Lemma, Handelman's theorem shows that this approach is not only sound but also complete. We also need a variant that can handle strict inequalities in $\Phi.$

\begin{corollary}\label{thm:handel_2}
Consider a set $V = \{x_1, \dots, x_r \}$ of real-valued variables and the following system of equations over $V$:
     \begin{equation*}
        \Phi := 
        \begin{cases}
            a_{1,0} + a_{1,1} \cdot x_1 + \ldots + a_{1,r} \cdot x_r \bowtie_1  0\\
            \hfil \vdots\\
            a_{m,0} + a_{m,1} \cdot x_1 + \ldots + a_{m,r} \cdot x_r \bowtie_m 0\\
        \end{cases}
    \end{equation*}
in which $\bowtie_i \in \{>,\geq\}$ for all $1 \leq i \leq m$. If $\Phi$ is satisfiable and $\sat(\Phi)$ is bounded, then $\Phi$ entails a strong polynomial inequality $g \gg 0 $ if and only if there exist constants $y_0 \in (0,\infty),$ and $y_1,\dots,y_u \in [0,\infty)$, and polynomials $h_1,\dots,h_u \in \monoid(\Phi)$ such that:
$$
\textstyle g = y_0 + \sum_{i=1}^u y_i \cdot h_i.
$$
\end{corollary}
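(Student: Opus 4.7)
The plan is to establish the two directions separately. The ``if'' direction is a direct substitution check, while the ``only if'' direction reduces to the non-strict Handelman Theorem (Theorem~\ref{thm:handel}) by passing from $\Phi$ to its closed counterpart $\overline{\Phi}$ and then shifting $g$ downward by half of its infimum, exploiting the strong positivity hypothesis to guarantee strict positivity of the shifted polynomial on the resulting compact set.

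For the sufficient direction, I would assume $g = y_0 + \sum_{i=1}^{u} y_i h_i$ with $y_0 > 0$, $y_i \geq 0$, and $h_i \in \monoid(\Phi)$. For any $x \in \sat(\Phi)$, every linear factor $g_j$ appearing in some $h_i$ satisfies $g_j(x) \geq 0$, regardless of whether $\bowtie_j$ is $\geq$ or $>$; hence each product $h_i(x) \geq 0$ and $g(x) \geq y_0 > 0$. Taking the infimum over $\sat(\Phi)$ yields $\inf_{x \in \sat(\Phi)} g(x) \geq y_0 > 0$, which is exactly $\sat(\Phi) \models g \gg 0$.

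For the necessary direction, I would set $c := \inf_{x \in \sat(\Phi)} g(x) > 0$ and form $\overline{\Phi}$ by weakening every strict inequality of $\Phi$ to its non-strict counterpart. Note that $\monoid(\overline{\Phi}) = \monoid(\Phi)$. Because $\Phi$ consists of affine inequalities, $\sat(\Phi)$ is a (possibly non-closed) convex polyhedron; since it is non-empty and bounded, its Euclidean closure $\overline{\sat(\Phi)}$ coincides with $\sat(\overline{\Phi})$ and is therefore compact. By continuity of $g$ combined with density of $\sat(\Phi)$ in its closure, the shifted polynomial $g - c/2$ satisfies $g - c/2 \geq c/2 > 0$ on $\sat(\overline{\Phi})$. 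Applying Theorem~\ref{thm:handel} to $\overline{\Phi}$ and $g - c/2$ then produces non-negative real numbers $\tilde{y}_1, \ldots, \tilde{y}_u$ and polynomials $\tilde{h}_1, \ldots, \tilde{h}_u \in \monoid(\overline{\Phi}) = \monoid(\Phi)$ with $g - c/2 = \sum_i \tilde{y}_i \tilde{h}_i$. Setting $y_0 := c/2 \in (0,\infty)$, $y_i := \tilde{y}_i \in [0,\infty)$, and $h_i := \tilde{h}_i$ yields the required representation.

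The main anticipated obstacle is the identification $\sat(\overline{\Phi}) = \overline{\sat(\Phi)}$: weakening strict inequalities can in general introduce ``new'' boundary points on which no positivity information was assumed, so one must use the convex-polyhedral structure of $\sat(\Phi)$ and its non-emptiness to rule this out. Once this topological step is secured, the uniform margin $c/2$ supplied by strong positivity (as opposed to mere positivity $g > 0$) is precisely what both justifies the hypothesis of Theorem~\ref{thm:handel} on the compactified system and becomes the strictly positive constant $y_0$ in the final representation.
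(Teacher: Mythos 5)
Your proof is correct. The paper does not actually include a proof of this corollary --- it defers to the cited reference \cite[Section 2.6]{goharshady2020parameterized} --- but your route (the ``if'' direction by direct evaluation at a point of $\sat(\Phi)$; the ``only if'' direction by passing to $\overline{\Phi}$, using the uniform margin $c=\inf_{x\in\sat(\Phi)}g(x)>0$ to keep $g-c/2$ strictly positive on the compact set $\sat(\overline{\Phi})$, and invoking Theorem~\ref{thm:handel}) is the standard derivation. The one step you flag as a potential obstacle, $\sat(\overline{\Phi})=\overline{\sat(\Phi)}$, does go through exactly as you anticipate: given $p\in\sat(\overline{\Phi})$ and any $q\in\sat(\Phi)$, affinity gives $g_i((1-t)q+tp)=(1-t)\,g_i(q)+t\,g_i(p)\geq(1-t)\,g_i(q)>0$ for every strict constraint and every $t\in[0,1)$, while the non-strict constraints hold on the whole segment by convexity, so $p$ is a limit of points of $\sat(\Phi)$; note this argument genuinely uses that the $g_i$ are affine and that $\Phi$ is satisfiable, and it is also where compactness of $\sat(\overline{\Phi})$ (closed and bounded, since $\sat(\Phi)$ is bounded) is inherited for the application of Theorem~\ref{thm:handel}.
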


Corollary~\ref{thm:handel_2} above can handle a wider family of standard constraints than Corollary~\ref{thm:farkas_2}. However, it is also more expensive, since we now need to generate one new variable $y_i$ for every polynomial in $\monoid_d(\Phi)$ instead of $\Phi$ itself. Moreover, there is no bound $d$ in the theorem itself, so introducing $d$ would lead to semi-completeness instead of completeness, i.e.~the approach would be complete only if a large enough value of $d$ is used. As such, in cases where both sides of~\eqref{eq:entailment-form} are linear, Corollary~\ref{thm:farkas_2} is preferable. We now consider a more expressive theorem that can handle polynomials on boths sides of~\eqref{eq:entailment-form}.

\begin{theorem}[Putinar's Positivstellensatz \cite{putinar1993positive}]\label{theorem_put_pos}
Given a finite collection of polynomials $\{g,g_1,\dots,g_k \} \in \mathbb{R}[x_1,\dots,x_n]$, let $\Phi$ be the set of inequalities defined as
$$
\Phi: \{g_1 \geq 0,  \dots , g_k \geq 0 \}.
$$
If $\Phi$ entails the polynomial inequality $g > 0$ and there exist some $i$ such that $\sat(g_i \geq 0)$ is compact, then there exist polynomials $h_0, \dots, h_k \in \mathbb{R}[x_1,\dots,x_n]$ such that 
$$
\textstyle g = h_0 + \sum_{i=1}^{m} h_i \cdot g_i 
$$
and every $h_i$ is a sum of squares. Moreover, $\Phi$ is unsatisfiable if and only if $-1 > 0$ can be obtained as above, i.e.~with $g = -1$.
\end{theorem}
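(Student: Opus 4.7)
The plan is to mirror Putinar's original argument, working in the quadratic module
$$M := \left\{ h_0 + \sum_{i=1}^k h_i \cdot g_i : h_0, h_1, \ldots, h_k \text{ are sums of squares} \right\},$$
with the goal of showing $g \in M$. The first task is to establish the \emph{Archimedean property} of $M$: there exists $N \in \mathbb{N}$ with $N - \sum_{j=1}^n x_j^2 \in M$. Since $\sat(g_i \geq 0)$ is compact for some $i$, it is enclosed in a Euclidean ball of some radius $\sqrt{N}$, so $N - \sum_{j=1}^n x_j^2$ is strictly positive on $\sat(g_i \geq 0)$. The nontrivial part is getting this bounding polynomial into $M$ itself (not merely nonnegativity on the set), which is obtained by first invoking a Schm\"udgen-style preordering certificate on the compact set $\sat(g_i \geq 0)$ and then pushing down to the quadratic module using the single constraint $g_i$ and an algebraic identity trick due to Putinar and Vasilescu.

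Next, I would argue by contradiction: suppose $g \notin M$. Since $M$ is a convex cone in the vector space $\mathbb{R}[x_1, \ldots, x_n]$, the Eidelheit--Kakutani separation theorem supplies a nonzero linear functional $L : \mathbb{R}[x_1, \ldots, x_n] \to \mathbb{R}$ with $L(M) \subseteq [0, \infty)$ and $L(g) \leq 0$. After a normalization we may assume $L(1) = 1$, so $L$ behaves like an abstract expectation.

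The heart of the proof is then to convert this abstract functional into evaluation at a genuine point of $\sat(\Phi)$. The Archimedean property gives $L(x_j^2) \leq N$ for every $j$, so $L$ is bounded on each fixed-degree slice, and in fact extends continuously to a suitable completion of $\mathbb{R}[x]$. A GNS-type construction --- equivalently, a joint spectral decomposition of the commuting self-adjoint multiplication operators $M_{x_j}$ on the associated Hilbert space --- represents $L$ as integration against a probability measure $\mu$ whose support lies in the joint spectrum of the $M_{x_j}$'s, which in turn lies in $\sat(\Phi)$ because $L(g_i \cdot p^2) \geq 0$ for every polynomial $p$. Picking any $x^* \in \operatorname{supp}(\mu)$ yields $g_i(x^*) \geq 0$ for all $i$ together with $g(x^*) \leq 0$, contradicting $\Phi \models g > 0$. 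The unsatisfiability clause then falls out immediately: an unsatisfiable $\Phi$ vacuously entails $-1 > 0$, so the same machinery produces $-1 \in M$, and conversely any such representation rules out every real solution.

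The principal obstacle is the representation step, where the abstract positive linear functional must be upgraded to a bona fide integration against a measure. This is the operator-theoretic core of Putinar's theorem and requires genuine input from spectral theory or, equivalently, from Haviland's theorem on the multidimensional moment problem over compact sets. All of the other steps --- the Eidelheit--Kakutani separation, the passage from compactness of a single $\sat(g_i \geq 0)$ to the Archimedean property of $M$, and the final extraction of the contradicting point --- become essentially bookkeeping once the representation theorem is in place.
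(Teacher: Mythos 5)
This statement is Putinar's Positivstellensatz, which the paper imports verbatim from the literature (\cite{putinar1993positive}) and does not prove; there is no in-paper argument to compare against. Your outline is a faithful reconstruction of the standard (essentially Putinar's original) operator-theoretic proof: Archimedeanness of the quadratic module $M$, separation of $g$ from $M$, representation of the separating functional as a measure supported on $\sat(\Phi)$ via a GNS/spectral (or Haviland) argument, and extraction of a contradicting point. Two small refinements: (i) the Eidelheit--Kakutani separation step needs $M$ to have an algebraic interior point, which is exactly what the Archimedean property supplies (every constant $c>0$ is internal to $M$, and Archimedeanness also forces $L(1)>0$ so the normalization is legitimate) --- so the order of your steps matters and should be made explicit; (ii) for the Archimedean property you do not need the Putinar--Vasilescu identity trick, because the preordering generated by a \emph{single} polynomial $g_i$ coincides with the quadratic module it generates (the only products of subsets of $\{g_i\}$ are $1$ and $g_i$), so Schm\"udgen's theorem on the compact set $\sat(g_i\geq 0)$ already lands $N-\sum_j x_j^2$ inside $M$; alternatively one can quote Putinar's own criterion that compactness of $\{p\geq 0\}$ for some $p\in M$ implies $M$ is Archimedean. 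Your handling of the unsatisfiability clause is correct: emptiness of $\sat(\Phi)$ makes $-1>0$ hold vacuously so the main statement yields $-1\in M$, and conversely a representation of $-1$ evaluated at any point of $\sat(\Phi)$ is absurd. (Incidentally, the upper summation index $m$ in the paper's display should read $k$.)
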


As in the cases of Farkas and Handelman, we need a variant of of Theorem~\ref{theorem_put_pos} that can handle strict inequalities in $\Phi.$

\begin{corollary}\label{thm_pos_put_2}
Consider a finite collection of polynomials $\{g,g_1,\dots,g_k \} \in \mathbb{R}[x_1,\dots,x_n]$ and let
$$
\Phi: \{g_1 \bowtie_1 0,  \dots , g_k \bowtie_k 0 \}
$$
where $\bowtie_i \in \{>,\geq \}$ for all $1 \leq i \leq k$. Assume that there exist some $i$ such that $\sat(g_i \geq 0)$ is compact or equivalently $\sat(g_i \bowtie_i 0)$ is bounded. If $\Phi$ is satisfiable, then it entails the strong polynomial inequality $g \gg 0$, iff there exists a constant $y_0 \in (0, \infty)$ and  polynomials $h_0, \dots, h_k \in \mathbb{R}[x_1,\dots,x_n]$ such that 
$$
\textstyle g = y_0 + h_0 + \sum_{i=1}^{k} h_i \cdot g_i, 
$$
and every $h_i$ is a sum of squares.
\end{corollary}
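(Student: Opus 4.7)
\textbf{Proof proposal for Corollary~\ref{thm_pos_put_2}.} The plan is to reduce the corollary to Theorem~\ref{theorem_put_pos} (Putinar's Positivstellensatz). The corollary enlarges the scope of the theorem in two ways: it permits strict inequalities among the $g_i$, and it strengthens the conclusion from pointwise positivity $g>0$ to strong positivity $g\gg 0$, witnessed by an additive positive constant $y_0$. The $(\Leftarrow)$ direction is essentially an evaluation argument: given a representation $g = y_0 + h_0 + \sum_{i=1}^{k} h_i\cdot g_i$ with $y_0>0$ and $h_0,\ldots,h_k$ sums of squares, pick any $x\in\sat(\Phi)$. Each $g_i(x)\geq 0$ (whether $\bowtie_i$ is $>$ or $\geq$), each $h_i(x)\geq 0$ by the SOS property, so $g(x)\geq y_0$ uniformly on $\sat(\Phi)$, i.e.~$g\gg 0$.

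For the $(\Rightarrow)$ direction, assume $\Phi$ is satisfiable and $\Phi\models g\gg 0$, and set $c := \inf_{x\in\sat(\Phi)} g(x) > 0$. Form the non-strict closure $\overline{\Phi} = \{g_1\geq 0,\ldots,g_k\geq 0\}$ by replacing every $>$ with $\geq$. The strategy is to apply Theorem~\ref{theorem_put_pos} to the polynomial $g - c/2$ relative to $\overline{\Phi}$: one of the $\sat(g_i\geq 0)$ is already compact by hypothesis, so it suffices to check that $\overline{\Phi}$ strictly entails $g - c/2 > 0$ on $\sat(\overline{\Phi})$. Putinar then yields SOS polynomials $h_0,\ldots,h_k$ with $g - c/2 = h_0 + \sum_{i=1}^k h_i\cdot g_i$, and taking $y_0 := c/2 \in (0,\infty)$ gives exactly the claimed representation.

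The crux is therefore to verify the entailment $\overline{\Phi}\models g - c/2 > 0$. Since $\sat(\Phi)\subseteq\sat(\overline{\Phi})$ and $\sat(\overline{\Phi})$ is closed, we have $\overline{\sat(\Phi)}\subseteq \sat(\overline{\Phi})$, and continuity of $g$ propagates the bound $g\geq c$ from $\sat(\Phi)$ to its closure. The remaining step, and the main obstacle, is the reverse inclusion $\sat(\overline{\Phi})\subseteq\overline{\sat(\Phi)}$, i.e.~a density statement saying that points satisfying the non-strict system can be approximated by points strictly satisfying $\Phi$. I expect to establish this via a semi-algebraic perturbation argument that exploits satisfiability of $\Phi$: fix some witness $x^{\star}\in\sat(\Phi)$ and, for any $x_0\in\sat(\overline{\Phi})$, consider the segment $x_0 + t(x^{\star} - x_0)$ for $t\in(0,1]$; each $g_i$ evaluated along this segment is a univariate polynomial in $t$ that is strictly positive at $t=1$ (for strict $\bowtie_i$) and non-negative at $t=0$, so all strict inequalities hold for all sufficiently small $t>0$, producing the required approximating sequence in $\sat(\Phi)$. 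Once this density fact is in place, the compactness assumption ensures $g$ attains a value no smaller than $c$ at every point of $\sat(\overline{\Phi})$, so in particular $g - c/2 \geq c/2 > 0$ there, and the reduction to Theorem~\ref{theorem_put_pos} closes the argument.
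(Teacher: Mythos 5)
Your overall strategy---pass to the closure $\overline{\Phi}$, apply Theorem~\ref{theorem_put_pos} to $g - c/2$, and absorb $c/2$ into $y_0$---is the intended route; the paper does not prove this corollary itself (it defers to its cited reference), but its own proof of Theorem~\ref{technical_tool_1} leans on exactly the same closure manoeuvre. Your $(\Leftarrow)$ direction is fine. The gap is precisely where you suspected it: the inclusion $\sat(\overline{\Phi}) \subseteq \overline{\sat(\Phi)}$, and the segment argument does not establish it. First, a univariate polynomial that is non-negative at $t=0$ and strictly positive at $t=1$ need not be positive for small $t>0$: take $p(t) = t^2 - t/2$, which vanishes at $0$, equals $1/2$ at $1$, yet is negative on all of $(0,1/2)$. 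So a strict constraint can fail at every point of your segment near $x_0$. Second, the non-strict constraints need not be preserved along the segment either, since $\sat(\overline{\Phi})$ is not convex in general (e.g.\ $g_j(x)=x^2-1$ with endpoints $-1$ and $1$).

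More seriously, the inclusion you are trying to prove is false, so no repair of the perturbation argument can close the proof. With $g_1(x) = x^2\cdot(x-1)$ and $\bowtie_1$ strict, $\sat(\Phi) = (1,\infty)$ and $\overline{\sat(\Phi)} = [1,\infty)$, whereas $\sat(\overline{\Phi}) = \{0\} \cup [1,\infty)$: the isolated zero of $g_1$ survives closing the formula but is not a limit of strictly feasible points. Worse, adding $g_2(x) = 4 - x^2 \geq 0$ for compactness and taking $g(x) = x - 1/2$ gives $g \geq 1/2$ on $\sat(\Phi) = (1,2]$, i.e.\ $g \gg 0$, yet any representation $g = y_0 + h_0 + h_1\cdot g_1 + h_2 \cdot g_2$ with $y_0 > 0$ and SOS $h_i$ would force $g(0) \geq y_0 > 0$, contradicting $g(0) = -1/2$. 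The entailment Putinar actually requires is positivity of $g$ over $\sat(\overline{\Phi})$, not over $\overline{\sat(\Phi)}$, and these sets differ in general. Your argument (and the corollary, read literally) goes through only under an extra hypothesis such as $\sat(\overline{\Phi}) = \overline{\sat(\Phi)}$, or if strong positivity is assumed over $\sat(\overline{\Phi})$; you should state and use such a hypothesis explicitly rather than attempt the density claim in full generality.
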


Trying to use the corollary above for handling standard constraints of form~\eqref{eq:entailment-form} in Step 3 of our algorithm leads to two problems: (i)~we also need a criterion for unsatisfiability of $\Phi$ to handle the cases where $\Phi \Rightarrow \psi$ holds vacuously, and (ii)~in our QP, we should somehow express the property that every $h_i$ is a sum of squares. We now show how each of these challenges can be handled. To handle (i), we need another classical theorem from real algebraic geometry.

\begin{theorem}[{\cite[Corollary 4.1.8]{bochnak2013real}\label{theorem_real} the Real Nullstellensatz}]
Given polynomials $g,g_1,\dots,g_k \in \mathbb{R}[x_1,\dots,x_n]$, exactly one of the following two statements holds:
\begin{compactitem}
    \item There exists $x \in \mathbb{R}^n,$ such that $g_1(x) = \dots= g_k(x) = 0$, but $g(x) \neq 0$.
    \item There exists $\alpha \in \mathbb{N}\cup \{0\}$ and polynomials $h_1,\dots,h_k \in \mathbb{R}[x_1,\dots,x_n]$ such that $\sum_{i=1} ^k h_i \cdot g_i - h_0= g^{2 \cdot \alpha}$ and $h_0$ is a sum of squares.
\end{compactitem}
\end{theorem}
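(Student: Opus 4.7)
The statement is a dichotomy, so I will establish both that the two alternatives are mutually exclusive and that at least one of them must hold. The first direction is an algebraic substitution; the second will reduce to the standard ideal-theoretic Real Nullstellensatz (as found in \cite{bochnak2013real}), which I propose to use as a black box.

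\textbf{Mutual exclusivity.} To rule out that both alternatives hold simultaneously, I plan to substitute a hypothetical common zero $x \in \mathbb{R}^n$ of $g_1, \ldots, g_k$ with $g(x) \neq 0$ into the identity $\sum_{i=1}^k h_i \cdot g_i - h_0 = g^{2\alpha}$. The left-hand side collapses to $-h_0(x) \leq 0$ since $h_0$ is a sum of squares, while the right-hand side equals $g(x)^{2\alpha} \geq 0$. When $\alpha \geq 1$ the right-hand side is strictly positive by the assumption $g(x) \neq 0$, and when $\alpha = 0$ it equals $1$; both cases contradict the non-positivity of the left-hand side.

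\textbf{Exhaustiveness.} Assuming the first alternative fails, every real common zero of $g_1, \ldots, g_k$ is also a zero of $g$. Setting $I = (g_1, \ldots, g_k) \subseteq \mathbb{R}[x_1, \ldots, x_n]$, this says exactly that $g$ lies in the vanishing ideal of the real variety $V_{\mathbb{R}}(I)$. I will invoke the ideal-theoretic form of the Real Nullstellensatz, which identifies this vanishing ideal with the real radical
\[
\sqrt[\mathbb{R}]{I} = \{\, f \in \mathbb{R}[x_1,\ldots,x_n] \mid \exists\, m \in \mathbb{N},\; \exists\, \sigma \text{ a sum of squares},\; f^{2m} + \sigma \in I \,\}.
\]
Applied to $g$, this produces $m \in \mathbb{N}$, polynomials $h_1, \ldots, h_k \in \mathbb{R}[x_1, \ldots, x_n]$, and a sum of squares $h_0$ such that $g^{2m} + h_0 = \sum_{i=1}^k h_i \cdot g_i$. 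Rearranging gives the desired certificate with $\alpha = m$.

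\textbf{Main obstacle.} All of the genuine mathematical depth is concentrated in the identification of the vanishing ideal with the real radical, whose proof typically goes through the Artin--Lang homomorphism theorem or Tarski's transfer principle between $\mathbb{R}$ and a real closure of a suitable function field, passing through the Positivstellensatz of Krivine and Stengle. I propose to treat this underlying result as established (cf.~\cite{bochnak2013real}), so that the stated theorem reduces to unpacking the definition of $\sqrt[\mathbb{R}]{I}$ as above.
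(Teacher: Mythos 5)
Your proposal is correct: the evaluation argument for mutual exclusivity is sound, and the reduction of exhaustiveness to the identity $I(V_{\mathbb{R}}(g_1,\dots,g_k)) = \sqrt[\mathbb{R}]{(g_1,\dots,g_k)}$, followed by unpacking the definition of the real radical, is exactly how the cited Corollary 4.1.8 is obtained from the ideal-theoretic Real Nullstellensatz in Bochnak--Coste--Roy. The paper itself offers no proof of this statement (it is imported as a black box from that reference), so there is nothing further to compare against; your write-up correctly isolates where the genuine depth lies.
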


We now combine the Real Nullstellensatz with Putinar's Postivstellensatz to obtain a criterion for unsatisfiability of $\Phi.$

\begin{theorem}\label{technical_tool_1}
Consider a finite collection of polynomials $\{g_1,\dots,g_k \} \in \mathbb{R}[x_1,\dots,x_n]$ and the following system of inequalities:
\[
\Phi: \{g_1 \bowtie_1 0,  \dots , g_k \bowtie_k 0 \}
\]
where $\bowtie_i \in \{>,\geq \}$ for all $1 \leq i \leq k$. $\Phi$ is unsatisfiable if and only if at least one of the following statements holds:
\begin{compactitem}
    \item There exist a constant $y_0 \in (0,\infty)$ and sum of square polynomials $h_0,\dots,h_k \in \mathbb{R}[x_1,\dots,x_n]$ such that
    \begin{equation}\label{eqn:1}
     \textstyle     -1 = y_0 + h_0 + \sum_{i=1}^k h_i \cdot g_i.
    \end{equation}
    \item There exist $\alpha \in \mathbb{N} \cup \{0\}$ and $h_0, h_1,\dots,h_k \in \mathbb{R}[x_1,\dots,x_n,w_1,\dots,w_k]$, such that for some $1 \leq j \leq m $ with $\bowtie_j \in \{> \}$, we have 
    \begin{equation}\label{eqn:2}
\textstyle    w_j ^{4 \cdot \alpha} = \sum_{i=1} ^m h_i  \cdot (g_i - w_i ^2) - h_0.
    \end{equation}

    where $h_0$ is a sum of squares in $\mathbb{R}[x_1,\dots,x_n]$. Note that $w_1, \dots, w_k$ are new variables.
\end{compactitem}
\end{theorem}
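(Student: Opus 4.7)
The plan is to establish both directions of the equivalence, with the forward direction (certificates $\Rightarrow$ unsatisfiability) being routine and the reverse direction (unsatisfiability $\Rightarrow$ some certificate) being the substantive content.

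For the \emph{if} direction, I would argue by contradiction via direct substitution. Suppose $x^* \in \sat(\Phi)$. If condition \eqref{eqn:1} holds, then $g_i(x^*) \geq 0$ for all $i$, and each $h_i$ being a sum of squares gives $h_i(x^*) \geq 0$, so
\[
-1 \;=\; y_0 + h_0(x^*) + \sum_{i=1}^k h_i(x^*) \cdot g_i(x^*) \;\geq\; y_0 \;>\; 0,
\]
a contradiction. If condition \eqref{eqn:2} holds, set $w_i^* := \sqrt{g_i(x^*)}$, which is well-defined since $x^* \in \sat(\Phi) \subseteq \sat(\overline{\Phi})$. Plugging $(x^*, w^*)$ into \eqref{eqn:2} makes every summand $h_i \cdot (g_i - w_i^2)$ vanish, leaving $(w_j^*)^{4\alpha} = -h_0(x^*) \leq 0$. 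Since $(w_j^*)^{4\alpha} \geq 0$, this forces $w_j^* = 0$, i.e.\ $g_j(x^*) = 0$, which contradicts $\bowtie_j \in \{>\}$.

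For the \emph{only if} direction, assume $\Phi$ is unsatisfiable and split on whether $\overline{\Phi}$ is satisfiable. In Case~A, $\overline{\Phi}$ itself is unsatisfiable. Under the implicit compactness hypothesis inherited from the boundedness assumption of Section~\ref{sec:problem-def}, the unsatisfiability characterization in Theorem~\ref{theorem_put_pos} (and its strict-inequality extension Corollary~\ref{thm_pos_put_2}) yields a Putinar certificate $-1 = \tilde h_0 + \sum_i h_i g_i$ with every $h_i$ a sum of squares. Since any positive real $y_0$ is itself a square, I can rewrite this as $-1 = y_0 + h_0 + \sum_i h_i g_i$ with $h_0 := \tilde h_0$ still SOS, obtaining precisely \eqref{eqn:1}.

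In Case~B, $\overline{\Phi}$ is satisfiable but $\Phi$ is not, so the obstruction lies in the strict inequalities. I would introduce fresh variables $w_1, \ldots, w_k$ and consider the real variety $V \subseteq \mathbb{R}^{n+k}$ cut out by the equalities $g_i(x) - w_i^2 = 0$; its projection onto the $x$-coordinates is exactly $\sat(\overline{\Phi})$. Unsatisfiability of $\Phi$ is equivalent to saying that on $V$ there is no point at which all $w_j$ with $j \in J = \{j : \bowtie_j = >\}$ are simultaneously nonzero. I would then invoke the Real Nullstellensatz (Theorem~\ref{theorem_real}) for the ideal generated by $\{g_i - w_i^2\}_i$, applied to a single strict index $j \in J$: the first alternative would furnish a point of $V$ where $w_j \neq 0$, which, combined with the freedom to choose the other $w_i$'s up to sign, can be used to exhibit a point of $\sat(\Phi)$, contradicting unsatisfiability. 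Hence the second alternative applies for some $j \in J$, producing polynomials $h_0, h_1, \ldots, h_k$ with $h_0$ SOS and an exponent $\alpha$ satisfying $w_j^{2\alpha} = \sum_i h_i(g_i - w_i^2) - h_0$; squaring (i.e.\ doubling $\alpha$) normalizes the exponent to $4\alpha$ as required by \eqref{eqn:2}.

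The main obstacle is Case~B, specifically justifying that a Nullstellensatz certificate can be obtained for a \emph{single} index $j \in J$ rather than only for the product $\prod_{j \in J} w_j$. The argument relies on converting a hypothetical point of $V$ with $w_j \neq 0$ into an actual satisfying assignment of $\Phi$ by adjusting the signs of the other $w_i$'s; this step is where the interplay between $\overline{\Phi}$ being satisfiable and $\Phi$ being unsatisfiable is really used, and it requires checking that no additional hidden constraints prevent the sign-adjustment from landing in $\sat(\Phi)$.
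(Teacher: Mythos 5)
Your \emph{if} direction is correct and matches the paper's (your direct evaluation of $w_j^{4\alpha}$ at $(x^*,w^*)$ with $w_i^*=\sqrt{g_i(x^*)}$ is in fact a cleaner route than the paper's detour through $g_j^{2\alpha}$). Your Case~A is also the paper's argument up to a fixable slip: you cannot take an existing certificate $-1=\tilde h_0+\sum_i h_i g_i$ and simply ``add'' a positive $y_0$, since $\tilde h_0-y_0$ need not remain a sum of squares. The paper instead applies Putinar to the vacuously entailed inequality $-2>0$, obtaining $-2=h_0+\sum_i h_i g_i$ and hence $-1=1+h_0+\sum_i h_i g_i$.

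The genuine gap is exactly the one you flag in Case~B, and your proposed repair does not work. If the first alternative of the Real Nullstellensatz holds for a single strict index $j$, you obtain a point of $V$ with $w_j\neq 0$, i.e.\ a point $x$ with $g_i(x)\ge 0$ for all $i$ and $g_j(x)>0$. ``Adjusting the signs of the other $w_i$'s'' cannot upgrade this to a point of $\sat(\Phi)$: the sign of $w_{j'}$ carries no information because $w_{j'}^2=g_{j'}(x)$ is already determined by $x$, and if $g_{j'}(x)=0$ for some other strict index $j'$ then no choice of sign makes $g_{j'}>0$ hold. So the contradiction you need is unavailable for an arbitrary $j\in J$, and (as you suspect) the product $\prod_{j\in J}w_j$ yields a certificate in the wrong format. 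The missing idea in the paper is a careful choice of $j$: reorder the constraints so that the non-strict ones come first and let $j$ be the \emph{smallest} index such that the prefix $\Phi[1\dots j]$ is unsatisfiable. Then $\Phi[1\dots j-1]$ is satisfiable and entails $g_j\le 0$; passing to closures, $\overline{\Phi}[1\dots j]$ entails $g_j=0$, so there is no common zero of $\tilde g_1,\dots,\tilde g_j$ at which $g_j\neq 0$. The Real Nullstellensatz is then applied only to the prefix $\{\tilde g_1,\dots,\tilde g_j\}$ with target polynomial $g_j$ (not $w_j$), giving $g_j^{2\alpha}=\sum_{i\le j}\tilde h_i\cdot\tilde g_i-h_0$, and the binomial expansion of $g_j^{2\alpha}=(\tilde g_j+w_j^2)^{2\alpha}$ converts this into the required $w_j^{4\alpha}$ form, with $h_i=0$ for $i>j$. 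Relatedly, your ``squaring to double $\alpha$'' does not preserve the certificate's shape; one should instead multiply the identity through by $w_j^{2\alpha}$, or proceed via $g_j$ as above. (Be aware that even the paper's closure step $\sat(\overline{\Phi}[1\dots j-1])=\overline{\sat(\Phi[1\dots j-1])}$ is delicate for semi-algebraic sets, so this case is genuinely where all the difficulty lives.)
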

\begin{proof}
First we show that if any of the two equalities \eqref{eqn:1} or \eqref{eqn:2} holds then $\Phi$ is unsatisfiable. Suppose $\Phi$ is satisfiable and pick $\val \in \sat(\Phi)$. Then, the RHS of~\eqref{eqn:1} is positive at $\val,$ whereas the LHS is negative. So,~\eqref{eqn:1} cannot hold. Now define $\tilde{g_i}(x_1,\dots,x_n,w_1,\dots,w_k) :=  g_i(x_1,\dots,x_n) - w_i^2.$  Using this definition, we can rewrite \eqref{eqn:2} as $w_j ^{4.\alpha} = \sum_{i=1} ^m h_i \cdot \tilde{g_i} - h_0$. Moreover, $g_j^{2 \cdot \alpha} = (\tilde{g_j} + w_j^2)^{2 \cdot \alpha}.$ Expanding the right hand side using the binomial theorem, we get $g_j^{2 \cdot \alpha} = w_j^{4 \cdot \alpha} + h_j'\cdot \tilde{g_j}$ for some $h_j' \in \mathbb{R}[x_1,\dots,x_n,w_1,\dots,w_k]$. Now, substituting $w_j^{4 \cdot \alpha}$ \eqref{eqn:2}, we get
\begin{equation}\label{eqn:3}
\textstyle g_j^{2 \cdot \alpha} = \sum_{i=1}^m h_i \cdot \tilde{g_i} - h_0 + h_j' \cdot \tilde{g_j} 
\end{equation}
Let us extend $\val$, which is a valuation of $\{v_1,\dots ,v_n\}$ to a valuation $\val'$ over $\{v_1,\dots,v_n,w_1,\dots,w_k\}$ such that $\val' \models \tilde{g_i}(v_1,\dots,v_n,w_1,\dots,w_k) = 0$ for all $1 \leq i \leq k$. Note that such an extension is always possible. We get a contradiction by evaluating \eqref{eqn:3} on $\val'$ as the LHS is positive, whereas the RHS is negative.

We now prove the other side. Suppose $\Phi$ is unsatisfiable. We have two possibilities: either $\overline{\Phi}$ is also unsatisfiable or $\overline{\Phi}$ is satisfiable. Suppose $\overline{\Phi}$ is unsatisfiable, then using Theorem \ref{theorem_put_pos}, $\overline{\Phi}$ entails $-2 > 0$ and we can write $-2 = h_0 + \sum_{i=1}^k h_i \cdot g_i$ for sum of squares polynomials $h_0,\dots,h_k \in \mathbb{R}[x_1,\dots,x_n]$. Therefore,
$$
\textstyle -1 = 1 + h_0 + \sum_{i=1}^k h_i \cdot g_i
$$
which fits into \eqref{eqn:1}. Now we are left with the case when $\overline{\Phi}$ is satisfiable but $\Phi$ is unsatisfiable. We first reorder the inequalities in $\Phi$ such that the non-strict inequalities appear first in the order. Let $j$ be the smallest index for which $\Phi[1\dots j]$, i.e. the set of first $j$ inequalities in $\Phi$, is unsatisfiable. By definition, $\Phi[1\dots j-1]$ is satisfiable and hence $\overline{\sat(\Phi[1\dots j-1])} = \sat(\overline{\Phi}[1\dots j-1])$. We can rewrite $\Phi[1\dots j] = \Phi[1\dots j-1]\wedge (g_j > 0) $. As $\Phi[1\dots j] $  is unsatisfiable, we know that $\Phi[1\dots j-1] $ entails $g_j \leq 0$. More precisely, this means $\sat(\Phi[1\dots j-1]) \subseteq \sat(g_j \leq 0)$. On taking closures on both sides we get $\emph{SAT}( \overline{\Phi}[1\dots j-1]) \subseteq \emph{SAT}(g_j \leq 0)$. This implies that $\overline{\Phi}[1,\dots j-1]$ entails $g_j \leq  0$ and hence $\overline{\Phi}[1\dots j]$ entails $g_j = 0$. As defined above, $\tilde{g_i}(x_1,\dots,x_n,w_1,\dots,w_k) = g_i(x_1,\dots,x_n) - w_i^2$. Now, we will show that there is no valuation $\val^*$ over the variables $\{x_1,\dots, x_n,w_1,\dots,w_k\}$ such that for all $1\leq i \leq j$, $\tilde{g_i}(\val^*) = 0$ but $g_j(\val^*) \neq 0$. Suppose there exist such a valuation $\val^*$. Let us define $\val$ to be the restriction of $\val^*$ to $\{x_1,\dots,x_n\}$. For each $1 \leq i \leq j $, we get $g_i(\val) \geq  0$ as $\tilde{g_j}(\val^*) = 0$. We also get $g_j(\val) = g_j(\val^*) \neq 0 $. Hence, we get a contradiction with the previous result that $\overline{\Phi}[1\dots j] $ entails $g_j = 0$. Applying the Real Nullstellensatz (Theorem \ref{theorem_real}) to $\tilde{g_i}$'s and $g_j$, we have
$$
\textstyle g_j^{2\cdot \alpha} = \sum_{i=1}^j \tilde{h_i} \cdot \tilde{g_i} - h_0
$$
where $\alpha$ is a non-negative integer and $\tilde{h_i}$'s and $h_0$ are sums of sqaures in $\mathbb{R}[x_1,\dots,x_n,w_1,\dots,w_k]$. Using the definition of $g_j = \tilde{g_j} + w_j^2$ and the binomial theorem, we get $g_j^{2 \cdot \alpha} = w_j^{4 \cdot \alpha} + h_j' \cdot \tilde{g_j}$ for some $h_j' \in \mathbb{R}[x_1,\dots,x_n,w_1,\dots,w_k]$. Therefore, we finally get the following expression:
$$
\textstyle w_j^{4 \cdot \alpha} = \sum_{i=1}^j \tilde{h_i} \cdot (g_i-w_i^2) - h_j' \cdot (g_j - w_j^2) - h_0
$$
which fits into the format of~\eqref{eqn:2}, hence completing the proof.
\end{proof}

Finally, we provide the needed theorems to check that a certain polynomial $h$ is a sum of squares using QP.

\begin{theorem}[{\cite[Theorem 3.39]{blekherman2012semidefinite}}]\label{thm:sos_semi}
	
	Let $\vec{a}$ be the vector of all $\binom{n+d}{d}$ monomials of degree less than or equal to $d$ over the variables $\{x_1, \dots, x_n\}.$ A polynomial $p \in \mathbb{R}[x_1,\dots,x_n]$ of degree $2 \cdot d$ is a sum of squares if and only if there exist a positive semidefinite matrix $Q$ of order $\binom{n+d}{d}$ such that $p = a^T \cdot Q \cdot a.$
\end{theorem}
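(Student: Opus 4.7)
The plan is to prove both implications via the Gram matrix representation of quadratic forms, exploiting the classical equivalence between positive semidefiniteness and the existence of a Cholesky-type factorization.

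For the easier direction, suppose $p = \vec{a}^T Q \vec{a}$ for some positive semidefinite matrix $Q$ of order $N := \binom{n+d}{d}$. Since $Q$ is symmetric and PSD, by the spectral theorem (or equivalently, by a Cholesky-like factorization) we can write $Q = L L^T$ for some real $N \times k$ matrix $L$. Then
\[
p = \vec{a}^T L L^T \vec{a} = (L^T \vec{a})^T (L^T \vec{a}) = \sum_{i=1}^{k} \bigl((L^T \vec{a})_i\bigr)^2.
\]
Since each entry of $L^T \vec{a}$ is an $\mathbb{R}$-linear combination of the monomials in $\vec{a}$, it lies in $\mathbb{R}[x_1, \dots, x_n]$, and hence $p$ is exhibited as a sum of squares.

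For the converse, suppose $p$ has degree $2d$ and is a sum of squares, say $p = \sum_{i=1}^{k} g_i^2$ with $g_i \in \mathbb{R}[x_1,\dots,x_n]$. The first step is to argue that without loss of generality every $g_i$ has degree at most $d$. Indeed, if some $g_i$ had degree $e > d$, the leading homogeneous components (taken at the maximum degree among the $g_i$) would contribute a nonzero SOS polynomial of degree strictly greater than $2d$, because a sum of squares of nonzero homogeneous polynomials of the same degree cannot vanish. This would contradict $\deg p = 2d$. Hence each $g_i$ can be expressed as $g_i = c_i^T \vec{a}$ for some coefficient vector $c_i \in \mathbb{R}^N$. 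We then compute
\[
p = \sum_{i=1}^{k} (c_i^T \vec{a})^2 = \sum_{i=1}^{k} \vec{a}^T c_i c_i^T \vec{a} = \vec{a}^T \!\left( \sum_{i=1}^{k} c_i c_i^T \right)\! \vec{a},
\]
so setting $Q := \sum_{i=1}^{k} c_i c_i^T$ yields the desired representation. This $Q$ is PSD because each rank-one summand $c_i c_i^T$ is PSD, and PSD matrices form a convex cone.

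The main obstacle is the degree bound in the converse direction: one must carefully justify that the squares in the SOS decomposition may be chosen of degree at most $d$, rather than having some $g_i$ of larger degree whose leading terms happen to cancel. The homogeneity/leading-term argument sketched above handles this, but a cleaner variant uses the fact that $\vec{a}^T Q \vec{a}$ restricted to the monomial basis of degree $\leq d$ captures precisely the polynomials of degree $\leq 2d$, so one can safely truncate. Once this technicality is dispatched, both directions reduce to the standard dictionary between PSD matrices and rank-one decompositions.
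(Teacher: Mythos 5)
Your proof is correct: it is the standard Gram-matrix argument (including the necessary leading-homogeneous-component observation that forces every $g_i$ in an SOS decomposition of a degree-$2d$ polynomial to have degree at most $d$), which is exactly the argument in the cited source; the paper itself states this result as a citation to \cite[Theorem 3.39]{blekherman2012semidefinite} without reproving it.
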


\begin{theorem}[Cholesky decomposition \cite{watkins2004fundamentals}]\label{thm:chole}
A symmetric square matrix $Q$ is positive semidefinite if and only if it has a Cholesky decomposition of the form $Q = L L^T$ where $L$ is a lower-triangular matrix with non-negative diagonal entries.
\end{theorem}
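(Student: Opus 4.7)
\medskip

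\noindent\textbf{Proof proposal.} The equivalence has an easy direction and an inductive direction; I will sketch both.

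For the $(\Leftarrow)$ direction, suppose $Q = LL^T$ with $L$ lower triangular. Then for any $x \in \mathbb{R}^n$,
\[
x^T Q x \;=\; x^T L L^T x \;=\; (L^T x)^T (L^T x) \;=\; \lVert L^T x \rVert_2^2 \;\geq\; 0,
\]
so $Q$ is positive semidefinite. This part uses nothing about the diagonal signs and is essentially immediate.

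For the $(\Rightarrow)$ direction, the plan is induction on $n$, the order of $Q$, using a block decomposition and a Schur complement argument. Write
\[
Q \;=\; \begin{pmatrix} a & b^T \\ b & C \end{pmatrix},
\]
where $a \in \mathbb{R}$, $b \in \mathbb{R}^{n-1}$, and $C \in \mathbb{R}^{(n-1)\times(n-1)}$ is symmetric. Since $Q$ is PSD, testing against $e_1$ gives $a \geq 0$. Split into two cases. If $a > 0$, set $\ell_{11} = \sqrt{a}$, $\ell_{21} = b/\sqrt{a}$, and consider the Schur complement $S := C - bb^T/a$. A short calculation shows
\[
Q \;=\; \begin{pmatrix} \sqrt{a} & 0 \\ b/\sqrt{a} & I \end{pmatrix} \begin{pmatrix} 1 & 0 \\ 0 & S \end{pmatrix} \begin{pmatrix} \sqrt{a} & b^T/\sqrt{a} \\ 0 & I \end{pmatrix},
\]
and using $x^T Q x \geq 0$ on vectors of the form $(-b^T y/a,\, y^T)^T$ one verifies that $S$ is symmetric PSD of order $n-1$. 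By induction $S = L' (L')^T$ for some lower-triangular $L'$ with non-negative diagonal, and the desired $L$ is assembled as
\[
L \;=\; \begin{pmatrix} \sqrt{a} & 0 \\ b/\sqrt{a} & L' \end{pmatrix}.
\]
In the degenerate case $a = 0$, the plan is first to show that $b = 0$: evaluating $x^T Q x \geq 0$ on $x = t\, e_1 + e_{i+1}$ yields $2t\, b_i + C_{ii} \geq 0$ for all $t \in \mathbb{R}$, which forces $b_i = 0$ for each $i$. Hence $Q$ reduces to $\mathrm{diag}(0, C)$ with $C$ PSD, and the inductive hypothesis applied to $C$ produces $L'$, from which $L = \mathrm{diag}(0, L')$ has the required form.

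The main obstacle is the $a = 0$ case, because the naive formula $L_{21} = b/\sqrt{a}$ breaks down; the key observation there is the PSD-forced vanishing of the entire first column of off-diagonal entries. The base case $n = 1$ is trivial since $Q = (a)$ with $a \geq 0$ decomposes as $L = (\sqrt{a})$. No further machinery beyond elementary linear algebra is required.
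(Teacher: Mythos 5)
Your proof is correct. The paper does not prove this statement at all --- it is quoted as a classical result with a citation to Watkins's textbook --- so there is no in-paper argument to compare against; what you give is the standard inductive Schur-complement proof, and it is sound, including the one point that actually requires care: in the degenerate case $a=0$, positive semidefiniteness forces the whole first column below the diagonal to vanish (your test vectors $t\,e_1 + e_{i+1}$ handle this correctly), which is exactly why the decomposition still exists even though the pivot formula $b/\sqrt{a}$ breaks down.
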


Based on Theorems~\ref{thm:sos_semi} and~\ref{thm:chole}, a polynomial $p$ of degree $2 \cdot d$ is a sum of squares if and only if it can be written as
$
p = a^T \cdot L \cdot L^T \cdot a
$
such that diagonal entries of $L$ are non-negative. This representation provides us with a simple approach to generate a sum-of-squares polynomial of degree $2 \cdot d$ with symbolic coefficients and encoding them in QP. We first generate a lower triangular matrix $L$ of order $ \binom{n+d}{d}$ by creating one fresh variable for each entry in the lower triangle and adding the extra condition that the entries on the diagonal must be non-negative. Then, we symbolically compute $a^T \cdot L \cdot L^T \cdot a.$

\subsection{Details of Step 3 of the Algorithm}
\label{sec:algo:detail}

We now have all the necessary ingredients to provide a variant of Step 3 of the algorithm that preserves completeness. We assume a positive integer $d$ is given as part of the input. This $d$ serves as an upper-bound on the degrees of polynomials/templates that we use in Handelman's theorem (the monoid) and the Stellens\"atze. Our approach is complete as long as a large enough $d$ is chosen.

\paragraph{Step 3: Eliminating Program Variables and Reduction to QP} Recall that at the end of Step 2, we have a finite set of standard constraints of form~\eqref{eq:entailment-form}. In this step, the algorithm handles each standard constraint separately and reduces it to quadratic programming over template variables and newly-introduced variables, hence effectively eliminating the program variables and the quantification over them. Let 
$
	f_1 \bowtie_1 0 ~\wedge~ f_2 \bowtie_2 0 ~\wedge~ \ldots \wedge f_r \bowtie_r 0 \Rightarrow f \bowtie 0
$ be one of the standard constraints. The algorithm considers three cases: (i)~if all the inequalities on both sides of the constraint are affine, then it applies Farkas' Lemma; (ii) if the LHS inequalities are affine but the RHS is a higher-degree polynomial, then the algorithm applies Handelman's theorem; and (iii) if the LHS contains higher-degree polynomials, the algorithm applies the Stellens\"atze and Theorem~\ref{technical_tool_1}. Below, we define $\Phi : \{ f_1 \bowtie_1 0, f_2 \bowtie_2 0, \ldots , f_r \bowtie_r 0 \}$ and $\psi: f \bowtie 0.$

\paragraph{Step 3.(i). Applying Farkas' Lemma} Assuming all the constraints in $\Phi$ and $\psi$ are affine, we can apply Corollary~\ref{thm:farkas_2}. Based on this corollary, we have to consider three cases disjunctively:
    \begin{compactenum}
        \item \emph{$\Phi$ is satisfiable and entails $\psi$}: The algorithm creates new template variables $y_0,\dots,y_r$ with the constraint $y_i \geq 0$ for every $i$. It then symbolically computes
        $
        f = y_0 + \sum_{i=1}^r y_i \cdot f_i.
        $ The latter is a polynomial equality over $\vars.$ So, the algorithm equates the coefficients of corresponding monomials on both sides, hence reducing this case to QP. Additionally, if $\psi$ is a strict inequality, the algorithm adds the extra constraint $\sum_{\bowtie_i \in \{ > \}} y_i > 0.$
        
        \item \emph{$\Phi$ is unsatisfiable and $-1 \geq 0$ can be obtained}: This is similar to the previous case, except that $-1$ should be written as $y_0 + \sum_{i=1}^r y_i \cdot f_i.$
        
        \item \emph{$\Phi$ is unsatisfiable and $0 > 0$ can be obtained}: This is also similar to the last two cases. We have $0 = y_0 + \sum_{i=1}^r y_i \cdot f_i$ and $\sum_{\bowtie_i \in \{ > \}} y_i > 0.$
    \end{compactenum}
Note that the template variables $y$ are freshly generated in each case above. Also, we have to consider cases (2) and (3) because $\Phi$ is unsatisfiable in these cases and hence the constraint $\Phi \Rightarrow \psi$ always holds vacuously.

\paragraph{Step 3.(ii). Applying Handelman's Theorem} Assuming all constraints in $\Phi$ are linear but $\psi$ is a higher-degree polynomial inequality, the algorithm applies Corollaries~\ref{thm:farkas_2} and~\ref{thm:handel_2}. Again, we have to consider the same three cases as in Step 3.(i):
\begin{compactenum}
\item \emph{$\Phi$ is satisfiable and entails $\psi$}: We apply Corollary~\ref{thm:handel_2}. The algorithm first symbolically computes $\monoid_d(\Phi) = \{h_1, h_2, \dots, h_u\}.$ It then generates new template variables $y_0, y_1, \dots, y_u$ and constrains them by setting $y_0, y_1, y_2, \dots, y_u \geq 0.$ If $\psi$ is a strict inequality, it further adds the constraint $y_0 > 0.$ It then symbolically computes the equality
$$
\textstyle f = y_0 + \sum_{i=1}^u y_i \cdot h_i.
$$
As usual, this is an equality whose both sides are polynomials over $\vars.$ So, the algorithm equates the coefficients of corresponding monomials on the LHS and RHS, which reduces this case to QP.
\item Note that $\Phi$ consists of linear inequalities, so we can use Farkas' Lemma to check if $\Phi$ is unsatisfiable. As such, this step is the same as case (2) of Step 3.(i).
\item This is the same as case (3) of Step 3.(i).
\end{compactenum}

\paragraph{Step 3.(iii). Applying Stellens\"atze} 
If $\Phi$ includes polynomial inequalities of degree $2$ or larger, then we have to apply Corollary~\ref{thm_pos_put_2} and Theorem~\ref{technical_tool_1}. The algorithm considers three cases and combines them disjunctively:
\begin{compactenum}
	 \item \emph{$\Phi$ is satisfiable and entails $\psi$}: In this case, we apply Corollary~\ref{thm_pos_put_2}. The algorithm generates template sum-of-squares polynomials $h_0, \ldots, h_r$ of degree $d$ and adds QP constraints that ensure each $h_i$ is a sum of squares (See the end of Section~\ref{section:toolkit}). It also generates a non-negative fresh variable $y_0$. If $\psi$ is strict, the algorithm adds the constraint $y_0 > 0.$ Finally, the algorithm symbolically computes
	 $$
\textstyle	 f = y_0 + h_0 + \sum_{i=1}^r h_i \cdot f_i;
	 $$
	 and equates the corresponding coefficients in the LHS and RHS to obtain QP constraints.
	 
	 \item \emph{$\Phi$ is unsatisfiable due to the first condition of Theorem~\ref{technical_tool_1}}: This case is handled similary to case (1) above, except that we have $-1 = y_0 + h_0 + \sum_{i=1}^r h_i \cdot f_i.$
	 
	 \item \emph{$\Phi$ is unsatisfiable due to the second condition of Theorem~\ref{technical_tool_1}}: The algorithm introduces $r$ new \emph{program variables} $w_1, \dots, w_r.$ It then generates a sum-of-squares template polynomial $h$ over $\vars$ and arbitrary template polynomials $h_1, \dots, h_k$ over $\vars \cup \{w_1, \dots, w_r\}.$ All $h_i$'s are degree-$d$ templates. Finally, for every index $j$ that corresponds to a strict inequality, i.e.~$\bowtie_j \in \{ > \},$ the algorithm symbolically computes
	 $$
\textstyle	 w_j^{d'} = \sum_{i=1}^r h_i \cdot (f_i - w_i^2) - h_0,
	 $$
	 in which $d'$ is the largest multiple of $4$ that does not exceed $d$. Note that this is an equality between two polynomials over $\vars \cup \{w_1, \dots, w_r\}.$ As before, the algorithm equates the coefficients of corresponding monomials on both sides of the equality and reduces it to QP.
\end{compactenum}

After the algorithm runs Step 3 as above, all standard constraints generated in Step 2 will be reduced to QP and can hence be passed to an external solver in Step 4, as illustrated in Section~\ref{sec:algo:overview}.

\paragraph{Degree bounds} We are using the same bound $d$ for the degree of the polynomials and templates in all cases above. This is not a requirement. One can fix different degree bounds for each case.

\paragraph{Handling Boundedness} To achieve completeness, we need the boundedness assumption, i.e.~that for every variable $v \in \vars,$ we always have $-M \leq v \leq M.$ To model this in the algorithm, we can add the boundedness inequalities to the left-hand side of every standard constraint. Additionally, we can get a concrete value for $M$ as part of the input, or treat $M$ symbolically, i.e.~as a template variable, and let the QP solver synthesize a value for it.

\subsection{Soundness and Completeness}\label{sec:algo_sound_complete}
\label{sec:algo:proof}

We now prove that our algorithm is sound and semi-complete for TBSP. The soundness result needs no extra assumptions and can be obtained directly. The completeness on the other hand relies on several assumptions: (i)~boundedness, (ii)~having chosen a large enough degree bound $d$, and (iii)~having invariants and post-conditions that, when written in DNF form, consist only of \emph{strongly positive} polynomial inequalities of the form $g \gg 0.$

\begin{theorem}[Soundness]
	Given a sketch polynomial program or equivalently a sketch polynomial transition system (SPTS) of the form $(\vars, \tvars, \locs,\loc_0,\theta_0,\loc_f, \theta_f, \transitions, \invariant)$ together with a cutset $\cutset$ as input, every concrete polynomial transition system (PTS) synthesized by the algorithm above is inductively valid.
\end{theorem}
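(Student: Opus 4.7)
The plan is to show soundness by unwinding the algorithm layer by layer, verifying at each stage that the syntactic certificates produced actually imply the semantic property they are meant to encode. First, I would observe that it suffices to prove that, under the synthesized valuation $\tval$, every standard constraint of the form~\eqref{eq:entailment-form} generated in Step~2 holds for all valuations of $\vars$. This is because inductive validity of $\Gamma^{\tval}$ is, by definition, the conjunction of initiation, consecution and finalization over every basic path between cutpoints of $\cutset$, and Step~2 encodes exactly these conditions (after symbolic composition of the transition relations along each basic path and the standard DNF rewriting) as such entailments. So the real content of the proof is: if the QP instance emitted in Step~3 has a solution, then each $\Phi \Rightarrow \psi$ from Step~2 is valid under that solution.

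Second, I would handle the three sub-cases of Step~3 separately, in each case invoking only the ``easy'' direction of the relevant Stellensatz, namely that the algebraic representation forces the semantic entailment. For Step~3.(i), the equation $f = y_0 + \sum_{i=1}^{r} y_i \cdot f_i$ with $y_i \ge 0$ immediately yields $f \ge 0$ on $\sat(\Phi)$ since a non-negative combination of non-negative quantities is non-negative, and the side-condition $\sum_{\bowtie_i \in \{>\}} y_i > 0$ upgrades the conclusion to strict when $\psi$ is strict. The two unsatisfiability branches represent either $-1$ or $0$ on the RHS; evaluating at any supposed witness of $\Phi$ produces $-1 \ge 0$ or $0 > 0$, so $\Phi$ is actually unsatisfiable and the entailment holds vacuously. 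Step~3.(ii) is treated identically, with the further observation that every $h_i \in \monoid_d(\Phi)$ is by construction a product of the $f_i$'s, hence non-negative on $\sat(\Phi)$.

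Step~3.(iii) is the main obstacle, and I would again reduce to easy directions of the cited theorems. In the Putinar branch, $f = y_0 + h_0 + \sum h_i \cdot f_i$ with sum-of-squares $h_i$ is non-negative (strictly positive when $y_0 > 0$) on $\sat(\Phi)$, because sums of squares are pointwise non-negative; the SOS shape of each $h_i$ is guaranteed by the Cholesky-based encoding from Theorem~\ref{thm:sos_semi} and Theorem~\ref{thm:chole}. The two unsatisfiability alternatives are exactly those of Theorem~\ref{technical_tool_1}, and the first half of its proof already shows that each such certificate makes $\Phi$ unsatisfiable; I would simply cite this. Throughout, the only fiddly bookkeeping is the strict-versus-non-strict case analysis, which is why the algorithm introduces the designated witness constant $y_0$ and the ancillary variables $w_i$.

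Combining the three sub-cases, every standard constraint is valid under $\tval$; hence every initiation, consecution and finalization condition on every basic path holds; hence $\Gamma^{\tval}$ is inductively valid, which is the claim. The only genuine difficulty is the unsatisfiability-branch bookkeeping in case~(iii), which I would handle by reusing the argument already present in the proof of Theorem~\ref{technical_tool_1}; no new mathematical ingredient beyond Section~\ref{section:toolkit} is needed, and the boundedness assumption is not used anywhere in this direction.
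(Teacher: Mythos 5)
Your proposal is correct and follows essentially the same route as the paper's proof: reduce inductive validity to the validity of the Step-2 standard constraints, then verify that each branch of Step 3 uses only the ``easy'' (soundness) direction of Farkas/Handelman/Putinar, with the unsatisfiability branches rendering the entailment vacuous (the paper states this in two sentences; you have merely expanded the same case analysis). Your added observations --- that the strict-inequality side-conditions and the first half of the proof of Theorem~\ref{technical_tool_1} cover the remaining bookkeeping, and that boundedness is not needed for this direction --- are consistent with the paper.
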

\begin{proof}
The standard constraints of form~\eqref{eq:entailment-form} generated in Step 2 are equivalent to the initiation, consecution and finalization constraints in the definition of inductive validity. The reduction from standard constraints to QP in Step 3 is also sound since, in every case, it either writes the RHS of the standard constraint as a combination of the LHS polynomials, hence proving that it holds, or otherwise proves that the LHS is unsatisfiable and thus the standard constraint holds vacuously. 
\end{proof}

\begin{theorem}[Semi-completeness] \label{thm:complete}
	Consider a solvable sketch polynomial transition system (SPTS) of the form $(\vars, \tvars, \locs,\loc_0,\theta_0,\loc_f, \theta_f, \transitions, \invariant),$ together with a cutset $\cutset,$ that is given as input. Moreover, assume that:
	\begin{compactenum}
		\item The boundedness assumption holds, i.e.~there is a constant $M \in (0, \infty)$ such that for every $v \in \vars,$ we always have $-M \le v \le M.$
		\item Every invariant $\invariant(\loc)$ and post-condition $\theta_f,$ when written in disjunctive normal form, contains only strongly positive polynomial inequalities of the form $g \gg 0.$
	\end{compactenum}
	Then, there exists a constant degree bound $d \in \mathbb N$, for which the algorithm above is guaranteed to successfully synthesize an inductively valid polynomial transition system (PTS).
\end{theorem}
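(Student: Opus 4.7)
The plan is to prove completeness by exhibiting, for a sufficiently large degree bound $d$, a feasible solution of the QP instance produced by Step 3. Since the input SPTS is solvable, fix a witness valuation $\tval^{*}\colon\tvars\to\mathbb{R}$ such that $\Gamma^{\tval^{*}}$ is inductively valid. Substituting $\tval^{*}$ into every standard constraint $\Phi \Rightarrow \psi$ produced in Step 2 yields a valid entailment between concrete polynomials over $\vars\cup\vars'$. The strategy is then to show that, for $d$ large enough, each such entailment admits a certificate of the form required by the algorithm's QP encoding; extending $\tval^{*}$ by the corresponding values of the auxiliary multipliers, coefficients of SOS templates, and (in the unsatisfiable branch) values for the introduced witness variables $w_i$ and the exponent $\alpha$, then yields a feasible solution of the QP that the solver in Step 4 must find.

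The core of the argument dispatches each standard constraint according to the case chosen in Step 3. In Case (i), where both sides are affine, Corollary~\ref{thm:farkas_2} directly produces the required non-negative coefficients with $d=1$. In Case (ii), with affine $\Phi$ but a polynomial RHS, the boundedness assumption makes $\sat(\Phi)$ compact (after adjoining the inequalities $M\pm v\ge 0$ to the LHS), and hypothesis (2) guarantees that the RHS inequality, whenever it descends from an invariant or post-condition, is strongly positive; thus Corollary~\ref{thm:handel_2} supplies a representation $f = y_0 + \sum_i y_i\,h_i$ with $h_i\in\monoid(\Phi)$, whose monomials have some finite maximum degree $d_\kappa$. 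In Case (iii), with a polynomial LHS, either $\Phi$ is satisfiable and entails $\psi\gg 0$, in which case Corollary~\ref{thm_pos_put_2} provides SOS polynomials $h_0,\dots,h_r$ of some finite degree $d_\kappa$; or $\Phi$ is unsatisfiable, in which case Theorem~\ref{technical_tool_1} furnishes either a Putinar-style certificate for $-1$ or a Real Nullstellensatz certificate involving some $\alpha\in\mathbb{N}$. Because there are finitely many standard constraints $\kappa$, taking $d$ to be the maximum of the per-constraint degree bounds $d_\kappa$ (and the largest $4\alpha$ from Theorem~\ref{technical_tool_1}) gives a single global bound that simultaneously validates every encoding.

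The main obstacle, and the reason the theorem is only \emph{semi}-complete, is that the degree bounds $d_\kappa$ supplied by Handelman's Theorem, Putinar's Positivstellensatz, and the Real Nullstellensatz are purely existential: none of these classical results provides an a priori, computable bound on the degree of the certifying polynomials from the data $\Phi,\psi$. Consequently, the argument shows only that \emph{some} finite $d$ works, not how to compute it from the input — matching the theorem's wording. Two subsidiary points must be carefully checked in the write-up: first, the DNF handling, which relies on the standard fact that $(\bigvee_i \Phi_i)\Rightarrow(\bigwedge_j\psi_j)$ decomposes into the conjunction of the atomic implications $\Phi_i\Rightarrow\psi_j$, so that Step~2 indeed reduces every entailment to the standard form~\eqref{eq:entailment-form}; and second, that hypothesis (2) ensures each $\psi_j$ arising from invariants or the post-condition is a \emph{strongly} positive inequality $g\gg 0$, which is exactly the form demanded by the ``only if'' directions of Corollaries~\ref{thm:handel_2} and~\ref{thm_pos_put_2}. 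Once feasibility of the QP is established, soundness (proved in the preceding theorem) guarantees that any solution returned by the external solver yields an inductively valid PTS, completing the proof.
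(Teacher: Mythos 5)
Your proposal follows essentially the same route as the paper's proof: fix a witness valuation from solvability, check completeness of Step 2, and then dispatch each standard constraint through the case analysis of Step 3 using Corollaries~\ref{thm:farkas_2}, \ref{thm:handel_2}, \ref{thm_pos_put_2} and Theorem~\ref{technical_tool_1}, with boundedness supplying compactness and hypothesis (2) supplying strong positivity. Your write-up is in fact somewhat more careful than the paper's (e.g., explicitly taking the maximum of the finitely many per-constraint degree bounds and noting the DNF decomposition), but it is the same argument.
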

\begin{proof}
	Since our instance is solvable, there is a valuation $\tval$ for the template variables that yields an inductively valid PTS. We prove that for large enough $d$, the valuation $\tval$ is obtained by one of the solutions of the QP instance solved in Step 4 of the algorithm. The proof is pretty straightforward since we just have to check that all steps of our algorithm are complete. Step 2 is complete since it simply rewrites the inductive validity constraints as an equivalent set of standard constraints. For Step 3, we prove completeness of each case separately. Step 3.(i) is complete due to Farkas' Lemma (Corollary~\ref{thm:farkas_2}). In Step 3.(ii), if $\Phi$ is unsatisfiable, then the algorithm is complete based on Corollary~\ref{thm:farkas_2}. Otherwise, it is complete based on Corollary~\ref{thm:handel_2}. However, since we are using a degree bound $d$ for the monoid, the completeness only holds if the chosen $d$ is large enough. Moreover, Corollary~\ref{thm:handel_2} requires strong positivity of $g$, which corresponds to invariants and post-conditions in our use-case, and boundedness of $\sat(\Phi),$ which is a direct consequence of our boundedness assumption. Finally, Step 3.(iii) is complete due to Corollary~\ref{thm_pos_put_2} and Theorem~\ref{technical_tool_1}. These depend on $d$, strong positivity and boundedness in the exact same manner as in the case of Step 3.(ii).
\end{proof}

\paragraph{Limitations of Completeness} The main limitation in our completeness result is that it holds only if the degree bound $d$ chosen for the template polynomials is large enough. This is why we call it a \emph{semi-}completeness theorem. In theory, it is possible to come up with adversarial instances in which the required degree is exponentially high~\cite{mai2022complexity}. In practice, we rarely, if ever, need to use a higher degree than that of the polynomials that are already part of the input SPTS. The second limitation is boundedness. This limitation cannot be lifted since both Handelman's Theorem and Putinar's Positivstellensatz assume compactness, which is equivalent to being closed and bounded in $\mathbb R^n$. Nevertheless, it does not have a significant practical effect and the algorithm remains sound even without this assumption. It is also noteworthy that the treatment of the linear/affine case using Farkas' Lemma requires neither boundedness nor any specific value of $d$ and is always complete. Finally, strong positivity in the invariants and post-conditions is needed because Putinar's Positivstellensatz and Theorem~\ref{technical_tool_1} can only provide a sound and complete characterization of strongly positive polynomials over a bounded semi-algebraic set $X \subseteq R^n$ if we do not assume that $X$ itself is closed. In terms of the synthesis problem, this means that our algorithm is not guaranteed to be complete for inequalities of the form $f > 0$ in the invariants/post-conditions for which the value of $f$ in the runs of the program can get arbitrarily close to $0.$ However, this limitation is also not significant in practice because (i)~our soundness does not depend on it, and (ii)~$f + \epsilon \gg 0$ holds for any constant $\epsilon > 0.$ So, a small change (by any value $\epsilon > 0$) in the invariants/postconditions leads to an instance over which our completeness holds.

\section{Decidability and NP-hardness} \label{sec:complexity}

In this section, we study the decision variant of TBSP, in which we are given an SPTS of the form $\Gamma = (\vars, \tvars, \locs,\loc_0,\theta_0,\loc_f, \theta_f, \transitions, \invariant)$ and should decide whether there exists a valuation $\tval: \tvars \rightarrow \mathbb R$ for the template variables under which $\Gamma$ turns into an inductively valid PTS. 

\paragraph{Reduction to the First-order Theory of the Reals} Suppose $\vars = \{v_1, \dots, v_m\}$ and $\tvars = \{t_1, \dots, t_{m'}\}.$ Recall that Step 2 of our algorithm is sound and complete and reduces all the constraints to the form shown in Equation~\eqref{eq:firstorder}. However, in the synthesis algorithm, we are looking for a valuation to the template variables $\tvars$ that ensures~\eqref{eq:firstorder}. Hence, in the decision variant, we are interested in deciding the following first-order formula over the reals:
\begin{equation} \label{eq:tarski}
	\exists t_1, t_2, \ldots, t_{m'} ~\;~\; \forall v_1, v_2, \ldots, v_m ~\;~\;~ (\kappa_1 ~\wedge~ \kappa_2 ~\wedge~ \ldots ~\wedge~ \kappa_u)
\end{equation} 
It is well-known that the first-order theory of the reals is decidable and hence so is our problem. Specifically, one can apply Tarski's quantifier elimination method~\cite{tarski1949decision} to~\eqref{eq:tarski}. Note that while this proves decidability in theory, decision procedures and quantifier elimination on the first-order theory of the reals are notoriously unscalable and since~\eqref{eq:tarski} contains a quantifier alternation, it is practically beyond the reach of all modern solvers, even for toy programs.

\paragraph{NP-hardness} We prove our problem is strongly NP-hard, even for linear/affine transition systems, by providing a reduction from 3SAT. Consider a 3SAT instance 
$
\bigwedge_{i=1}^m \bigvee_{j=1}^3 l_{i, j}
$ over the boolean variables $\{x_1, \ldots, x_n\}, $ i.e.~each literal $l_{i, j}$ is either an $x_k$ or a $\neg x_k.$ We define the expression $e(l_{i,j})$ as $x_k$ if $l_{i, j} = x_k$ and $1-x_k$ if $l_{i, j} = \neg x_k.$ To reduce this 3SAT instance to our problem, we let $\vars = \{x_1, \ldots, x_n\}$ and $\tvars = \{t_1, \ldots, t_n\}.$ We then consider the sketch transition system corresponding to the following program:

\begin{lstlisting}[language=C,mathescape=true,numbers=none]
	@real: $x_1, x_2, \dots, x_n$
	@pre: $x_1 = t_1 ~\wedge~ x_2 = t_2 ~\wedge~ \dots ~\wedge~ x_n = t_n$;
	if ($x_1 > 0.5$)
		$x_1$ = $1$;
	else
		$x_1$ = $0$;
		$\vdots$
	if ($x_n > 0.5$)
		$x_n$ = $1$;
	else
		$x_n$ = $0$;   
	@post: $\bigwedge_{i=1}^m e(l_{i, 1})+e(l_{i, 2})+e(l_{i, 3}) > 0.5$;
\end{lstlisting}
Note that the template variables are appearing in the pre-condition. It is easy to see that the synthesis instance has a solution if and only if the 3SAT instance is satisfiable.

\section{Implementation and Experimental Results}\label{secion_proof_of_concept} \label{sec:exper}

In this section, we report on a prototype implementation of our algorithm and our experimental results.

\paragraph{Implementation} We implemented our approach in Python and used SymPy~\cite{meurer2017sympy} for symbolic computations. We also used the Z3 SMT solver~\cite{moura2008z3} to handle the final QP instances. In all cases, we used quadratic templates for the holes and set the technical parameter (degree upper-bound) $d$ to $2.$

\paragraph{Experimental Setting} All experimental results were obtained on an Intel i9-10980HK Processor (8 cores, 16 threads, 5.3 GHz, 16 MB Cache) with 32 GB of RAM running Microsoft Windows 10. 

\paragraph{Heuristics} We used the following simple heuristics to speed up the solution of the QP instance by the SMT solver:
\begin{compactitem}
	\item \emph{Simplifying Standard Constraints:} We apply the simplification procedure illustrated in Example~\ref{ex:step2}. Specifically, it is often the case that for some $v_1, v_2 \in \vars \cup \vars'$, the standard constraint contains $v_1 = v_2$ on its left-hand side. In these cases, we merge $v_1$ and $v_2$ into a single variable. Similarly, if the left-hand side of a standard constraint has $v_1 = c$ for a real constant $c$, then we simply replace every occurrence of $v_1$ with $c.$  
	\item \emph{Strengthening the Constraints:} Our algorithm introduces a large number of template variables, not only in Step 1, but also in Step 3, where each of the Stellens\"atze lead to the creation of new coefficient variables. Such new variables are created, for example, in the linear combinations appearing in Farkas/Handelman and the sum-of-squares template polynomials in Putinar. For most real-world cases, the vast majority of the template variables should be set to $0$ in solutions to the QP. However, their inclusion makes the QP much larger and more complicated and can even lead to the failure of other heuristics used by the SMT solver. Hence, we use a process which is intuitively the opposite of abstraction-refinement. We start with a strengthened version of our QP instance and repeatedly make it weaker, but never weaker than the original instance, until the SMT solver finds a solution. Specifically, we first add additional constraints of the form $t=0$ for every template variable $t$. This strengthens the constraints but makes the problem easier for the SMT solver, since it can now simply plug in the $0$ values and solve a lower-dimensional QP. Additionally, it would very likely lead to an unsatisfiable QP. However, Z3 can then provide us with an unsatisfiability \emph{core}, i.e.~a minimal subset of QP constraints that are unsatisfiable. If the core does not include any of our $t=0$ constraints, then we know that the original QP is also unsatisfiable. Otherwise, we simply remove all the $t=0$ constraints in the core and repeat the same process.
\end{compactitem}

\paragraph{Experimental Results} We now illustrate several programs that we synthesized using our implementation. In each case, we show the completed program, in which the synthesized parts are boxed. Due to space limitations, some examples are relegated to Appendix~\ref{app:ex}. Table~\ref{tab:runtime} provides a summary of the runtimes of our algorithm over these examples.

\begin{figure}
\begin{tabular}{c}
\begin{lstlisting}[language=C,numbers=none,mathescape=true]
	@prog: ClosestCubeRoot
	@real: $a,x,s,r$;
	@pre: $a \ge 0$;
	$x = a$;
	$r = 1$;
	s = 3.25;
	@invariant: $x \ge 0 \wedge -12 \cdot r^2 + 4 \cdot s = 1 \wedge 4 \cdot r^3 - 6 \cdot r^2 + 3 \cdot r + 4 \cdot x - 4 \cdot a = 1$ 
	while ($x-s \ge 0$)
	{
		$x = \Hole{x-s}$;
		$s = \Hole{3+s+6\cdot r}$;
		$r = r + 1$;
	}
	@post: $4 \cdot r^3 + 6 \cdot r^2 + 3 \cdot r \geq 4 \cdot a \geq 4 \cdot r^3 - 6 \cdot r^2 + 3 \cdot r - 1$
\end{lstlisting}

	\end{tabular}
\end{figure}

\begin{figure}
	\begin{tabular}{ccc}
		\begin{lstlisting}[language=C,numbers=none,mathescape=true]
@prog: ClosestSquareRoot
@real: $a,x,r$;
@pre: $a \ge 1$;
$x = 0.5 \cdot a$;
$r = 0$;
@invariant: $a = \Hole{2 \cdot x + r^2 - r} ~\wedge~ x \ge 0$
while ($x \ge r$) 
{
	$x = \Hole{x-r}$;
	$r = r + 1$;
}
@post: $r^2-r \ge a-2 \cdot r ~\wedge~ r^2 - r \le a$
		\end{lstlisting}
		&
		
\begin{lstlisting}[language=C,numbers=none,mathescape=true]
@prog: SquareRootFloor
@real: $a,su,t,n$;
@pre: $n \ge 0$;
$a = 0$;
$su = 1$;
$t = 1$;
@invariant: $a^2 \le n \wedge t = 2 \cdot a + 1 \wedge su = (a+1)^2$
while ($su \le n$)  
{
	$a = a + 1$;
	$t = \Hole{2+t}$;
	$su = \Hole{su + t}$;
}
@post((a * a) <= n);
\end{lstlisting}
	\end{tabular}
\end{figure}

\begin{figure}
	\begin{tabular}{cc}
		\begin{lstlisting}[language=C,numbers=none,mathescape=true]
@prog: SquareRootApproximation
@real: $a,err,r,q,p$;
@pre: $a \ge 1 \wedge err \ge 0$;
$r = a -1$;
$q = 1$;
$p = 0.5$;
@invariant: $p \ge 0 \wedge r \ge 0 \wedge a = q^2 + 2 \cdot r \cdot p$
while ($2 \cdot p \cdot r \ge err$) 
{
	if ($2 \cdot r - 2 \cdot q - p \ge 0$) 
	{
		$r = \Hole{2\cdot r -2\cdot q -p}$;
		$q = p + \Hole{q}$;
		$p = p / 2$;
	}
	else 
	{
		$r = \Hole{2 \cdot r}$;    
		$p = p \cdot 0.5$;
	}
}
@post:$q^2 \geq a - err \wedge q^2 \leq a$
\end{lstlisting}
&
\begin{lstlisting}[language=C,numbers=none,mathescape=true]
@prog: ConsecutiveCubes
@real: $N,n,x,y,z,s$;
@pre: $1 \ge 0$;
$n = 0$;
$x = 0$;
$y = 1$;
$z = 6$;
$s = 0$;
@invariant: $z = 6 \cdot n \wedge y = 3\cdot n^2+3\cdot n+1 \wedge x=n^3$
while ($n \le N$) 
{
	if ($x = n^3$) 
	{
		$s = s + x$;
	}
	$n = n + 1$;
	$x = \Hole{y+x}$;
	$y = \Hole{y+z}$;
	$z = \Hole{6+z}$;
}
@post: $1 \ge 0$;
\end{lstlisting}
			
	\end{tabular}
\end{figure}

\begin{compactitem}
    \item \textbf{PositivityEnforcement}: We are synthesizing an expression such that adding it to the given polynomial makes it positive for all possible inputs.

    \item \textbf{SquareCompletion}: We are synthesizing a number such that adding it to the given polynomial makes it always positive. 

    \item \textbf{SlidingBody}: 
    Consider the following Physics problem: An object is placed on an inclined plane of given length $l$. The inclination is given in terms of the sine and cosine of its angle. The coefficients of static and kinetic friction are also known. We want to find whether the body will slide and, if so, find the time it will take for it to reach the bottom of the plane.

    A physics student can easily write a program template as shown in the example for solving this problem using the principles and equations from classical mechanics. For example, the student can use an \textit{if} block to decide if the body will move or not. Then, inside the block, they can use an equation of motion that holds for the various parameters involved in the problem. Now, to find the time it takes to slide down, they can leave a program hole in the template. 
    
    \item \textbf{ArchimedesPrinciple}:
    A cuboid with dimensions ($l$, $b$, $h$) of density $\rho$ is placed in a fluid of density $\rho'$. We want to determine whether the body will float, and if it floats, find the percentage of the volume of the body that will be outside the fluid.
%
%

     \item \textbf{ClosestCubeRoot}\cite{rodriguez2018some}: Given a positive real number $n$, we need to synthesize a program that computes the closest integer cube root of $n$.

    \item \textbf{SquareRootFloor}\cite{rodriguez2018some}: Given a positive real number $n$, we need to synthesize a program that computes the floor of the square root of $n$.

    \item \textbf{SquareRootApproximation}\cite{rodriguez2018some}: Given a positive real number $n$ and an error threshold $\epsilon$, we need to synthesize a program that computes $\sqrt n$.

    \item \textbf{ConsecutiveCubes}\cite{rodriguez2018some}: We want to synthesize a program that computes the sum of first $n$ cubes.
\end{compactitem}

\begin{table}
	\begin{footnotesize}
	\begin{tabular}{l|l|l|l}
	Example & Step 2 & Step 3 & Step 4 (Z3) \\
	\hline
	ArchimedesPrinciple & 0.10 & 0.27 & 0.22 \\
	ClosestCubeRoot & 0.13 & 12.48 & 3.20 \\
	ClosestSquareRoot & 0.09 & 2.21 & 1.31 \\
	ConsecutiveCubes & 0.14 & 35.00 & 20.92 \\
	PolynomialApproximation & 0.09 & 0.10 & 0.05 \\
	PositivityEnforcement & 0.08 & 0.08 & 0.03 \\
	SlidingBody & 0.10 & 0.30 & 0.22 \\
	SquareCompletion & 0.07 & 0.08 & 0.04 \\
	SquareRootApproximation & 0.12 & 14.87 & 5.54 \\
	SquareRootFloor & 0.11 & 8.09 & 2.31
	\end{tabular}
	\end{footnotesize}
\caption{The runtimes of the different steps of our algorithm in seconds.}
\label{tab:runtime}
\end{table}

In summary, our algorithm is able to handle various polynomial synthesis problems, including examples from physics and ~\cite{rodriguez2018some} in a few seconds. Hence, not only is the algorithm semi-complete in theory, but it is also applicable in practice.

\section*{Acknowledgments}
This section was removed to preserve anonymity.

\begin{comment}
\subsection{Heuristics using Counter Example Guided Abstraction Refinement (CEGAR)}

After we reduce the constraint pairs to system of polynomial inequations and equations over template variables in step \ref{algo_step_3} of the synthesis algorithm, we then pass the system to solvers in step \ref{algo_step_4} of the algorithm. Generally for examples arising from the real world use cases we get a large number of polynomial inequations and equations. In practice, these solvers only work efficiently for very small system of equations. In this section we describe the heuristics we have developed to handle relatively larger examples. 

The idea of the heuristics is as follows: Given an instance $I$ of the system of polynomial inequations, we put constraints on some of the template variables $l$ to be equal to zero to get a simplified instance $I'$ of the problem. If our solvers solve the instance $I'$, then we also get a solution for the original instance $I$. Otherwise we relax some of the constraints on template variables and run the entire process again.



Now we formally describe the above idea in the following two steps.
\\
\textbf{Abstraction:} We find an abstraction of the instance $I$ of the system of polynomial inequations by imposing additional constraints on a set of template variables $l \in L' \subseteq L$ to $0$. Note that even though it seems that we have added more constraints to the system, we are actually dropping the dimension of the search space for solutions. As we have assigned $0$ to all the template variables $l \in L'$, the solver only has to search for solutions in the template variables $U \cup L \setminus L'$. Therefore the instance $I'$ of the problem which we get after assignment is relatively easier to solve than the original instance $I$ of the problem. Now, if the solver returns a solution $M: (L \cup U \setminus L') \to \mathbb{R}$ to the instance $I'$ then we also get a model $M: L \cup U \cup L' \to \mathbb{R}$ as valuation of variables in $L'$ is fixed to zero in $I$ to obtain $I'$. Hence, this method is \textit{sound}.

Suppose the solver returns that the instance $I'$ of the problem is unsatisfiable then there are two possibilites, i.e., either the original instance of the problem $I$ is also unsatisfiable or we need to relax some constraints on template variables $L'$. In other words, we need to refine the instance $I'$.



\textbf{Refinement:} To refine the instance $I'$ of the problem, we need to select the template variables $l \in L'$ for which we need to relax the constraints. We now find the minimal set of constraints $C$ which make the instance $I'$ unsatisfiable. Suppose there is a subset $K$ of $L'$ such that for each $l \in K$ we have that $l = 0$ is contained in the set of constraints $C$. We then remove all variables contained in $K$ from $L'$ and obtain a refinement of instance $I'$ of the problem. If such a susbet $K$ of $L'$ doesn't exists then the original instance $I$ is unsatisfiable.


Note that on every refinement step, we remove at least one variable from $L'$. In the worst case scenario, we have to set $L' = \emptyset$, and this will refine the instance $I'$ to the original instance $I$, and this shows that the method is \textit{complete}.

In our experiments, we observed that the heuristics were able to solve many instances which were timing out in solvers otherwise.

\smallskip


\newpage
\bibliography{refs}

\begin{thebibliography}{}
\providecommand{\url}[1]{\texttt{#1}}
\providecommand{\urlprefix}{URL }
\providecommand{\doi}[1]{https://doi.org/#1}

\end{thebibliography}



\begin{thebibliography}{56}


\ifx \showCODEN    \undefined \def \showCODEN     #1{\unskip}     \fi
\ifx \showDOI      \undefined \def \showDOI       #1{#1}\fi
\ifx \showISBNx    \undefined \def \showISBNx     #1{\unskip}     \fi
\ifx \showISBNxiii \undefined \def \showISBNxiii  #1{\unskip}     \fi
\ifx \showISSN     \undefined \def \showISSN      #1{\unskip}     \fi
\ifx \showLCCN     \undefined \def \showLCCN      #1{\unskip}     \fi
\ifx \shownote     \undefined \def \shownote      #1{#1}          \fi
\ifx \showarticletitle \undefined \def \showarticletitle #1{#1}   \fi
\ifx \showURL      \undefined \def \showURL       {\relax}        \fi
\providecommand\bibfield[2]{#2}
\providecommand\bibinfo[2]{#2}
\providecommand\natexlab[1]{#1}
\providecommand\showeprint[2][]{arXiv:#2}

\bibitem[\protect\citeauthoryear{Abate, David, Kesseli, Kroening, and
  Polgreen}{Abate et~al\mbox{.}}{2018}]%
        {abate2018counterexample}
\bibfield{author}{\bibinfo{person}{Alessandro Abate}, \bibinfo{person}{Cristina
  David}, \bibinfo{person}{Pascal Kesseli}, \bibinfo{person}{Daniel Kroening},
  {and} \bibinfo{person}{Elizabeth Polgreen}.} \bibinfo{year}{2018}\natexlab{}.
\newblock \showarticletitle{Counterexample guided inductive synthesis modulo
  theories}. In \bibinfo{booktitle}{\emph{CAV}}. \bibinfo{pages}{270--288}.
\newblock


\bibitem[\protect\citeauthoryear{Agrawal, Chatterjee, and Novotn{\`y}}{Agrawal
  et~al\mbox{.}}{2017}]%
        {agrawal2017lexicographic}
\bibfield{author}{\bibinfo{person}{Sheshansh Agrawal},
  \bibinfo{person}{Krishnendu Chatterjee}, {and} \bibinfo{person}{Petr
  Novotn{\`y}}.} \bibinfo{year}{2017}\natexlab{}.
\newblock \showarticletitle{Lexicographic ranking supermartingales: an
  efficient approach to termination of probabilistic programs}. In
  \bibinfo{booktitle}{\emph{POPL}}.
\newblock


\bibitem[\protect\citeauthoryear{Alur, Bodik, Juniwal, Martin, Raghothaman,
  Seshia, Singh, Solar-Lezama, Torlak, and Udupa}{Alur et~al\mbox{.}}{2013}]%
        {alur2013syntax}
\bibfield{author}{\bibinfo{person}{Rajeev Alur}, \bibinfo{person}{Rastislav
  Bodik}, \bibinfo{person}{Garvit Juniwal}, \bibinfo{person}{Milo~MK Martin},
  \bibinfo{person}{Mukund Raghothaman}, \bibinfo{person}{Sanjit~A Seshia},
  \bibinfo{person}{Rishabh Singh}, \bibinfo{person}{Armando Solar-Lezama},
  \bibinfo{person}{Emina Torlak}, {and} \bibinfo{person}{Abhishek Udupa}.}
  \bibinfo{year}{2013}\natexlab{}.
\newblock \showarticletitle{Syntax-Guided Synthesis}.
\newblock \bibinfo{journal}{\emph{FMCAD}} (\bibinfo{year}{2013}).
\newblock


\bibitem[\protect\citeauthoryear{Alur, Fisman, Padhi, Reynolds, Singh, and
  Udupa}{Alur et~al\mbox{.}}{2019}]%
        {syguscomp}
\bibfield{author}{\bibinfo{person}{Rajeev Alur}, \bibinfo{person}{Dana Fisman},
  \bibinfo{person}{Saswat Padhi}, \bibinfo{person}{Andrew Reynolds},
  \bibinfo{person}{Rishabh Singh}, {and} \bibinfo{person}{Abhishek Udupa}.}
  \bibinfo{year}{2019}\natexlab{}.
\newblock \bibinfo{title}{{SyGuS-Comp}: Syntax-guided Synthesis Competition}.
\newblock \bibinfo{howpublished}{\url{https://sygus.org/}}.
\newblock


\bibitem[\protect\citeauthoryear{Alur, Singh, Fisman, and Solar-Lezama}{Alur
  et~al\mbox{.}}{2018}]%
        {alur2018search}
\bibfield{author}{\bibinfo{person}{Rajeev Alur}, \bibinfo{person}{Rishabh
  Singh}, \bibinfo{person}{Dana Fisman}, {and} \bibinfo{person}{Armando
  Solar-Lezama}.} \bibinfo{year}{2018}\natexlab{}.
\newblock \showarticletitle{Search-based program synthesis}.
\newblock \bibinfo{journal}{\emph{Commun. ACM}} \bibinfo{volume}{61},
  \bibinfo{number}{12} (\bibinfo{year}{2018}), \bibinfo{pages}{84--93}.
\newblock


\bibitem[\protect\citeauthoryear{Asadi, Chatterjee, Fu, Goharshady, and
  Mahdavi}{Asadi et~al\mbox{.}}{2021}]%
        {asadi2021polynomial}
\bibfield{author}{\bibinfo{person}{Ali Asadi}, \bibinfo{person}{Krishnendu
  Chatterjee}, \bibinfo{person}{Hongfei Fu}, \bibinfo{person}{Amir~Kafshdar
  Goharshady}, {and} \bibinfo{person}{Mohammad Mahdavi}.}
  \bibinfo{year}{2021}\natexlab{}.
\newblock \showarticletitle{Polynomial reachability witnesses via
  Stellens{\"a}tze}. In \bibinfo{booktitle}{\emph{PLDI}}.
  \bibinfo{pages}{772--787}.
\newblock


\bibitem[\protect\citeauthoryear{Blekherman, Parrilo, and Thomas}{Blekherman
  et~al\mbox{.}}{2012}]%
        {blekherman2012semidefinite}
\bibfield{author}{\bibinfo{person}{Grigoriy Blekherman},
  \bibinfo{person}{Pablo~A Parrilo}, {and} \bibinfo{person}{Rekha~R Thomas}.}
  \bibinfo{year}{2012}\natexlab{}.
\newblock \bibinfo{booktitle}{\emph{Semidefinite optimization and convex
  algebraic geometry}}.
\newblock \bibinfo{publisher}{SIAM}.
\newblock


\bibitem[\protect\citeauthoryear{Bochnak, Coste, and Roy}{Bochnak
  et~al\mbox{.}}{2013}]%
        {bochnak2013real}
\bibfield{author}{\bibinfo{person}{Jacek Bochnak}, \bibinfo{person}{Michel
  Coste}, {and} \bibinfo{person}{Marie-Fran{\c{c}}oise Roy}.}
  \bibinfo{year}{2013}\natexlab{}.
\newblock \bibinfo{booktitle}{\emph{Real algebraic geometry}}.
  Vol.~\bibinfo{volume}{36}.
\newblock \bibinfo{publisher}{Springer Science \& Business Media}.
\newblock


\bibitem[\protect\citeauthoryear{Chatterjee, Fu, and Goharshady}{Chatterjee
  et~al\mbox{.}}{2016}]%
        {chatterjee2016termination}
\bibfield{author}{\bibinfo{person}{Krishnendu Chatterjee},
  \bibinfo{person}{Hongfei Fu}, {and} \bibinfo{person}{Amir~Kafshdar
  Goharshady}.} \bibinfo{year}{2016}\natexlab{}.
\newblock \showarticletitle{Termination analysis of probabilistic programs
  through Positivstellensatz’s}. In \bibinfo{booktitle}{\emph{CAV}}.
  \bibinfo{pages}{3--22}.
\newblock


\bibitem[\protect\citeauthoryear{Chatterjee, Fu, Goharshady, and
  Goharshady}{Chatterjee et~al\mbox{.}}{2020b}]%
        {pldi2020}
\bibfield{author}{\bibinfo{person}{Krishnendu Chatterjee},
  \bibinfo{person}{Hongfei Fu}, \bibinfo{person}{Amir~Kafshdar Goharshady},
  {and} \bibinfo{person}{Ehsan~Kafshdar Goharshady}.}
  \bibinfo{year}{2020}\natexlab{b}.
\newblock \showarticletitle{Polynomial invariant generation for
  non-deterministic recursive programs}. In \bibinfo{booktitle}{\emph{PLDI}}.
  \bibinfo{pages}{672--687}.
\newblock


\bibitem[\protect\citeauthoryear{Chatterjee, Fu, and Novotn{\`y}}{Chatterjee
  et~al\mbox{.}}{2020a}]%
        {chatterjee2020termination12}
\bibfield{author}{\bibinfo{person}{Krishnendu Chatterjee},
  \bibinfo{person}{Hongfei Fu}, {and} \bibinfo{person}{Petr Novotn{\`y}}.}
  \bibinfo{year}{2020}\natexlab{a}.
\newblock \showarticletitle{Termination analysis of probabilistic programs with
  martingales}.
\newblock \bibinfo{journal}{\emph{Foundations of Probabilistic Programming}}
  (\bibinfo{year}{2020}), \bibinfo{pages}{221--258}.
\newblock


\bibitem[\protect\citeauthoryear{Church}{Church}{1963}]%
        {church1963application}
\bibfield{author}{\bibinfo{person}{Alonzo Church}.}
  \bibinfo{year}{1963}\natexlab{}.
\newblock \showarticletitle{Application of recursive arithmetic to the problem
  of circuit synthesis}.
\newblock \bibinfo{journal}{\emph{Journal of Symbolic Logic}}
  \bibinfo{volume}{28}, \bibinfo{number}{4} (\bibinfo{year}{1963}).
\newblock


\bibitem[\protect\citeauthoryear{Col{\'o}n, Sankaranarayanan, and
  Sipma}{Col{\'o}n et~al\mbox{.}}{2003}]%
        {colon2003linear}
\bibfield{author}{\bibinfo{person}{Michael~A Col{\'o}n},
  \bibinfo{person}{Sriram Sankaranarayanan}, {and} \bibinfo{person}{Henny~B
  Sipma}.} \bibinfo{year}{2003}\natexlab{}.
\newblock \showarticletitle{Linear invariant generation using non-linear
  constraint solving}. In \bibinfo{booktitle}{\emph{CAV}}.
  \bibinfo{pages}{420--432}.
\newblock


\bibitem[\protect\citeauthoryear{Farkas}{Farkas}{1902}]%
        {farkas1902theorie}
\bibfield{author}{\bibinfo{person}{Julius Farkas}.}
  \bibinfo{year}{1902}\natexlab{}.
\newblock \showarticletitle{Theory of simple inequalities.}
\newblock \bibinfo{journal}{\emph{Journal for pure and applied mathematics
  (Crelles Journal)}} \bibinfo{volume}{1902}, \bibinfo{number}{124}
  (\bibinfo{year}{1902}), \bibinfo{pages}{1--27}.
\newblock


\bibitem[\protect\citeauthoryear{Fedyukovich, Prabhu, Madhukar, and
  Gupta}{Fedyukovich et~al\mbox{.}}{2019}]%
        {fedyukovich2019quantified}
\bibfield{author}{\bibinfo{person}{Grigory Fedyukovich},
  \bibinfo{person}{Sumanth Prabhu}, \bibinfo{person}{Kumar Madhukar}, {and}
  \bibinfo{person}{Aarti Gupta}.} \bibinfo{year}{2019}\natexlab{}.
\newblock \showarticletitle{Quantified invariants via syntax-guided synthesis}.
  In \bibinfo{booktitle}{\emph{CAV}}. \bibinfo{pages}{259--277}.
\newblock


\bibitem[\protect\citeauthoryear{Feng, Zhang, Jansen, Zhan, and Xia}{Feng
  et~al\mbox{.}}{2017}]%
        {feng2017finding}
\bibfield{author}{\bibinfo{person}{Yijun Feng}, \bibinfo{person}{Lijun Zhang},
  \bibinfo{person}{David~N Jansen}, \bibinfo{person}{Naijun Zhan}, {and}
  \bibinfo{person}{Bican Xia}.} \bibinfo{year}{2017}\natexlab{}.
\newblock \showarticletitle{Finding polynomial loop invariants for
  probabilistic programs}. In \bibinfo{booktitle}{\emph{ATVA}}.
  \bibinfo{pages}{400--416}.
\newblock


\bibitem[\protect\citeauthoryear{Goharshady}{Goharshady}{2020}]%
        {goharshady2020parameterized}
\bibfield{author}{\bibinfo{person}{Amir Goharshady}.}
  \bibinfo{year}{2020}\natexlab{}.
\newblock \emph{\bibinfo{title}{Parameterized and algebro-geometric advances in
  static program analysis}}.
\newblock \bibinfo{thesistype}{Ph.D. Dissertation}. \bibinfo{school}{Institute
  of Science and Technology Austria}.
\newblock


\bibitem[\protect\citeauthoryear{Green}{Green}{1981}]%
        {green1981application}
\bibfield{author}{\bibinfo{person}{Cordell Green}.}
  \bibinfo{year}{1981}\natexlab{}.
\newblock \showarticletitle{Application of theorem proving to problem solving}.
\newblock In \bibinfo{booktitle}{\emph{Readings in Artificial Intelligence}}.
  \bibinfo{pages}{202--222}.
\newblock


\bibitem[\protect\citeauthoryear{Gulwani}{Gulwani}{2011}]%
        {gulwani2011automating}
\bibfield{author}{\bibinfo{person}{Sumit Gulwani}.}
  \bibinfo{year}{2011}\natexlab{}.
\newblock \showarticletitle{Automating string processing in spreadsheets using
  input-output examples}. In \bibinfo{booktitle}{\emph{POPL}}.
  \bibinfo{pages}{317--330}.
\newblock


\bibitem[\protect\citeauthoryear{Gulwani, Harris, and Singh}{Gulwani
  et~al\mbox{.}}{2012}]%
        {gulwani2012spreadsheet}
\bibfield{author}{\bibinfo{person}{Sumit Gulwani}, \bibinfo{person}{William~R
  Harris}, {and} \bibinfo{person}{Rishabh Singh}.}
  \bibinfo{year}{2012}\natexlab{}.
\newblock \showarticletitle{Spreadsheet data manipulation using examples}.
\newblock \bibinfo{journal}{\emph{Commun. ACM}} \bibinfo{volume}{55},
  \bibinfo{number}{8} (\bibinfo{year}{2012}), \bibinfo{pages}{97--105}.
\newblock


\bibitem[\protect\citeauthoryear{Gulwani, Jha, Tiwari, and Venkatesan}{Gulwani
  et~al\mbox{.}}{2011}]%
        {gulwani2011synthesis}
\bibfield{author}{\bibinfo{person}{Sumit Gulwani}, \bibinfo{person}{Susmit
  Jha}, \bibinfo{person}{Ashish Tiwari}, {and} \bibinfo{person}{Ramarathnam
  Venkatesan}.} \bibinfo{year}{2011}\natexlab{}.
\newblock \showarticletitle{Synthesis of loop-free programs}. In
  \bibinfo{booktitle}{\emph{PLDI}}. \bibinfo{pages}{62--73}.
\newblock


\bibitem[\protect\citeauthoryear{Gulwani, Polozov, and Singh}{Gulwani
  et~al\mbox{.}}{2017}]%
        {gulwani2017program}
\bibfield{author}{\bibinfo{person}{Sumit Gulwani}, \bibinfo{person}{Oleksandr
  Polozov}, {and} \bibinfo{person}{Rishabh Singh}.}
  \bibinfo{year}{2017}\natexlab{}.
\newblock \showarticletitle{Program synthesis}.
\newblock \bibinfo{journal}{\emph{Foundations and Trends in Programming
  Languages}} \bibinfo{volume}{4}, \bibinfo{number}{1-2}
  (\bibinfo{year}{2017}), \bibinfo{pages}{1--119}.
\newblock


\bibitem[\protect\citeauthoryear{Guo, James, Justo, Zhou, Wang, Jhala, and
  Polikarpova}{Guo et~al\mbox{.}}{2019}]%
        {guo2019program}
\bibfield{author}{\bibinfo{person}{Zheng Guo}, \bibinfo{person}{Michael James},
  \bibinfo{person}{David Justo}, \bibinfo{person}{Jiaxiao Zhou},
  \bibinfo{person}{Ziteng Wang}, \bibinfo{person}{Ranjit Jhala}, {and}
  \bibinfo{person}{Nadia Polikarpova}.} \bibinfo{year}{2019}\natexlab{}.
\newblock \showarticletitle{Program synthesis by type-guided abstraction
  refinement}. In \bibinfo{booktitle}{\emph{POPL}}. \bibinfo{pages}{1--28}.
\newblock


\bibitem[\protect\citeauthoryear{Handelman}{Handelman}{1988}]%
        {handelman1988representing}
\bibfield{author}{\bibinfo{person}{David Handelman}.}
  \bibinfo{year}{1988}\natexlab{}.
\newblock \showarticletitle{Representing polynomials by positive linear
  functions on compact convex polyhedra}.
\newblock \bibinfo{journal}{\emph{Pacific J. Math.}} \bibinfo{volume}{132},
  \bibinfo{number}{1} (\bibinfo{year}{1988}), \bibinfo{pages}{35--62}.
\newblock


\bibitem[\protect\citeauthoryear{Hartshorne}{Hartshorne}{2013}]%
        {hartshorne2013algebraic}
\bibfield{author}{\bibinfo{person}{Robin Hartshorne}.}
  \bibinfo{year}{2013}\natexlab{}.
\newblock \bibinfo{booktitle}{\emph{Algebraic geometry}}.
  Vol.~\bibinfo{volume}{52}.
\newblock \bibinfo{publisher}{Springer Science \& Business Media}.
\newblock


\bibitem[\protect\citeauthoryear{Knoth, Wang, Polikarpova, and Hoffmann}{Knoth
  et~al\mbox{.}}{2019}]%
        {knoth2019resource}
\bibfield{author}{\bibinfo{person}{Tristan Knoth}, \bibinfo{person}{Di Wang},
  \bibinfo{person}{Nadia Polikarpova}, {and} \bibinfo{person}{Jan Hoffmann}.}
  \bibinfo{year}{2019}\natexlab{}.
\newblock \showarticletitle{Resource-guided program synthesis}. In
  \bibinfo{booktitle}{\emph{PLDI}}. \bibinfo{pages}{253--268}.
\newblock


\bibitem[\protect\citeauthoryear{Kolmogoroff}{Kolmogoroff}{1932}]%
        {kolmogoroff1932deutung}
\bibfield{author}{\bibinfo{person}{Andrej Kolmogoroff}.}
  \bibinfo{year}{1932}\natexlab{}.
\newblock \showarticletitle{Zur deutung der intuitionistischen logik}.
\newblock \bibinfo{journal}{\emph{Mathematische Zeitschrift}}
  \bibinfo{volume}{35}, \bibinfo{number}{1} (\bibinfo{year}{1932}),
  \bibinfo{pages}{58--65}.
\newblock


\bibitem[\protect\citeauthoryear{Koza}{Koza}{1994}]%
        {koza1994genetic}
\bibfield{author}{\bibinfo{person}{John~R Koza}.}
  \bibinfo{year}{1994}\natexlab{}.
\newblock \showarticletitle{Genetic programming as a means for programming
  computers by natural selection}.
\newblock \bibinfo{journal}{\emph{Statistics and computing}}
  \bibinfo{volume}{4}, \bibinfo{number}{2} (\bibinfo{year}{1994}),
  \bibinfo{pages}{87--112}.
\newblock


\bibitem[\protect\citeauthoryear{Krawiec}{Krawiec}{2016}]%
        {krawiec2016behavioral}
\bibfield{author}{\bibinfo{person}{Krzysztof Krawiec}.}
  \bibinfo{year}{2016}\natexlab{}.
\newblock \bibinfo{booktitle}{\emph{Behavioral program synthesis with genetic
  programming}}. Vol.~\bibinfo{volume}{618}.
\newblock


\bibitem[\protect\citeauthoryear{Liu, Fu, Yu, Song, and Li}{Liu
  et~al\mbox{.}}{2022}]%
        {liu2022location}
\bibfield{author}{\bibinfo{person}{Hongming Liu}, \bibinfo{person}{Hongfei Fu},
  \bibinfo{person}{Zhiyong Yu}, \bibinfo{person}{Jiaxin Song}, {and}
  \bibinfo{person}{Guoqiang Li}.} \bibinfo{year}{2022}\natexlab{}.
\newblock \showarticletitle{Location-by-Location Linear Invariant Generation
  with Farkas' Lemma}. In \bibinfo{booktitle}{\emph{OOPSLA}}.
\newblock


\bibitem[\protect\citeauthoryear{Mai and Magron}{Mai and Magron}{2022}]%
        {mai2022complexity}
\bibfield{author}{\bibinfo{person}{Ngoc Hoang~Anh Mai} {and}
  \bibinfo{person}{Victor Magron}.} \bibinfo{year}{2022}\natexlab{}.
\newblock \showarticletitle{On the complexity of Putinar--Vasilescu's
  Positivstellensatz}.
\newblock \bibinfo{journal}{\emph{Journal of Complexity}}
  (\bibinfo{year}{2022}), \bibinfo{pages}{101663}.
\newblock


\bibitem[\protect\citeauthoryear{Manna and Waldinger}{Manna and
  Waldinger}{1971}]%
        {manna1971toward}
\bibfield{author}{\bibinfo{person}{Zohar Manna} {and}
  \bibinfo{person}{Richard~J Waldinger}.} \bibinfo{year}{1971}\natexlab{}.
\newblock \showarticletitle{Toward automatic program synthesis}.
\newblock \bibinfo{journal}{\emph{Commun. ACM}} \bibinfo{volume}{14},
  \bibinfo{number}{3} (\bibinfo{year}{1971}), \bibinfo{pages}{151--165}.
\newblock


\bibitem[\protect\citeauthoryear{Meurer, Smith, Paprocki, {\v{C}}ert{\'\i}k,
  Kirpichev, Rocklin, Kumar, Ivanov, Moore, Singh, et~al\mbox{.}}{Meurer
  et~al\mbox{.}}{2017}]%
        {meurer2017sympy}
\bibfield{author}{\bibinfo{person}{Aaron Meurer},
  \bibinfo{person}{Christopher~P Smith}, \bibinfo{person}{Mateusz Paprocki},
  \bibinfo{person}{Ond{\v{r}}ej {\v{C}}ert{\'\i}k}, \bibinfo{person}{Sergey~B
  Kirpichev}, \bibinfo{person}{Matthew Rocklin}, \bibinfo{person}{AMiT Kumar},
  \bibinfo{person}{Sergiu Ivanov}, \bibinfo{person}{Jason~K Moore},
  \bibinfo{person}{Sartaj Singh}, {et~al\mbox{.}}}
  \bibinfo{year}{2017}\natexlab{}.
\newblock \showarticletitle{{SymPy}: symbolic computing in {Python}}.
\newblock \bibinfo{journal}{\emph{{PeerJ} Computer Science}}
  \bibinfo{volume}{3} (\bibinfo{year}{2017}), \bibinfo{pages}{e103}.
\newblock


\bibitem[\protect\citeauthoryear{Moura and Bj{\o}rner}{Moura and
  Bj{\o}rner}{2008}]%
        {moura2008z3}
\bibfield{author}{\bibinfo{person}{Leonardo~de Moura} {and}
  \bibinfo{person}{Nikolaj Bj{\o}rner}.} \bibinfo{year}{2008}\natexlab{}.
\newblock \showarticletitle{Z3: An efficient {SMT} solver}. In
  \bibinfo{booktitle}{\emph{TACAS}}. \bibinfo{pages}{337--340}.
\newblock


\bibitem[\protect\citeauthoryear{Neumann, Ouaknine, and Worrell}{Neumann
  et~al\mbox{.}}{2020}]%
        {neumann2020ranking}
\bibfield{author}{\bibinfo{person}{Eike Neumann}, \bibinfo{person}{Jo{\"e}l
  Ouaknine}, {and} \bibinfo{person}{James Worrell}.}
  \bibinfo{year}{2020}\natexlab{}.
\newblock \showarticletitle{On ranking function synthesis and termination for
  polynomial programs}. In \bibinfo{booktitle}{\emph{CONCUR}}.
\newblock


\bibitem[\protect\citeauthoryear{Pnueli and Rosner}{Pnueli and Rosner}{1989}]%
        {pnueli1989synthesis}
\bibfield{author}{\bibinfo{person}{Amir Pnueli} {and} \bibinfo{person}{Roni
  Rosner}.} \bibinfo{year}{1989}\natexlab{}.
\newblock \showarticletitle{On the synthesis of a reactive module}. In
  \bibinfo{booktitle}{\emph{POPL}}. \bibinfo{pages}{179--190}.
\newblock


\bibitem[\protect\citeauthoryear{Polikarpova, Kuraj, and
  Solar-Lezama}{Polikarpova et~al\mbox{.}}{2016}]%
        {polikarpova2016program}
\bibfield{author}{\bibinfo{person}{Nadia Polikarpova}, \bibinfo{person}{Ivan
  Kuraj}, {and} \bibinfo{person}{Armando Solar-Lezama}.}
  \bibinfo{year}{2016}\natexlab{}.
\newblock \showarticletitle{Program synthesis from polymorphic refinement
  types}. In \bibinfo{booktitle}{\emph{PLDI}}. \bibinfo{pages}{522--538}.
\newblock


\bibitem[\protect\citeauthoryear{Preiner, Niemetz, and Biere}{Preiner
  et~al\mbox{.}}{2017}]%
        {preiner2017counterexample}
\bibfield{author}{\bibinfo{person}{Mathias Preiner}, \bibinfo{person}{Aina
  Niemetz}, {and} \bibinfo{person}{Armin Biere}.}
  \bibinfo{year}{2017}\natexlab{}.
\newblock \showarticletitle{Counterexample-guided model synthesis}. In
  \bibinfo{booktitle}{\emph{TACAS}}. \bibinfo{pages}{264--280}.
\newblock


\bibitem[\protect\citeauthoryear{Putinar}{Putinar}{1993}]%
        {putinar1993positive}
\bibfield{author}{\bibinfo{person}{Mihai Putinar}.}
  \bibinfo{year}{1993}\natexlab{}.
\newblock \showarticletitle{Positive polynomials on compact semi-algebraic
  sets}.
\newblock \bibinfo{journal}{\emph{Indiana University Mathematics Journal}}
  \bibinfo{volume}{42}, \bibinfo{number}{3} (\bibinfo{year}{1993}),
  \bibinfo{pages}{969--984}.
\newblock


\bibitem[\protect\citeauthoryear{Reynolds, Deters, Kuncak, Tinelli, and
  Barrett}{Reynolds et~al\mbox{.}}{2015}]%
        {reynolds2015counterexample}
\bibfield{author}{\bibinfo{person}{Andrew Reynolds}, \bibinfo{person}{Morgan
  Deters}, \bibinfo{person}{Viktor Kuncak}, \bibinfo{person}{Cesare Tinelli},
  {and} \bibinfo{person}{Clark Barrett}.} \bibinfo{year}{2015}\natexlab{}.
\newblock \showarticletitle{Counterexample-guided quantifier instantiation for
  synthesis in {SMT}}. In \bibinfo{booktitle}{\emph{CAV}}.
  \bibinfo{pages}{198--216}.
\newblock


\bibitem[\protect\citeauthoryear{Rice}{Rice}{1953}]%
        {rice1953classes}
\bibfield{author}{\bibinfo{person}{Henry~Gordon Rice}.}
  \bibinfo{year}{1953}\natexlab{}.
\newblock \showarticletitle{Classes of recursively enumerable sets and their
  decision problems}.
\newblock \bibinfo{journal}{\emph{Transactions of the American Mathematical
  society}} \bibinfo{volume}{74}, \bibinfo{number}{2} (\bibinfo{year}{1953}),
  \bibinfo{pages}{358--366}.
\newblock


\bibitem[\protect\citeauthoryear{Rodr{\'\i}guez-Carbonell}{Rodr{\'\i}guez-Carbonell}{2018}]%
        {rodriguez2018some}
\bibfield{author}{\bibinfo{person}{Enric Rodr{\'\i}guez-Carbonell}.}
  \bibinfo{year}{2018}\natexlab{}.
\newblock \bibinfo{title}{Some programs that need polynomial invariants in
  order to be verified}.
\newblock
  \bibinfo{howpublished}{\url{https://www.cs.upc.edu/~erodri/webpage/polynomial_invariants/list.html}}.
\newblock


\bibitem[\protect\citeauthoryear{Sankaranarayanan, Sipma, and
  Manna}{Sankaranarayanan et~al\mbox{.}}{2004a}]%
        {sankaranarayanan2004constraint}
\bibfield{author}{\bibinfo{person}{Sriram Sankaranarayanan},
  \bibinfo{person}{Henny~B Sipma}, {and} \bibinfo{person}{Zohar Manna}.}
  \bibinfo{year}{2004}\natexlab{a}.
\newblock \showarticletitle{Constraint-based linear-relations analysis}. In
  \bibinfo{booktitle}{\emph{SAS}}. \bibinfo{pages}{53--68}.
\newblock


\bibitem[\protect\citeauthoryear{Sankaranarayanan, Sipma, and
  Manna}{Sankaranarayanan et~al\mbox{.}}{2004b}]%
        {sankaranarayanan2004non}
\bibfield{author}{\bibinfo{person}{Sriram Sankaranarayanan},
  \bibinfo{person}{Henny~B Sipma}, {and} \bibinfo{person}{Zohar Manna}.}
  \bibinfo{year}{2004}\natexlab{b}.
\newblock \showarticletitle{Non-linear loop invariant generation using
  Gr{\"o}bner bases}. In \bibinfo{booktitle}{\emph{POPL}}.
  \bibinfo{pages}{318--329}.
\newblock


\bibitem[\protect\citeauthoryear{Shaw, Swartout, and Green}{Shaw
  et~al\mbox{.}}{1975}]%
        {shaw1975inferring}
\bibfield{author}{\bibinfo{person}{David~E Shaw}, \bibinfo{person}{William~R
  Swartout}, {and} \bibinfo{person}{C~Cordell Green}.}
  \bibinfo{year}{1975}\natexlab{}.
\newblock \showarticletitle{Inferring {LISP} Programs From Examples}. In
  \bibinfo{booktitle}{\emph{IJCAI}}. \bibinfo{pages}{260--267}.
\newblock


\bibitem[\protect\citeauthoryear{Si, Lee, Zhang, Albarghouthi, Koutris, and
  Naik}{Si et~al\mbox{.}}{2018}]%
        {si2018syntax}
\bibfield{author}{\bibinfo{person}{Xujie Si}, \bibinfo{person}{Woosuk Lee},
  \bibinfo{person}{Richard Zhang}, \bibinfo{person}{Aws Albarghouthi},
  \bibinfo{person}{Paraschos Koutris}, {and} \bibinfo{person}{Mayur Naik}.}
  \bibinfo{year}{2018}\natexlab{}.
\newblock \showarticletitle{Syntax-guided synthesis of datalog programs}. In
  \bibinfo{booktitle}{\emph{ESEC/FSE}}. \bibinfo{pages}{515--527}.
\newblock


\bibitem[\protect\citeauthoryear{Smith}{Smith}{1975}]%
        {smith1975pygmalion}
\bibfield{author}{\bibinfo{person}{David~Canfield Smith}.}
  \bibinfo{year}{1975}\natexlab{}.
\newblock \bibinfo{booktitle}{\emph{Pygmalion: a creative programming
  environment}}.
\newblock \bibinfo{publisher}{Stanford University}.
\newblock


\bibitem[\protect\citeauthoryear{Sobania, Schweim, and Rothlauf}{Sobania
  et~al\mbox{.}}{2022}]%
        {sobania2022comprehensive}
\bibfield{author}{\bibinfo{person}{Dominik Sobania}, \bibinfo{person}{Dirk
  Schweim}, {and} \bibinfo{person}{Franz Rothlauf}.}
  \bibinfo{year}{2022}\natexlab{}.
\newblock \showarticletitle{A Comprehensive Survey on Program Synthesis with
  Evolutionary Algorithms}.
\newblock \bibinfo{journal}{\emph{IEEE Transactions on Evolutionary
  Computation}} (\bibinfo{year}{2022}).
\newblock


\bibitem[\protect\citeauthoryear{Solar-Lezama}{Solar-Lezama}{2008}]%
        {solar2008program}
\bibfield{author}{\bibinfo{person}{Armando Solar-Lezama}.}
  \bibinfo{year}{2008}\natexlab{}.
\newblock \bibinfo{booktitle}{\emph{Program synthesis by sketching}}.
\newblock \bibinfo{publisher}{University of California, Berkeley}.
\newblock


\bibitem[\protect\citeauthoryear{Solar-Lezama}{Solar-Lezama}{2009}]%
        {solar2009sketching}
\bibfield{author}{\bibinfo{person}{Armando Solar-Lezama}.}
  \bibinfo{year}{2009}\natexlab{}.
\newblock \showarticletitle{The sketching approach to program synthesis}. In
  \bibinfo{booktitle}{\emph{APLAS}}. \bibinfo{pages}{4--13}.
\newblock


\bibitem[\protect\citeauthoryear{Srivastava, Gulwani, and Foster}{Srivastava
  et~al\mbox{.}}{2010}]%
        {srivastava2010program}
\bibfield{author}{\bibinfo{person}{Saurabh Srivastava}, \bibinfo{person}{Sumit
  Gulwani}, {and} \bibinfo{person}{Jeffrey~S Foster}.}
  \bibinfo{year}{2010}\natexlab{}.
\newblock \showarticletitle{From program verification to program synthesis}. In
  \bibinfo{booktitle}{\emph{POPL}}. \bibinfo{pages}{313--326}.
\newblock


\bibitem[\protect\citeauthoryear{Srivastava, Gulwani, and Foster}{Srivastava
  et~al\mbox{.}}{2013}]%
        {srivastava2013template}
\bibfield{author}{\bibinfo{person}{Saurabh Srivastava}, \bibinfo{person}{Sumit
  Gulwani}, {and} \bibinfo{person}{Jeffrey~S Foster}.}
  \bibinfo{year}{2013}\natexlab{}.
\newblock \showarticletitle{Template-based program verification and program
  synthesis}.
\newblock \bibinfo{journal}{\emph{International Journal on Software Tools for
  Technology Transfer}} \bibinfo{volume}{15}, \bibinfo{number}{5}
  (\bibinfo{year}{2013}), \bibinfo{pages}{497--518}.
\newblock


\bibitem[\protect\citeauthoryear{Takisaka, Oyabu, Urabe, and Hasuo}{Takisaka
  et~al\mbox{.}}{2018}]%
        {takisaka2018ranking}
\bibfield{author}{\bibinfo{person}{Toru Takisaka}, \bibinfo{person}{Yuichiro
  Oyabu}, \bibinfo{person}{Natsuki Urabe}, {and} \bibinfo{person}{Ichiro
  Hasuo}.} \bibinfo{year}{2018}\natexlab{}.
\newblock \showarticletitle{Ranking and repulsing supermartingales for
  reachability in probabilistic programs}. In \bibinfo{booktitle}{\emph{ATVA}}.
  \bibinfo{pages}{476--493}.
\newblock


\bibitem[\protect\citeauthoryear{Tarski}{Tarski}{1949}]%
        {tarski1949decision}
\bibfield{author}{\bibinfo{person}{Alfred Tarski}.}
  \bibinfo{year}{1949}\natexlab{}.
\newblock \showarticletitle{A Decision Method for Elementary Algebra and
  Geometry}.
\newblock \bibinfo{journal}{\emph{Journal of Symbolic Logic}}
  \bibinfo{volume}{14}, \bibinfo{number}{3} (\bibinfo{year}{1949}).
\newblock


\bibitem[\protect\citeauthoryear{Troelstra}{Troelstra}{1977}]%
        {troelstra1977aspects}
\bibfield{author}{\bibinfo{person}{Anne~S Troelstra}.}
  \bibinfo{year}{1977}\natexlab{}.
\newblock \showarticletitle{Aspects of constructive mathematics}.
\newblock In \bibinfo{booktitle}{\emph{Studies in Logic and the Foundations of
  Mathematics}}. Vol.~\bibinfo{volume}{90}. \bibinfo{pages}{973--1052}.
\newblock


\bibitem[\protect\citeauthoryear{Watkins}{Watkins}{2004}]%
        {watkins2004fundamentals}
\bibfield{author}{\bibinfo{person}{David~S Watkins}.}
  \bibinfo{year}{2004}\natexlab{}.
\newblock \bibinfo{booktitle}{\emph{Fundamentals of matrix computations}}.
  Vol.~\bibinfo{volume}{64}.
\newblock \bibinfo{publisher}{John Wiley \& Sons}.
\newblock


\end{thebibliography}

\appendix

\newpage
\begin{landscape}
\section*{Comparison with Rosette and Sketch}
\begin{table}
	\begin{tabular}{|c|c|c|c|c|c|c|c|c|}
		\hline
		{\textbf{Benchmark}} & \multicolumn{2}{c|}{\textbf{Our Approach}} & \multicolumn{2}{c|}{\textbf{Sketch}} & \multicolumn{2}{c|}{\textbf{Rosette}} & \multicolumn{2}{c|}{\textbf{Direct}} \\ \cline{2-9} 
		& \multicolumn{1}{c|}{\textit{Result}} & \multicolumn{1}{c|}{\textit{Time}} & \multicolumn{1}{c|}{\textit{Result}} & \multicolumn{1}{c|}{\textit{Time}} & \multicolumn{1}{c|}{\textit{Result}} & \multicolumn{1}{c|}{\textit{Time}} & \multicolumn{1}{c|}{\textit{Result}} & \multicolumn{1}{c|}{\textit{Time}} \\ \hline
		ArchimedesPrinciple & \success & 0.59 & \unsat & 0.33 & \success & 0.37 & \success & 0.25 \\ \hline
		ClosestCubeRoot & \success & 15.81 & \wa & 0.46 & \tl & >43200 & \tl & >43200 \\ \hline
		ClosestSquareRoot & \success & 3.61 & \fail & 969.17 & \success & 0.15 & \tl & >43200\\ \hline
		ConsecutiveCubes & \success & 56.06 & \fail & 778.04 & \tl & >43200 & \tl & >43200 \\ \hline
		PolynomialApproximation & \success & 0.24 & \wa & 0.45 & \success & 0.12 & \success & 0.14 \\ \hline
		PositivityEnforcement & \success & 0.19 & \success & 0.30 & \tl & >43200 & \success & 0.09\\ \hline
		SlidingBody & \success & 0.62 & \fail & 0.47 & \fail & < 0.01 & \success & 0.13 \\ \hline
		SquareCompletion & \success & 0.19 & \wa & 0.29 & \success & 0.10 & \success & 0.09 \\ \hline
		SquareRootApproximation & \success &  20.53 & \wa & 0.39 & \tl & >43200 & \tl & >43200 \\ \hline
		Cohendiv & \success & 26.11 & \wa & 6.297 & \success & 0.11 & \success & 0.29 \\ \hline
		MannadivCarre & \success & 4.07 & \tl & >43200 & \success & 0.25 & \success & 316.5 \\ \hline
		MannadivCube & \success & 4.03 & \wa & 0.33 & \tl & >43200 & \success & 92.9 \\ \hline
		MannadivInd & \success & 5.25 & \fail & 0.408 & \success & 0.11 & \success & 0.18 \\ \hline
		Wensley & \success & 52.75 & \wa & 0.44 & \tl & >43200 & \tl & >43200 \\ \hline
		Euclidex2 & \success & 29.68 & \tl & >43200 & \success & 0.34 & \tl & >43200 \\ \hline
		LCM1 & \success & 10.99 & \tl & >43200 & \tl & >43200 & \success &  87.29\\ \hline
		Fermat1 & \success & 17.95 & \tl & >43200 & \success & 0.17 & \success & 117.14 \\ \hline
		Fermat2 & \success & 9.30 & \tl & >43200 & \tl & >43200 & \success & 0.31 \\ \hline
		Petter3 & \success & 2.51 & \fail & 0.493 & \tl & >43200 & \success & 124.51 \\ \hline
		Petter5 & \success & 1.84 & \fail & 2.72 & \tl & >43200 & \success & 117.33 \\ \hline
	\end{tabular}
\caption{Comparison of our approach with previous synthesis methods. All times are in seconds.}
\label{tab:comp}
\end{table}

Table~\ref{tab:comp} provides a comparison between our approach and the state-of-the-art synthesis tools Sketch and Rosette, as well as a direct encoding to SMT that was passed to Z3. The rest of this document contains the details of programs output by these tools. All experiments were run on the same machine as in our original submission with a time limit of 12 hours per instance. Note that Sketch is not sound for real variables and wrong answers are expected. However, in most cases Sketch's output is just setting everything to 0, which is wrong even for integer variables.

\end{landscape}

\newpage
\section{More Examples from Section~\ref{sec:exper}} \label{app:ex}
\begin{figure}[H]
	\begin{tabular}{ccc}
		\begin{lstlisting}[language=C,numbers=none,mathescape=true]
@prog: PositivityEnforcement
@real: $x,n$;
@pre: $x \le 100$;
$x$ = $n^2 -5 \cdot n$;
$n$ = $\Hole{5 \cdot n + 1}$;
$x$ = $x + n$;
@post: $x \geq 0$;    
		\end{lstlisting}
		
		&
		
		\begin{lstlisting}[language=C,numbers=none,mathescape=true]
@prog: SquareCompletion
@real: $x,n$;
@pre: $x \le 100$;
$x$ = $n^2 - 5\cdot n$;
$n$ = $\Hole{\frac{25}{4}}$;
$x$ = $x + n$;
@post: $x \geq 0$;    
		\end{lstlisting}
		
		&
		
		\begin{lstlisting}[language=C,numbers=none,mathescape=true]
@prog: PolynomialApproximation	
@real $t,x,e,g_1,g_2,n$;
@macro $f_1 = x^3$;
@macro $f_2 = x^3+x$;
@macro $f_3 = \Hole{5+x+x^3}$;
@pre: $1 \geq 0$;
$t = 0$;
$e = 5$;
if ($x \le t$) 
{
	$g_1 = f_3 - f_1$;
	$n = g_1$;
}
else 
{
	$g_2 = f_3 - f_2$;
	$n = g_2$;
}
@post: $n \le e$;
			
		\end{lstlisting}

	\end{tabular}
\end{figure} 

\begin{figure}[H]
	\begin{tabular}{ccc}
		\begin{lstlisting}[language=C,numbers=none,mathescape=true]
@prog: SlidingBody
@real: $move,e,t,g,nus,nuk,sin,cos,l$;
@pre: $t \geq 0 \wedge nus \geq nuk$;
$g = 9.8$;
$nuk = 0.2$;
$nus = 0.3$;
$sin = 0.5$;
$cos = 0.86$;
$l = 10$;
if ($sin \geq nus \cdot cos$) 
{
	$move = 1$;
	$t = \Hole{\frac{21}{2}-8.0056060865 \cdot move}$;
	$e = t \cdot t \cdot g \cdot sin - g \cdot nuk \cdot cos - 2 \cdot l$;
}
else 
{
	$move = 0$;
	$e = 0$;
	$t = \Hole{\frac{21}{2}-8.0056060865 \cdot move}$;
}
@post: $e = 0 \wedge t \geq 0;$
		\end{lstlisting}
		
		&
		
		\begin{lstlisting}[language=C,numbers=none,mathescape=true]
@prog: ArchimedesPrinciple
@real: $fl,m,l,b,h,\rho,e,x$;
@pre: $x \geq 0 \wedge m \geq 0$;
$l = 10$;
$b = 5$;
$h = 2$;
$\rho = 2$;
if ( $m \leq \rho \cdot l \cdot  b \cdot h$) 
{
	$fl = 1$;
	$x = \Hole{\frac{fl}{200}}$;
	$e = m - \rho \cdot l \cdot b \cdot h \cdot x$;
}
else 
{
	$e = 0$;
	$fl = 0$;
	$x = \Hole{\frac{fl}{200}}$;
}
@post: $e = 0 \wedge 0 \le x \le 1$;
		\end{lstlisting}
	\end{tabular}
\end{figure}

\newpage
\begin{figure}[H]
	\begin{tabular}{ccc}
		\begin{lstlisting}[language=C,numbers=none,mathescape=true]
@prog: Cohendiv
@real: $x,y,q,r,dd$;
@pre: $x \geq 0 \wedge y \geq 1$;
$ q = 0$;
$ r = x$;
@invariant: $x = q \cdot y + r \wedge r \geq 0$  
while ($r \geq y$)
{
	$d = 1$;
	$dd = y$;
	@invariant: $dd = y \cdot d \wedge x = q\cdot y + r $
	$\wedge r \geq 0 \wedge r \geq y \cdot d$
	while($r \geq 2 \cdot dd$)
	{
		$d = 2 \cdot d$;
		$dd = \Hole{2 \cdot dd}$;
	}
	$r = r - dd$;
	$q = \Hole{q + d}$;
}
@post: $1 \geq 0$;    
		\end{lstlisting}
		
		
		
		&
		
		\begin{lstlisting}[language=C,numbers=none,mathescape=true]
@prog: MannadivCarre	
@real $n,y,x,t$;
@pre: $n \geq 0$;
$y = n$;
$x = 0$;
$t = 0$;
@invariant: $x^2 + 2\cdot t + y - n = 0$ 
while($y\cdot (y-1) \geq 0$)
{
	if($t = x$)
	{
		$y = \Hole{y - 1}$;
		$t = 0$;
		$x = x + 1$;
	}
	else
	{
		$y = y - 2$;
		$t = t + 1$;
	}
}
@post: $1 \geq 0$;
\end{lstlisting}

	\end{tabular}
\end{figure} 

\begin{figure}[H]
	\begin{tabular}{ccc}
		\begin{lstlisting}[language=C,numbers=none,mathescape=true]
@prog: MannadivCube	
@real $n,y,x,t$;
@pre: $n \geq 0$;
$y = n$;
$x = 0$;
$t = 0$;
@invariant: $x^3 + 3\cdot t + y - n = 0$ 
while($y \geq 0$)
{
	if($t = x*x$)
	{
		$y = \Hole{y - 3\cdot x - 1}$;
		$t = 0$;
		$x = x + 1$;
	}
	else
	{
		$y = y - 3$;
		$t = t + 1$;
	}
}
@post: $1 \geq 0$;
		\end{lstlisting}
		
		&
		
		\begin{lstlisting}[language=C,numbers=none,mathescape=true]
@prog: MannadivInd	
@real $x1,x2,y1,y2,y3$;
@pre: $x1 \geq 0 \wedge x2 \geq 0$;
$y1 = 0$;
$y2 = 0$;
$y3 = x1$;
@invariant: $y1\cdot x2  + y2 + y3 = x1$ 
while($y3 \geq 0$)
{
	if($y2 + 1 = x2$)
	{
		$y1 = y1 + 1$;
		$y2 = \Hole{0}$;
		$y3 = y3 - 1$;
	}
	else
	{
		$y2 = \Hole{y2 + 1}$;
		$y3 = y3 - 1$;
	}
}
@post: $1 \geq 0$;
		\end{lstlisting}
	\end{tabular}
\end{figure}

\newpage
\begin{figure}[H]
	\begin{tabular}{ccc}
		\begin{lstlisting}[language=C,numbers=none,mathescape=true]
@prog: Wensley
@real: $P,Q,E,a,b,d,y$;
@pre: $Q \geq P \wedge P \geq 0 \wedge E \geq 0$;
$ a = 0$;
$ b = Q/2$;
$ d = 1$;
$ y = 0$;
@invariant: $ a = Q\cdot y \wedge b = Q\cdot d/2$
$\wedge P - d\cdot Q \leq y \cdot Q \wedge y\cdot Q \leq P \wedge Q \geq 0$   
while ($d \geq E$)
{
	if($P \leq a + b$)
	{
		$b = \Hole{b/2}$;
		$d = d/2$;
	}
	else
	{
		$a = a + b$;
		$y = \Hole{y + d/2}$;
		$ b = b/2$;
		$ d = d/2$;
	}
}
@post: $P \geq y \cdot Q \wedge y\cdot Q \geq P - E \cdot Q$;    
		\end{lstlisting}
		
		
		
		&
		
		\begin{lstlisting}[language=C,numbers=none,mathescape=true]
@prog: Euclidex2	
@real $a,b,p,q,r,s,x,y$;
@pre: $x \geq 0 \wedge y \geq 0$;
$a = x$;
$b = y$;
$p = 1$;
$q = 0$;
$r = 0$;
$s = 1$;
@invariant: $a = y \cdot r + x \cdot p \wedge b = x \cdot q + y \cdot s$
while($1 \geq 0$)
{
	if($a \geq b$)
	{
		$a = \Hole{a-b}$;
		$p = \Hole{p-q}$;
		$r = r - s$;
	}
	else
	{
		$b = \Hole{b-a}$;
		$q = q - p$;
		$s = \Hole{s-r}$;
	}
}
@post: $1 \geq 0$;
\end{lstlisting}

	\end{tabular}
\end{figure} 

\begin{figure}[H]
	\begin{tabular}{ccc}
		\begin{lstlisting}[language=C,numbers=none,mathescape=true]
@prog: LCM1	
@real $a,b,x,y,u,v$;
@pre: $a \geq 0 \wedge b \geq 0$;
$x = a$;
$y = b$;
$u = b$;
$v = 0$;
@invariant: $x\cdot u + y \cdot v = a\cdot b$ 
while($1 \geq 0$)
{
	@invariant: $x\cdot u + y \cdot v = a\cdot b$
	while($1 \geq 0$)
	{
		$x = \Hole{x-y}$;
		$v = v + u$;
	}
	@invariant: $x\cdot u + y \cdot v = a\cdot b$
	while($1 \geq 0$)
	{
		$y = y - x$;
		$u = \Hole{u+v}$;
	}
}
@post: $1 \geq 0$;
\end{lstlisting}
		
		&
		
		\begin{lstlisting}[language=C,numbers=none,mathescape=true]
@prog: Fermat1
@real $N,R,u,v,r$;
@pre: $N \geq 1 \wedge (R-1)^2  \leq N-1 \wedge N \leq R^2$;
$u = 2\cdot R + 1$;
$v = 1$;
$r = R^2 - N$;
@invariant: $N\geq 1 \wedge 4 \cdot (N + r) = u^2- v^2 - 2 \cdot u + 2 \cdot v$
while($1\geq 0$)
{
		@invariant: $N\geq 1 \wedge 4 \cdot (N + r) = u^2- v^2 - 2 \cdot u + 2 \cdot v$
		while($1 \geq 0$)
		{
			$r = r - v$;
			$v = \Hole{v + 2}$;
		}
		@invariant: $N\geq 1 \wedge 4 \cdot (N + r) = u^2- v^2 - 2 \cdot u + 2 \cdot v$
		while($1 \geq 0$)
		{
			$r = \Hole{r+u}$;
			$u = u + 2$;
		}
}
@post: $1 \geq 0$;
		\end{lstlisting}
	\end{tabular}
\end{figure}

\newpage
\begin{figure}[H]
	\begin{tabular}{ccc}
		\begin{lstlisting}[language=C,numbers=none,mathescape=true]
@prog: Fermat2 
@real $N,R,u,v,r$;
@pre: $N \geq 1 \wedge (R-1)^2  \leq N-1 \wedge N \leq R^2$;
$u = 2\cdot R + 1$;
$v = 1$;
$r = R^2 - N$;
@invariant: $N\geq 1 \wedge 4 \cdot (N + r) = u^2- v^2 - 2 \cdot u + 2 \cdot v$
while($r\geq 1$)
{
		if($r \geq 0$)
		{
			$r = r - v$;
			$v = v + 2$;
		}
		else
		{
			$r = \Hole{r+u}$;
			$u = u + 2$;
		}
}
@post: $1 \geq 0$; 
		\end{lstlisting}
		
		
		
		&
		
		\begin{lstlisting}[language=C,numbers=none,mathescape=true]
@prog: Petter3	
@real $n,k,N,s$;
@pre: $N \geq 0$;
$s = 0$;
$n = 0$;
@invariant: $s = 0.25 \cdot n^2 \cdot (n+1)^2 \wedge n \leq N + 1$
while($n \leq N$)
{
	$s = s + \Hole{nˆ3}$;
	$n = n + 1$;	
}
@post: $1 \geq 0$;
\end{lstlisting}

	\end{tabular}
\end{figure} 

\begin{figure}[H]
	\begin{tabular}{ccc}
		\begin{lstlisting}[language=C,numbers=none,mathescape=true]
@prog: Petter5	
@real $n,k,N,s$;
@pre: $N \geq 0$;
$s = 0$;
$n = 0$;
@invariant: $s = n^6 \wedge n \leq N + 1$
while($n \leq N$)
{
	$s = s + \Hole{6 \cdot n^5 + 15 \cdot n^4 + 20 \cdot n^3 + 15 \cdot n^2 + 6 \cdot n + 1}$;
	$n = n + 1$;	
}
@post: $1 \geq 0$;
\end{lstlisting}
	\end{tabular}
\end{figure}


\newpage
\section*{Synthesis Results Using Sketch}

\begin{figure}[h]
\begin{tabular}{c}
\begin{lstlisting}[language=C,numbers=none,mathescape=true]
	@prog: ClosestCubeRoot
	@real: $a,x,s,r$;
	@pre: $a \ge 0$;
	$x = a$;
	$r = 1$;
	s = 3.25;
	@invariant: $x \ge 0 \wedge -12 \cdot r^2 + 4 \cdot s = 1 \wedge 4 \cdot r^3 - 6 \cdot r^2 + 3 \cdot r + 4 \cdot x - 4 \cdot a = 1$ 
	while ($x-s \ge 0$)
	{
		$x = \Hole{0}$;
		$s = \Hole{0}$;
		$r = r + 1$;
	}
	@post: $4 \cdot r^3 + 6 \cdot r^2 + 3 \cdot r \geq 4 \cdot a \geq 4 \cdot r^3 - 6 \cdot r^2 + 3 \cdot r - 1$
\end{lstlisting}

	\end{tabular}
\end{figure}

\begin{figure}
	\begin{tabular}{ccc}
		\begin{lstlisting}[language=C,numbers=none,mathescape=true]
@prog: ClosestSquareRoot
@real: $a,x,r$;
@pre: $a \ge 1$;
$x = 0.5 \cdot a$;
$r = 0$;
@invariant: $a = \Hole{\errorRejected} ~\wedge~ x \ge 0$
while ($x \ge r$) 
{
	$x = \Hole{\errorRejected}$;
	$r = r + 1$;
}
@post: $r^2-r \ge a-2 \cdot r ~\wedge~ r^2 - r \le a$
		\end{lstlisting}
		&
		
\begin{lstlisting}[language=C,numbers=none,mathescape=true]
@prog: SquareRootFloor
@real: $a,su,t,n$;
@pre: $n \ge 0$;
$a = 0$;
$su = 1$;
$t = 1$;
@invariant: $a^2 \le n \wedge t = 2 \cdot a + 1 \wedge su = (a+1)^2$
while ($su \le n$)  
{
	$a = a + 1$;
	$t = \Hole{12+8\cdot t}$;
	$su = \Hole{2\cdot su + 15 \cdot t +  8}$;
}
@post($a^2 \leq n$);
\end{lstlisting}
	\end{tabular}
\end{figure}

\begin{figure}
	\begin{tabular}{cc}
		\begin{lstlisting}[language=C,numbers=none,mathescape=true]
@prog: SquareRootApproximation
@real: $a,err,r,q,p$;
@pre: $a \ge 1 \wedge err \ge 0$;
$r = a -1$;
$q = 1$;
$p = 0.5$;
@invariant: $p \ge 0 \wedge r \ge 0 \wedge a = q^2 + 2 \cdot r \cdot p$
while ($2 \cdot p \cdot r \ge err$) 
{
	if ($2 \cdot r - 2 \cdot q - p \ge 0$) 
	{
		$r = \Hole{0}$;
		$q = p + \Hole{0}$;
		$p = p / 2$;
	}
	else 
	{
		$r = \Hole{0}$;    
		$p = p \cdot 0.5$;
	}
}
@post:$q^2 \geq a - err \wedge q^2 \leq a$
\end{lstlisting}
&
\begin{lstlisting}[language=C,numbers=none,mathescape=true]
@prog: ConsecutiveCubes
@real: $N,n,x,y,z,s$;
@pre: $1 \ge 0$;
$n = 0$;
$x = 0$;
$y = 1$;
$z = 6$;
$s = 0$;
@invariant: $z = 6 \cdot n \wedge y = 3\cdot n^2+3\cdot n+1 \wedge x=n^3$
while ($n \le N$) 
{
	if ($x = n^3$) 
	{
		$s = s + x$;
	}
	$n = n + 1$;
	$x = \Hole{\errorRejected}$;
	$y = \Hole{\errorRejected}$;
	$z = \Hole{\errorRejected}$;
}
@post: $1 \ge 0$;
\end{lstlisting}
			
	\end{tabular}
\end{figure}

\begin{figure}[H]
	\begin{tabular}{ccc}
		\begin{lstlisting}[language=C,numbers=none,mathescape=true]
@prog: PositivityEnforcement
@real: $x,n$;
@pre: $x \le 100$;
$x$ = $n^2 -5 \cdot n$;
$n$ = $\Hole{23 \cdot n + 23}$;
$x$ = $x + n$;
@post: $x \geq 0$;    
		\end{lstlisting}
		
		&
		
		\begin{lstlisting}[language=C,numbers=none,mathescape=true]
@prog: SquareCompletion
@real: $x,n$;
@pre: $x \le 100$;
$x$ = $n^2 - 5\cdot n$;
$n$ = $\Hole{6}$;
$x$ = $x + n$;
@post: $x \geq 0$;    
		\end{lstlisting}
		
		&
		
		\begin{lstlisting}[language=C,numbers=none,mathescape=true]
@prog: PolynomialApproximation	
@real $t,x,e,g_1,g_2,n$;
@macro $f_1 = x^3$;
@macro $f_2 = x^3+x$;
@macro $f_3 = \Hole{0}$;
@pre: $1 \geq 0$;
$t = 0$;
$e = 5$;
if ($x \le t$) 
{
	$g_1 = f_3 - f_1$;
	$n = g_1$;
}
else 
{
	$g_2 = f_3 - f_2$;
	$n = g_2$;
}
@post: $n \le e$;
			
		\end{lstlisting}

	\end{tabular}
\end{figure} 

\begin{figure}[H]
	\begin{tabular}{ccc}
		\begin{lstlisting}[language=C,numbers=none,mathescape=true]
@prog: SlidingBody
@real: $move,e,t,g,nus,nuk,sin,cos,l$;
@pre: $t \geq 0 \wedge nus \geq nuk$;
$g = 9.8$;
$nuk = 0.2$;
$nus = 0.3$;
$sin = 0.5$;
$cos = 0.86$;
$l = 10$;
if ($sin \geq nus \cdot cos$) 
{
	$move = 1$;
	$t = \Hole{\errorRejected}$;
	$e = t \cdot t \cdot g \cdot sin - g \cdot nuk \cdot cos - 2 \cdot l$;
}
else 
{
	$move = 0$;
	$e = 0$;
	$t = \Hole{\errorRejected}$;
}
@post: $e = 0 \wedge t \geq 0;$
		\end{lstlisting}
		
		&
		
		\begin{lstlisting}[language=C,numbers=none,mathescape=true]
@prog: ArchimedesPrinciple
@real: $fl,m,l,b,h,\rho,e,x$;
@pre: $x \geq 0 \wedge m \geq 0$;
$l = 10$;
$b = 5$;
$h = 2$;
$\rho = 2$;
if ( $m \leq \rho \cdot l \cdot  b \cdot h$) 
{
	$fl = 1$;
	$x = \Hole{\errorUnsat}$;
	$e = m - \rho \cdot l \cdot b \cdot h \cdot x$;
}
else 
{
	$e = 0$;
	$fl = 0$;
	$x = \Hole{\errorUnsat}$;
}
@post: $e = 0 \wedge 0 \le x \le 1$;
		\end{lstlisting}
	\end{tabular}
\end{figure}

\newpage 

\begin{figure}[H]
	\begin{tabular}{ccc}
		\begin{lstlisting}[language=C,numbers=none,mathescape=true]
@prog: Cohendiv
@real: $x,y,q,r,dd$;
@pre: $x \geq 0 \wedge y \geq 1$;
$ q = 0$;
$ r = x$;
@invariant: $x = q \cdot y + r \wedge r \geq 0$  
while ($r \geq y$)
{
	$d = 1$;
	$dd = y$;
	@invariant: $dd = y \cdot d \wedge x = q\cdot y + r $
	$\wedge r \geq 0 \wedge r \geq y \cdot d$
	while($r \geq 2 \cdot dd$)
	{
		$d = 2 \cdot d$;
		$dd = \Hole{27 \cdot dd + 16}$;
	}
	$r = r - dd$;
	$q = \Hole{10 \cdot + 13}$;
}
@post: $1 \geq 0$;    
		\end{lstlisting}
		
		
		
		&
		
		\begin{lstlisting}[language=C,numbers=none,mathescape=true]
@prog: MannadivCarre	
@real $n,y,x,t$;
@pre: $n \geq 0$;
$y = n$;
$x = 0$;
$t = 0$;
@invariant: $x^2 + 2\cdot t + y - n = 0$ 
while($y\cdot (y-1) \geq 0$)
{
	if($t = x$)
	{
		$y = \Hole{\errorRejected}$;
		$t = 0$;
		$x = x + 1$;
	}
	else
	{
		$y = y - 2$;
		$t = t + 1$;
	}
}
@post: $1 \geq 0$;
\end{lstlisting}

	\end{tabular}
\end{figure} 

\begin{figure}[H]
	\begin{tabular}{ccc}
		\begin{lstlisting}[language=C,numbers=none,mathescape=true]
@prog: MannadivCube	
@real $n,y,x,t$;
@pre: $n \geq 0$;
$y = n$;
$x = 0$;
$t = 0$;
@invariant: $x^3 + 3\cdot t + y - n = 0$ 
while($y \geq 0$)
{
	if($t = x*x$)
	{
		$y = \Hole{0}$;
		$t = 0$;
		$x = x + 1$;
	}
	else
	{
		$y = y - 3$;
		$t = t + 1$;
	}
}
@post: $1 \geq 0$;
		\end{lstlisting}
		
		&
		
		\begin{lstlisting}[language=C,numbers=none,mathescape=true]
@prog: MannadivInd	
@real $x1,x2,y1,y2,y3$;
@pre: $x1 \geq 0 \wedge x2 \geq 0$;
$y1 = 0$;
$y2 = 0$;
$y3 = x1$;
@invariant: $y1\cdot x2  + y2 + y3 = x1$ 
while($y3 \geq 0$)
{
	if($y2 + 1 = x2$)
	{
		$y1 = y1 + 1$;
		$y2 = \Hole{\errorRejected}$;
		$y3 = y3 - 1$;
	}
	else
	{
		$y2 = \Hole{\errorRejected}$;
		$y3 = y3 - 1$;
	}
}
@post: $1 \geq 0$;
		\end{lstlisting}
	\end{tabular}
\end{figure}

\newpage
\begin{figure}[H]
	\begin{tabular}{ccc}
		\begin{lstlisting}[language=C,numbers=none,mathescape=true]
@prog: Wensley
@real: $P,Q,E,a,b,d,y$;
@pre: $Q \geq P \wedge P \geq 0 \wedge E \geq 0$;
$ a = 0$;
$ b = Q/2$;
$ d = 1$;
$ y = 0$;
@invariant: $ a = Q\cdot y \wedge b = Q\cdot d/2$
$\wedge P - d\cdot Q \leq y \cdot Q \wedge y\cdot Q \leq P \wedge Q \geq 0$   
while ($d \geq E$)
{
	if($P \leq a + b$)
	{
		$b = \Hole{\errorRejected}$;
		$d = d/2$;
	}
	else
	{
		$a = a + b$;
		$y = \Hole{\errorRejected}$;
		$ b = b/2$;
		$ d = d/2$;
	}
}
@post: $P \geq y \cdot Q \wedge y\cdot Q \geq P - E \cdot Q$;    
		\end{lstlisting}
		
		
		
		&
		
		\begin{lstlisting}[language=C,numbers=none,mathescape=true]
@prog: Euclidex2	
@real $a,b,p,q,r,s,x,y$;
@pre: $x \geq 0 \wedge y \geq 0$;
$a = x$;
$b = y$;
$p = 1$;
$q = 0$;
$r = 0$;
$s = 1$;
@invariant: $a = y \cdot r + x \cdot p \wedge b = x \cdot q + y \cdot s$
while($1 \geq 0$)
{
	if($a \geq b$)
	{
		$a = \Hole{\errorRejected}$;
		$p = \Hole{\errorRejected}$;
		$r = r - s$;
	}
	else
	{
		$b = \Hole{\errorRejected}$;
		$q = q - p$;
		$s = \Hole{\errorRejected}$;
	}
}
@post: $1 \geq 0$;
\end{lstlisting}

	\end{tabular}
\end{figure} 

\begin{figure}[H]
	\begin{tabular}{ccc}
		\begin{lstlisting}[language=C,numbers=none,mathescape=true]
@prog: LCM1	
@real $a,b,x,y,u,v$;
@pre: $a \geq 0 \wedge b \geq 0$;
$x = a$;
$y = b$;
$u = b$;
$v = 0$;
@invariant: $x\cdot u + y \cdot v = a\cdot b$ 
while($1 \geq 0$)
{
	@invariant: $x\cdot u + y \cdot v = a\cdot b$
	while($1 \geq 0$)
	{
		$x = \Hole{\errorRejected}$;
		$v = v + u$;
	}
	@invariant: $x\cdot u + y \cdot v = a\cdot b$
	while($1 \geq 0$)
	{
		$y = y - x$;
		$u = \Hole{\errorRejected}$;
	}
}
@post: $1 \geq 0$;
\end{lstlisting}
		
		&
		
		\begin{lstlisting}[language=C,numbers=none,mathescape=true]
@prog: Fermat1
@real $N,R,u,v,r$;
@pre: $N \geq 1 \wedge (R-1)^2  \leq N-1 \wedge N \leq R^2$;
$u = 2\cdot R + 1$;
$v = 1$;
$r = R^2 - N$;
@invariant: $N\geq 1 \wedge 4 \cdot (N + r) = u^2- v^2 - 2 \cdot u + 2 \cdot v$
while($1\geq 0$)
{
		@invariant: $N\geq 1 \wedge 4 \cdot (N + r) = u^2- v^2 - 2 \cdot u + 2 \cdot v$
		while($1 \geq 0$)
		{
			$r = r - v$;
			$v = \Hole{\errorRejected}$;
		}
		@invariant: $N\geq 1 \wedge 4 \cdot (N + r) = u^2- v^2 - 2 \cdot u + 2 \cdot v$
		while($1 \geq 0$)
		{
			$r = \Hole{\errorRejected}$;
			$u = u + 2$;
		}
}
@post: $1 \geq 0$;
		\end{lstlisting}
	\end{tabular}
\end{figure}

\newpage
\begin{figure}[H]
	\begin{tabular}{ccc}
		\begin{lstlisting}[language=C,numbers=none,mathescape=true]
@prog: Fermat2
@real $N,R,u,v,r$;
@pre: $N \geq 1 \wedge (R-1)^2  \leq N-1 \wedge N \leq R^2$;
$u = 2\cdot R + 1$;
$v = 1$;
$r = R^2 - N$;
@invariant: $N\geq 1 \wedge 4 \cdot (N + r) = u^2- v^2 - 2 \cdot u + 2 \cdot v$
while($r\geq 1$)
{
		if($r \geq 0$)
		{
			$r = r - v$;
			$v = v + 2$;
		}
		else
		{
			$r = \Hole{\errorRejected}$;
			$u = u + 2$;
		}
}
@post: $1 \geq 0$; 
		\end{lstlisting}
		
		
		
		&
		
		\begin{lstlisting}[language=C,numbers=none,mathescape=true]
@prog: Petter3	
@real $n,k,N,s$;
@pre: $N \geq 0$;
$s = 0$;
$n = 0$;
@invariant: $s = 0.25 \cdot n^2 \cdot (n+1)^2 \wedge n \leq N + 1$
while($n \leq N$)
{
	$s = s + \Hole{ç}$;
	$n = n + 1$;	
}
@post: $1 \geq 0$;
\end{lstlisting}

	\end{tabular}
\end{figure} 

\begin{figure}[H]
	\begin{tabular}{ccc}
		\begin{lstlisting}[language=C,numbers=none,mathescape=true]
@prog: Petter5	
@real $n,k,N,s$;
@pre: $N \geq 0$;
$s = 0$;
$n = 0$;
@invariant: $s = n^6 \wedge n \leq N + 1$
while($n \leq N$)
{
	$s = s + \Hole{\errorRejected}$;
	$n = n + 1$;	
}
@post: $1 \geq 0$;
\end{lstlisting}
	\end{tabular}
\end{figure}


\newpage
\section*{Synthesis Results Using Rosette}

\begin{figure}[h]
\begin{tabular}{c}
\begin{lstlisting}[language=C,numbers=none,mathescape=true]
	@prog: ClosestCubeRoot
	@real: $a,x,s,r$;
	@pre: $a \ge 0$;
	$x = a$;
	$r = 1$;
	s = 3.25;
	@invariant: $x \ge 0 \wedge -12 \cdot r^2 + 4 \cdot s = 1 \wedge 4 \cdot r^3 - 6 \cdot r^2 + 3 \cdot r + 4 \cdot x - 4 \cdot a = 1$ 
	while ($x-s \ge 0$)
	{
		$x = \Hole{\errorTimeout}$;
		$s = \Hole{\errorTimeout}$;
		$r = r + 1$;
	}
	@post: $4 \cdot r^3 + 6 \cdot r^2 + 3 \cdot r \geq 4 \cdot a \geq 4 \cdot r^3 - 6 \cdot r^2 + 3 \cdot r - 1$
\end{lstlisting}

	\end{tabular}
\end{figure}

\begin{figure}
	\begin{tabular}{ccc}
		\begin{lstlisting}[language=C,numbers=none,mathescape=true]
@prog: ClosestSquareRoot
@real: $a,x,r$;
@pre: $a \ge 1$;
$x = 0.5 \cdot a$;
$r = 0$;
@invariant: $a = \Hole{2 \cdot x + r^2 - r} ~\wedge~ x \ge 0$
while ($x \ge r$) 
{
	$x = \Hole{x-r}$;
	$r = r + 1$;
}
@post: $r^2-r \ge a-2 \cdot r ~\wedge~ r^2 - r \le a$
		\end{lstlisting}
		&
		
\begin{lstlisting}[language=C,numbers=none,mathescape=true]
@prog: SquareRootFloor
@real: $a,su,t,n$;
@pre: $n \ge 0$;
$a = 0$;
$su = 1$;
$t = 1$;
@invariant: $a^2 \le n \wedge t = 2 \cdot a + 1 \wedge su = (a+1)^2$
while ($su \le n$)  
{
	$a = a + 1$;
	$t = \Hole{2+t}$;
	$su = \Hole{su + t}$;
}
@post((a * a) <= n);
\end{lstlisting}
	\end{tabular}
\end{figure}

\begin{figure}
	\begin{tabular}{cc}
		\begin{lstlisting}[language=C,numbers=none,mathescape=true]
@prog: SquareRootApproximation
@real: $a,err,r,q,p$;
@pre: $a \ge 1 \wedge err \ge 0$;
$r = a -1$;
$q = 1$;
$p = 0.5$;
@invariant: $p \ge 0 \wedge r \ge 0 \wedge a = q^2 + 2 \cdot r \cdot p$
while ($2 \cdot p \cdot r \ge err$) 
{
	if ($2 \cdot r - 2 \cdot q - p \ge 0$) 
	{
		$r = \Hole{\errorTimeout}$;
		$q = p + \Hole{\errorTimeout}$;
		$p = p / 2$;
	}
	else 
	{
		$r = \Hole{\errorTimeout}$;    
		$p = p \cdot 0.5$;
	}
}
@post:$q^2 \geq a - err \wedge q^2 \leq a$
\end{lstlisting}
&
\begin{lstlisting}[language=C,numbers=none,mathescape=true]
@prog: ConsecutiveCubes
@real: $N,n,x,y,z,s$;
@pre: $1 \ge 0$;
$n = 0$;
$x = 0$;
$y = 1$;
$z = 6$;
$s = 0$;
@invariant: $z = 6 \cdot n \wedge y = 3\cdot n^2+3\cdot n+1 \wedge x=n^3$
while ($n \le N$) 
{
	if ($x = n^3$) 
	{
		$s = s + x$;
	}
	$n = n + 1$;
	$x = \Hole{\errorTimeout}$;
	$y = \Hole{\errorTimeout}$;
	$z = \Hole{\errorTimeout}$;
}
@post: $1 \ge 0$;
\end{lstlisting}
			
	\end{tabular}
\end{figure}

\begin{figure}[H]
	\begin{tabular}{ccc}
		\begin{lstlisting}[language=C,numbers=none,mathescape=true]
@prog: PositivityEnforcement
@real: $x,n$;
@pre: $x \le 100$;
$x$ = $n^2 -5 \cdot n$;
$n$ = $\Hole{\errorTimeout}$;
$x$ = $x + n$;
@post: $x \geq 0$;    
		\end{lstlisting}
		
		&
		
		\begin{lstlisting}[language=C,numbers=none,mathescape=true]
@prog: SquareCompletion
@real: $x,n$;
@pre: $x \le 100$;
$x$ = $n^2 - 5\cdot n$;
$n$ = $\Hole{6.25}$;
$x$ = $x + n$;
@post: $x \geq 0$;    
		\end{lstlisting}
		
		&
		
		\begin{lstlisting}[language=C,numbers=none,mathescape=true]
@prog: PolynomialApproximation	
@real $t,x,e,g_1,g_2,n$;
@macro $f_1 = x^3$;
@macro $f_2 = x^3+x$;
@macro $f_3 = \Hole{x^3 - 0.5\cdot x^2 - 0.5 \cdot x + 0.5}$;
@pre: $1 \geq 0$;
$t = 0$;
$e = 5$;
if ($x \le t$) 
{
	$g_1 = f_3 - f_1$;
	$n = g_1$;
}
else 
{
	$g_2 = f_3 - f_2$;
	$n = g_2$;
}
@post: $n \le e$;
			
		\end{lstlisting}

	\end{tabular}
\end{figure} 

\begin{figure}[H]
	\begin{tabular}{ccc}
		\begin{lstlisting}[language=C,numbers=none,mathescape=true]
@prog: SlidingBody
@real: $move,e,t,g,nus,nuk,sin,cos,l$;
@pre: $t \geq 0 \wedge nus \geq nuk$;
$g = 9.8$;
$nuk = 0.2$;
$nus = 0.3$;
$sin = 0.5$;
$cos = 0.86$;
$l = 10$;
if ($sin \geq nus \cdot cos$) 
{
	$move = 1$;
	$t = \Hole{\errorRose}$;
	$e = t \cdot t \cdot g \cdot sin - g \cdot nuk \cdot cos - 2 \cdot l$;
}
else 
{
	$move = 0$;
	$e = 0$;
	$t = \Hole{\errorRose}$;
}
@post: $e = 0 \wedge t \geq 0;$
		\end{lstlisting}
		
		&
		
		\begin{lstlisting}[language=C,numbers=none,mathescape=true]
@prog: ArchimedesPrinciple
@real: $fl,m,l,b,h,\rho,e,x$;
@pre: $x \geq 0 \wedge m \geq 0$;
$l = 10$;
$b = 5$;
$h = 2$;
$\rho = 2$;
if ( $m \leq \rho \cdot l \cdot  b \cdot h$) 
{
	$fl = 1$;
	$x = \Hole{\frac{fl}{200}}$;
	$e = m - \rho \cdot l \cdot b \cdot h \cdot x$;
}
else 
{
	$e = 0$;
	$fl = 0$;
	$x = \Hole{\frac{fl}{200}}$;
}
@post: $e = 0 \wedge 0 \le x \le 1$;
		\end{lstlisting}
	\end{tabular}
\end{figure}

\newpage
\begin{figure}[H]
	\begin{tabular}{ccc}
		\begin{lstlisting}[language=C,numbers=none,mathescape=true]
@prog: Cohendiv
@real: $x,y,q,r,dd$;
@pre: $x \geq 0 \wedge y \geq 1$;
$ q = 0$;
$ r = x$;
@invariant: $x = q \cdot y + r \wedge r \geq 0$  
while ($r \geq y$)
{
	$d = 1$;
	$dd = y$;
	@invariant: $dd = y \cdot d \wedge x = q\cdot y + r $
	$\wedge r \geq 0 \wedge r \geq y \cdot d$
	while($r \geq 2 \cdot dd$)
	{
		$d = 2 \cdot d$;
		$dd = \Hole{2 \cdot dd}$;
	}
	$r = r - dd$;
	$q = \Hole{q + d}$;
}
@post: $1 \geq 0$;    
		\end{lstlisting}
		
		
		
		&
		
		\begin{lstlisting}[language=C,numbers=none,mathescape=true]
@prog: MannadivCarre	
@real $n,y,x,t$;
@pre: $n \geq 0$;
$y = n$;
$x = 0$;
$t = 0$;
@invariant: $x^2 + 2\cdot t + y - n = 0$ 
while($y\cdot (y-1) \geq 0$)
{
	if($t = x$)
	{
		$y = \Hole{y - 1}$;
		$t = 0$;
		$x = x + 1$;
	}
	else
	{
		$y = y - 2$;
		$t = t + 1$;
	}
}
@post: $1 \geq 0$;
\end{lstlisting}

	\end{tabular}
\end{figure} 

\begin{figure}[H]
	\begin{tabular}{ccc}
		\begin{lstlisting}[language=C,numbers=none,mathescape=true]
@prog: MannadivCube	
@real $n,y,x,t$;
@pre: $n \geq 0$;
$y = n$;
$x = 0$;
$t = 0$;
@invariant: $x^3 + 3\cdot t + y - n = 0$ 
while($y \geq 0$)
{
	if($t = x*x$)
	{
		$y = \Hole{\errorRejected}$;
		$t = 0$;
		$x = x + 1$;
	}
	else
	{
		$y = y - 3$;
		$t = t + 1$;
	}
}
@post: $1 \geq 0$;
		\end{lstlisting}
		
		&
		
		\begin{lstlisting}[language=C,numbers=none,mathescape=true]
@prog: MannadivInd	
@real $x1,x2,y1,y2,y3$;
@pre: $x1 \geq 0 \wedge x2 \geq 0$;
$y1 = 0$;
$y2 = 0$;
$y3 = x1$;
@invariant: $y1\cdot x2  + y2 + y3 = x1$ 
while($y3 \geq 0$)
{
	if($y2 + 1 = x2$)
	{
		$y1 = y1 + 1$;
		$y2 = \Hole{0}$;
		$y3 = y3 - 1$;
	}
	else
	{
		$y2 = \Hole{y2 + 1}$;
		$y3 = y3 - 1$;
	}
}
@post: $1 \geq 0$;
		\end{lstlisting}
	\end{tabular}
\end{figure}

\newpage
\begin{figure}[H]
	\begin{tabular}{ccc}
		\begin{lstlisting}[language=C,numbers=none,mathescape=true]
@prog: Wensley
@real: $P,Q,E,a,b,d,y$;
@pre: $Q \geq P \wedge P \geq 0 \wedge E \geq 0$;
$ a = 0$;
$ b = Q/2$;
$ d = 1$;
$ y = 0$;
@invariant: $ a = Q\cdot y \wedge b = Q\cdot d/2$
$\wedge P - d\cdot Q \leq y \cdot Q \wedge y\cdot Q \leq P \wedge Q \geq 0$   
while ($d \geq E$)
{
	if($P \leq a + b$)
	{
		$b = \Hole{\errorRejected}$;
		$d = d/2$;
	}
	else
	{
		$a = a + b$;
		$y = \Hole{\errorRejected}$;
		$ b = b/2$;
		$ d = d/2$;
	}
}
@post: $P \geq y \cdot Q \wedge y\cdot Q \geq P - E \cdot Q$;    
		\end{lstlisting}
		
		
		
		&
		
		\begin{lstlisting}[language=C,numbers=none,mathescape=true]
@prog: Euclidex2	
@real $a,b,p,q,r,s,x,y$;
@pre: $x \geq 0 \wedge y \geq 0$;
$a = x$;
$b = y$;
$p = 1$;
$q = 0$;
$r = 0$;
$s = 1$;
@invariant: $a = y \cdot r + x \cdot p \wedge b = x \cdot q + y \cdot s$
while($1 \geq 0$)
{
	if($a \geq b$)
	{
		$a = \Hole{a-b}$;
		$p = \Hole{p-q}$;
		$r = r - s$;
	}
	else
	{
		$b = \Hole{b-a}$;
		$q = q - p$;
		$s = \Hole{s-r}$;
	}
}
@post: $1 \geq 0$;
\end{lstlisting}

	\end{tabular}
\end{figure} 

\begin{figure}[H]
	\begin{tabular}{ccc}
		\begin{lstlisting}[language=C,numbers=none,mathescape=true]
@prog: LCM1	
@real $a,b,x,y,u,v$;
@pre: $a \geq 0 \wedge b \geq 0$;
$x = a$;
$y = b$;
$u = b$;
$v = 0$;
@invariant: $x\cdot u + y \cdot v = a\cdot b$ 
while($1 \geq 0$)
{
	@invariant: $x\cdot u + y \cdot v = a\cdot b$
	while($1 \geq 0$)
	{
		$x = \Hole{\errorRejected}$;
		$v = v + u$;
	}
	@invariant: $x\cdot u + y \cdot v = a\cdot b$
	while($1 \geq 0$)
	{
		$y = y - x$;
		$u = \Hole{\errorRejected}$;
	}
}
@post: $1 \geq 0$;
\end{lstlisting}
		
		&
		
		\begin{lstlisting}[language=C,numbers=none,mathescape=true]
@prog: Fermat1
@real $N,R,u,v,r$;
@pre: $N \geq 1 \wedge (R-1)^2  \leq N-1 \wedge N \leq R^2$;
$u = 2\cdot R + 1$;
$v = 1$;
$r = R^2 - N$;
@invariant: $N\geq 1 \wedge 4 \cdot (N + r) = u^2- v^2 - 2 \cdot u + 2 \cdot v$
while($1\geq 0$)
{
		@invariant: $N\geq 1 \wedge 4 \cdot (N + r) = u^2- v^2 - 2 \cdot u + 2 \cdot v$
		while($1 \geq 0$)
		{
			$r = r - v$;
			$v = \Hole{v + 2}$;
		}
		@invariant: $N\geq 1 \wedge 4 \cdot (N + r) = u^2- v^2 - 2 \cdot u + 2 \cdot v$
		while($1 \geq 0$)
		{
			$r = \Hole{r+u}$;
			$u = u + 2$;
		}
}
@post: $1 \geq 0$;
		\end{lstlisting}
	\end{tabular}
\end{figure}

\newpage
\begin{figure}[H]
	\begin{tabular}{ccc}
		\begin{lstlisting}[language=C,numbers=none,mathescape=true]
@prog: Fermat2 
@real $N,R,u,v,r$;
@pre: $N \geq 1 \wedge (R-1)^2  \leq N-1 \wedge N \leq R^2$;
$u = 2\cdot R + 1$;
$v = 1$;
$r = R^2 - N$;
@invariant: $N\geq 1 \wedge 4 \cdot (N + r) = u^2- v^2 - 2 \cdot u + 2 \cdot v$
while($r\geq 1$)
{
		if($r \geq 0$)
		{
			$r = r - v$;
			$v = v + 2$;
		}
		else
		{
			$r = \Hole{\errorRejected}$;
			$u = u + 2$;
		}
}
@post: $1 \geq 0$; 
		\end{lstlisting}
		
		
		
		&
		
		\begin{lstlisting}[language=C,numbers=none,mathescape=true]
@prog: Petter3	
@real $n,k,N,s$;
@pre: $N \geq 0$;
$s = 0$;
$n = 0$;
@invariant: $s = 0.25 \cdot n^2 \cdot (n+1)^2 \wedge n \leq N + 1$
while($n \leq N$)
{
	$s = s + \Hole{\errorRejected}$;
	$n = n + 1$;	
}
@post: $1 \geq 0$;
\end{lstlisting}

	\end{tabular}
\end{figure} 

\begin{figure}[H]
	\begin{tabular}{ccc}
		\begin{lstlisting}[language=C,numbers=none,mathescape=true]
@prog: Petter5	
@real $n,k,N,s$;
@pre: $N \geq 0$;
$s = 0$;
$n = 0$;
@invariant: $s = n^6 \wedge n \leq N + 1$
while($n \leq N$)
{
	$s = s + \Hole{\errorRejected}$;
	$n = n + 1$;	
}
@post: $1 \geq 0$;
\end{lstlisting}
	\end{tabular}
\end{figure}


\end{document}